\theoremstyle{plain}
\newtheorem{theorem}{Theorem}
\newtheorem{lemma}[theorem]{Lemma}
\newtheorem{corollary}[theorem]{Corollary}
\newtheorem{proposition}[theorem]{Proposition}
\theoremstyle{definition}
\newtheorem{definition}[theorem]{Definition}
\newtheorem{example}[theorem]{Example}
\theoremstyle{remark}
\newtheorem{remark}[theorem]{Remark}
\DeclareMathOperator{\val}{val}
\DeclareMathOperator{\rep}{rep}
\DeclareMathOperator{\relpos}{relpos}
\title{Behavior of digital sequences through exotic numeration systems}
\author{Julien Leroy
\qquad
Michel Rigo
\qquad
Manon Stipulanti}
\thanks{J. Leroy is an FNRS post-doctoral fellow.}
\thanks{M. Stipulanti is supported by a FRIA grant 1.E030.16.}
\address{Universit\'e de Li\`ege, Institut de math\'ematique, All\'ee de la d\'ecouverte 12 (B37),
4000 Li\`ege, Belgium\newline
J.Leroy@ulg.ac.be, M.Rigo@ulg.ac.be, M.Stipulanti@ulg.ac.be}
\subjclass[2010]{11A63, 11B85, 41A60}
\begin{document}

\maketitle

\begin{abstract}
Many digital functions studied in the literature, e.g., the summatory function of the base-$k$ sum-of-digits function, have a behavior showing some periodic fluctuation. 
Such functions are usually studied using techniques from analytic number theory or linear algebra.
In this paper we develop a method based on exotic numeration systems and we apply it on two examples motivated by the study of generalized Pascal triangles and binomial coefficients of words.
\end{abstract}

\section{Introduction}

Many digital functions, e.g., the sum of the output labels of a finite transducer reading base-$k$ expansion of integers \cite{Heuberger}, have been extensively studied in the literature and exhibit an interesting behavior that usually involves some periodic fluctuation \cite{BR,Delange,Dumas13,Dumas14,GR,GT}. 
For instance, consider the archetypal sum-of-digits function $s_2$ for base-$2$ expansions of integers~\cite{trollope}. 
Its summatory function $$N\mapsto \sum_{j=0}^{N-1}s_2(j)$$ is counting the total number of ones occurring in the base-$2$ expansion of the first $N$ integers, i.e., the sum of the sums of digits of the first $N$ integers. 
Delange \cite{Delange} showed that there exists a continuous nowhere differentiable periodic function $\mathcal{G}$ of period $1$ such that 
\begin{equation}
    \label{eq:delange}
    \frac{1}{N} \sum_{j=0}^{N-1}s_2(j)=\frac{1}{2} \log_2 N+ \mathcal{G}(\log_2 N).
\end{equation}
For an account on this result, see, for instance, \cite[Thm.~3.5.4]{AS03}. 
The function $s_2$ has important structural properties. It readily satisfies $s_2(2n)=s_2(n)$ and $s_2(2n+1)=1+s_2(n)$ meaning that the sequence $(s_2(n))_{n\ge 0}$ is $2$-regular in the sense of Allouche and Shallit \cite{AS99}. A sequence $(u(n))_{n\ge 0}$ is {\em $k$-regular} if the $\mathbb{Z}$-module generated by its $k$-kernel, i.e., the set of subsequences $\{(u(k^jn+r))_{n\ge 0}\mid j\ge 0, r<k^j\}$, is finitely generated. This is equivalent to the fact that the sequence $(u(n))_{n\ge 0}$ admits a {\em linear representation}: there exist an integer $d\ge 1$, a row vector $\mathbf{r}$, a column vector $\mathbf{c}$ and matrices $\Gamma_0,\ldots,\Gamma_{k-1},$ of size $d$ such that, if the base-$k$ expansion of $n$ is $w_j\cdots w_0$, then $$u(n)=\mathbf{r}\, \Gamma_{w_0}\Gamma_{w_1}\cdots\Gamma_{w_j}\, \mathbf{c}.$$
For instance, the sequence $(s_2(n))_{n\ge 0}$ admits the linear representation
$$\mathbf{r}=
\begin{pmatrix}
    1&0\\
\end{pmatrix},\ 
\Gamma_0=\left(
\begin{array}{cc}
 1 & 0 \\
 0 & 1 \\
\end{array}
\right),\ 
\Gamma_1=\left(
\begin{array}{cc}
 1 & 1 \\
 0 & 1 \\
\end{array}
\right),\ 
\mathbf{c}=
\begin{pmatrix}
    0\\1\\
\end{pmatrix}.
$$
Many examples of $k$-regular sequences may be found in \cite{AS99,AS99II}.
Based on linear algebra techniques, Dumas \cite{Dumas13, Dumas14} provides general asymptotic estimates for summatory functions of $k$-regular sequences similar to~\eqref{eq:delange}. 
Similar results are also discussed by Drmota and Grabner \cite[Thm.~9.2.15]{BR} and by Allouche and Shallit \cite{AS03}. 

In this paper, we expose a new method to tackle the behavior of the summatory function $(A(N))_{N\ge 0}$ of a digital sequence $(s(n))_{n\ge 0}$. 
Roughly, the idea is to find two sequences $(r(n))_{n \geq 0}$ and $(t(n))_{n \geq 0}$, each satisfying a linear recurrence relation, such that $A(r(n)) = t(n)$ for all $n$. 
Then, from a recurrence relation satisfied by $(s(n))_{n\ge 0}$, we deduce a recurrence relation for $(A(n))_{n\ge 0}$ in which $(t(n))_{n\ge 0}$ is involved.
This allows us to find relevant representations of $A(n)$ in some exotic numeration system associated with the sequence $(t(n))_{n\ge 0}$. 
The adjective ``exotic'' means that we have a decomposition of particular integers as a linear combination of terms of the sequence $(t(n))_{n\ge 0}$ with possibly unbounded coefficients. 
We present this method on two examples inspired by the study of generalized Pascal triangle and binomial coefficients of words~\cite{LRS1,LRS2} and obtain behaviors similar to~\eqref{eq:delange}. 
The behavior of the first example comes with no surprise as the considered sequence is $2$-regular \cite{LRS2}. 
Nevertheless, the method provides an exact behavior although an error term usually appears with classical techniques. 
Furthermore, our approach also allows us to deal with sequences that do not present any $k$-regular structure, as illustrated by the second example. 

Let us make the examples a bit more precise (definitions and notation will be provided in due time). 
The \emph{binomial coefficient} $\binom{u}{v}$ of two finite words $u$ and $v$ in $\{0,1\}^*$ is defined as the number of times that $v$ occurs as a subsequence of $u$ (meaning as a ``scattered'' subword) \cite[Chap.~6]{Lot}. 
The sequence $(\mathsf{s}(n))_{n\ge 0}$ is defined from base-$2$ expansions by $s(0)=0$ and, for all $n\ge 1$,
\begin{equation}
    \label{eq:defS}
    \mathsf{s}(n):=\#\left\{v\in \rep_2(\mathbb{N}) \mid \binom{\rep_2(n-1)}{v}>0\right\}
\end{equation}
and the sequence $(\mathsf{s}_F(n))_{n\ge 0}$ is defined from Zeckendorf expansions by 
\begin{equation}
    \label{eq:defSF}
    \mathsf{s}_F(n):=\#\left\{v\in \rep_F(\mathbb{N}) \mid \binom{\rep_F(n)}{v}>0\right\},
\end{equation}
where $F$ stands for the Fibonacci numeration system.
We have the following results.

\begin{theorem}\label{thm:AsymptoBase2}
There exists a continuous and periodic function $\mathcal{H}$ of period 1 such that, for all $N\ge 1$, 
$$ \sum_{j=0}^{N} \mathsf{s}(j) =3^{\log_2 N}\ \mathcal{H}(\log_2 N ).$$ 
\end{theorem}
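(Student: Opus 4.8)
The plan is to first understand the combinatorial sequence $(\mathsf{s}(n))_{n\ge 0}$ well enough to produce an explicit recurrence. Since $\mathsf{s}(n)$ counts the distinct base-$2$ words $v$ appearing as scattered subwords of $\rep_2(n-1)$, I would begin by relating this quantity to the number of distinct subwords of a binary word, analyzing how reading one more digit (passing from $\rep_2(n-1)$ to a longer or incremented expansion) changes the set of realizable subsequences. The expected outcome is a pair of relations expressing $\mathsf{s}(2n)$ and $\mathsf{s}(2n+1)$ in terms of $\mathsf{s}(n)$ (and possibly a second auxiliary sequence), confirming the claimed $2$-regularity and giving a concrete linear representation with matrices $\Gamma_0,\Gamma_1$ as in the introduction.

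Next I would pass from $\mathsf{s}$ to its summatory function $A(N)=\sum_{j=0}^{N}\mathsf{s}(j)$ and implement the paper's announced method: find sequences $(r(n))_{n\ge 0}$ and $(t(n))_{n\ge 0}$, each obeying a linear recurrence, with $A(r(n))=t(n)$. The natural guess is to take $r(n)=2^n$ (the ``clean'' arguments at which the summatory function has a closed form) and to determine the induced recurrence for $t(n)=A(2^n)$. From the regular structure of $\mathsf{s}$, splitting the sum $\sum_{j=0}^{2^{n+1}}$ into even and odd indices and applying the two-digit recurrence should yield a relation like $t(n+1)=3\,t(n)+(\text{lower-order correction})$, which explains the factor $3^{\log_2 N}$: the dominant growth rate is $3$ per doubling of $N$. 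I would solve this recurrence to pin down the exact growth and the subdominant terms.

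The heart of the argument is then the exotic numeration step. Writing a general $N$ via its base-$2$ expansion and telescoping the recurrence for $A$ down along the digits, I would express $A(N)$ as a linear combination $\sum_i c_i\, t(e_i)$ of terms of $(t(n))$ with coefficients $c_i$ read off from the digits of $N$. Dividing through by $3^{\log_2 N}$ and setting $x=\log_2 N$, the normalized quantity $A(N)/3^{x}$ should become a function of the fractional part $\{x\}$ together with a convergent series in the high-order digits, which I define to be $\mathcal{H}(x)$. I would then verify that $\mathcal{H}$ is well defined, $1$-periodic (since increasing $x$ by $1$ corresponds to prepending a digit, which the $3$-scaling exactly absorbs), and continuous.

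The main obstacle I anticipate is the continuity of $\mathcal{H}$. Periodicity follows almost formally from the self-similar recurrence, but continuity requires controlling the exotic representation across the ``carry'' boundaries, i.e., showing that the series defining $\mathcal{H}$ converges uniformly and that the two one-sided limits agree at dyadic rationals, where the base-$2$ expansion of $N$ changes abruptly. This is precisely the delicate point in results of Delange type, and I expect to spend most of the effort establishing the requisite uniform estimates on the tail $\sum_{i\ge m} c_i\, t(e_i)/3^{x}$ so that the limiting function is continuous rather than merely bounded.
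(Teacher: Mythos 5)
Your overall architecture does match the paper's: take $r(n)=2^n$, $t(n)=A(2^n)=3^n$, expand $A(N)$ digit-by-digit into a combination of powers of $3$, normalize by $3^{\log_2 N}$, define $\mathcal{H}$ as a limiting function, and fight for continuity. But two genuine gaps remain in the plan as written. The most important one: an exact formula $A(N)=3^{\log_2 N}\mathcal{H}(\log_2 N)$ with $\mathcal{H}$ of period $1$ \emph{forces} the identity $A(2N)=3A(N)$ for every $N$, because $\log_2(2N)$ and $\log_2 N$ have the same fractional part. This identity is not ``almost formal''; it is a specific property of this particular sequence, which the paper states and proves separately (by induction from its key recurrence for $A(2^\ell+r)$). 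It is also the engine that upgrades a limit statement to exactness: the paper defines its function $\Phi$ as a limit of normalized values along special integers $e_n(\alpha)$ (dyadic approximants with a trailing $1$ appended), and to show that the value at an arbitrary $N=2^j(2^k+r)$ equals a sample of $\Phi$ it uses that $A(2^m(2^k+r))/3^{\log_2(2^m(2^k+r))}$ is constant in $m$ (precisely by $A(2N)=3A(N)$), together with the crude bound $\mathsf{s}(m)\le 2m$ to compare $A$ at $2^{n-k+1}(2^k+r)$ and at $2^{n-k+1}(2^k+r)+1$. Your proposal contains no mechanism for passing from ``the normalized quantity converges to a function of the fractional part'' to ``equality holds at every integer,'' and telescoping alone will not produce one.

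The second gap concerns the expansion itself. You expect coefficients ``read off from the digits of $N$,'' but the recurrence for $\mathsf{s}$ has a reflection case ($\mathsf{s}(2^\ell+r)=\mathsf{s}(2^{\ell+1}-r+1)$ for $r$ in the upper half of the range), so iterating it produces coefficients that are not digits: they can be negative and grow exponentially (the paper proves $|a_i|\le 10\cdot 2^i$, and the series $\sum_i a_i/3^i$ converges only because $2<3$; this estimate is itself a nontrivial lemma). Moreover, the stabilization of leading coefficients under refinement of the expansion — which is what makes the limiting function well defined at all — \emph{fails} when the appended suffix consists only of zeros: the paper exhibits $N=448,449$, whose base-$2$ expansions share a prefix of length $8$ yet whose decompositions already differ in the third coefficient. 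This is exactly why the trailing $1$ is appended in the definition of $e_n(\alpha)$, and why dyadic points need a separate one-sided argument in the continuity proof. You correctly anticipate that continuity at dyadic rationals is the delicate point, but without a stabilization lemma (with the correct hypothesis excluding all-zero suffixes) and an explicit growth bound on the coefficients, neither the well-definedness nor the continuity of $\mathcal{H}$ can get off the ground.
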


\begin{theorem}\label{thm:AsymptoFib}
Let $\beta$ be the dominant root of $X^3-2X^2-X+1$. 
There exists a continuous and periodic function $\mathcal{G}$ of period 1 such that, for $N \geq 3$,
$$
	\sum_{j=0}^N \mathsf{s}_F(j)  = c\, \beta^{\log_F N} \mathcal{G}(\log_F N) + o(\beta^{\lfloor \log_F N \rfloor}).
$$
\end{theorem}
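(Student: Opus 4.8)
The plan is to follow the exotic-numeration-system method described in the introduction, adapting to the Zeckendorf setting the strategy already employed for Theorem~\ref{thm:AsymptoBase2}. Throughout, write $A_F(N) = \sum_{j=0}^N \mathsf{s}_F(j)$ for the summatory function. First I would analyze how the set of valid Fibonacci representations occurring as scattered subwords of $\rep_F(n)$ evolves as $n$ increases. Exploiting the combinatorial rigidity of Zeckendorf expansions (no two consecutive $1$s), the goal is to establish a recurrence for $\mathsf{s}_F$ governed by the leading blocks of $\rep_F(n)$, obtained by partitioning the integers of an interval $[0,N]$ according to the most significant digits of their Zeckendorf representation.

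Summing that recurrence then yields a recurrence for $A_F(N)$. Following the method, I would identify a suitable sequence of breakpoints $(r(n))_n$ — the natural candidates being the integers whose Zeckendorf representation is a maximal block of the form dictated by the recurrence, such as $r(n)=F_n-1$ — together with the sequence $t(n) = A_F(r(n))$. The key claim at this stage is that $(t(n))_n$ satisfies the linear recurrence $t(n) = 2\,t(n-1) + t(n-2) - t(n-3)$, whose characteristic polynomial is exactly $X^3 - 2X^2 - X + 1$; this is what singles out $\beta$ as the exponential growth rate.

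Writing $\rep_F(N) = c_k \cdots c_0$, I would then iterate the summatory recurrence along the positions carrying a digit $1$, expressing $A_F(N)$ as a linear combination $\sum_i \lambda_i\, t(m_i)$ of terms of $(t(n))_n$. The coefficients $\lambda_i$ are in general unbounded — this is precisely the \emph{exotic} feature distinguishing this example from the $F$-regular situation, and controlling them is the heart of the argument. Since $t(n) = c\,\beta^n + o(\beta^n)$, the subdominant roots of the cubic feeding the error term, substitution gives $A_F(N) = \beta^{\,k}\,\Phi(N) + o(\beta^{\,k})$, where $\Phi(N)$ depends only on the digit pattern of $\rep_F(N)$. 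Writing $\log_F N = \lfloor \log_F N\rfloor + \{\log_F N\}$, the fractional part $\{\log_F N\}$ encodes this pattern, so one defines $\mathcal{G}$ through the limit of the normalized quantity $\beta^{-\lfloor \log_F N\rfloor} A_F(N)$ taken over $N$ with $\{\log_F N\} \to x$.

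The remaining and most delicate task is to show that this limit exists, that $\mathcal{G}$ extends to a continuous $1$-periodic function, and that the discarded contributions are genuinely $o(\beta^{\lfloor \log_F N\rfloor})$. I expect the main obstacle to arise here, in combination with the unboundedness of the coefficients $\lambda_i$: unlike the bounded matrix products available for $k$-regular sequences, the growing $\lambda_i$ force careful estimates balancing their growth against the rate $\beta$ in order to obtain uniform convergence, and hence continuity of $\mathcal{G}$. Establishing $1$-periodicity — the compatibility of the defining limit under the shift of the Zeckendorf representation by one position, which scales $N$ by the golden ratio $\varphi$ — is the second point requiring care.
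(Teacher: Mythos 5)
Your proposal is correct and follows essentially the same route as the paper: your breakpoint sequence $t(n)=A_F(F(n)-1)$ is exactly the paper's sequence $(B(n))_{n\ge 0}$ with recurrence $B(n+3)=2B(n+2)+B(n+1)-B(n)$ (Proposition~\ref{pro:Bdom}), your expansion $\sum_i \lambda_i\, t(m_i)$ with unbounded coefficients is the paper's $B$-decomposition obtained by iterating Lemma~\ref{lem:recurrenceFib}, and your $\mathcal{G}$ is the paper's $\Psi$, defined as a uniform limit over Zeckendorf digit patterns. The coefficient control you single out as the heart of the argument is handled in the paper by the estimate $|b_i(n)|\le 6\cdot 2^i$ (Lemma~\ref{lem:estimateFib}) played against $\beta\approx 2.247>2$, which is precisely the growth-versus-$\beta$ balance you anticipate.
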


In the last section of the paper, we present some conjectures in a more general context.  

\section{Summatory function of a $2$-regular sequence using particular $3$-decompositions}\label{sec:base2et3}

This section deals with the first example. 
For $n \in \mathbb{N}$, we let $\rep_2(n)$ denote the usual base-$2$ expansion of $n$. 
We set $\rep_2(0)=\varepsilon$ and get $\rep_2(\mathbb{N})=1\{0,1\}^*\cup \{\varepsilon\}$. 
We will consider the summatory function $$A(N):=\sum_{j=0}^N \mathsf{s}(j)$$ of the sequence $(\mathsf{s}(n))_{n\ge 0}$. 
The first few terms of $(A(N))_{N\ge 0}$ are
$$
0, 1, 3, 6, 9, 13, 18, 23, 27, 32, 39, 47, 54, 61, 69, 76, 81, 87, 96, 
107,117,\ldots
$$ 
The quantity $A(N)$ can be thought of as the total number of base-$2$ expansions occurring as subwords in the base-$2$ expansion of integers less than or equal to $N$ (the same subword is counted $k$ times if it occurs in the base-$2$ expansion of $k$ distinct integers). 
The sequence $(\mathsf{s}(n))_{n\ge 0}$ being 2-regular~\cite{LRS2}, asymptotic estimates of $(A(N))_{N\ge 0}$ could be deduced from~\cite{Dumas13}. 
However, as already mentioned, such estimates contain an error term. Applying our method, we get an exact formula for $A(N)$ given by Theorem~\ref{thm:AsymptoBase2}. 
To derive this result, we make an extensive use of a particular decomposition of $A(N)$ based on powers of $3$ that we call $3$-decomposition. 
These occurrences of powers of 3 come from the following lemma.

\begin{lemma}[{\cite[Lemma 9]{LRS1}}]
\label{lem:2^n 3^n}
For all $n \in \mathbb{N}$, we have $A(2^n)=3^n$.
\end{lemma}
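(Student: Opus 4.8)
The plan is to evaluate $A(2^n)$ by a direct combinatorial count and reduce it to geometric series that collapse to $3^n$. Since $\mathsf{s}(0)=0$, reindexing with $m=j-1$ gives
\[
A(2^n)=\sum_{m=0}^{2^n-1}\#\left\{v\in\rep_2(\mathbb N)\mid \tbinom{\rep_2(m)}{v}>0\right\},
\]
so $A(2^n)$ counts the pairs $(m,v)$ with $0\le m<2^n$ and $v$ a valid base-$2$ representation occurring as a scattered subword of $\rep_2(m)$. First I would note that padding $\rep_2(m)$ with leading zeros up to length $n$ affects neither the empty subword nor any subword beginning with $1$; hence I may replace each $\rep_2(m)$ by its length-$n$ zero-padding and let $w$ run over all of $\{0,1\}^n$, giving $A(2^n)=\sum_{w\in\{0,1\}^n}c(w)$ with $c(w):=\#\{v\in\rep_2(\mathbb N)\mid v\preceq w\}$, where $v\preceq w$ abbreviates $\binom{w}{v}>0$.

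Next I would separate the empty representation, which lies in every $c(w)$ and so contributes $2^n$, from the representations starting with $1$. The key reduction is that $1u\preceq w$ holds if and only if $u$ is a scattered subword of the suffix $w^{(1)}$ of $w$ after its first $1$, the leftmost $1$ being the most permissive anchor. Writing $D(x)$ for the number of \emph{all} distinct scattered subwords of $x$ (the empty word included), this gives $c(w)=1+D(w^{(1)})$; partitioning $\{0,1\}^n$ by the position of the first $1$ then yields
\[
A(2^n)=2^n+\sum_{k=0}^{n-1}E_k,\qquad\text{where } E_k:=\sum_{w\in\{0,1\}^k}D(w).
\]

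The crux is the identity $E_k=2\cdot 3^k-2^k$. I would obtain it from the standard recurrence for distinct subsequences, $D(xa)=2D(x)-D(y)$, where $y$ is the prefix of $x$ before the last occurrence of the letter $a$ (the last term being dropped when $a$ is absent from $x$). Summing over all $x\in\{0,1\}^{k-1}$ and both letters $a$, and grouping the correction terms by the position of that last occurrence, gives $E_k=4E_{k-1}-2\sum_{j=0}^{k-2}E_j$; taking first differences turns this into $E_k=5E_{k-1}-6E_{k-2}$, whose characteristic roots are $2$ and $3$, so from $E_0=1$ and $E_1=4$ one gets $E_k=2\cdot 3^k-2^k$. (Equivalently, encoding each pair $(w,v)$ by the leftmost embedding of $v$ in $w$ expresses $\sum_k E_k x^k$ as $(1-x)/\bigl((1-2x)(1-3x)\bigr)$, with the same conclusion.) Substituting back and summing the two geometric series gives $A(2^n)=2^n+(3^n-1)-(2^n-1)=3^n$.

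The main obstacle is precisely this sub-lemma, and its delicacy is that $D(w)$ counts \emph{distinct} subwords: one must canonicalize the embeddings---via the last occurrence in the recurrence, or the leftmost embedding in the generating-function argument---to avoid overcounting, all while keeping the leading-zero bookkeeping from the first step consistent. Once distinctness is handled correctly, everything else is summation of geometric series.
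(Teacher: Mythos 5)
Your proof is correct, but it is necessarily a different route from the paper's: the paper does not prove this lemma at all, it imports it by citation from \cite[Lemma 9]{LRS1}, and then uses it (together with Proposition~\ref{pro:rec}) to establish Lemma~\ref{lem:recurrence} and Corollary~\ref{cor:32n}. What you provide is a self-contained combinatorial argument: reduce $A(2^n)$ to counting pairs $(w,v)$ with $w\in\{0,1\}^n$ after zero-padding, anchor every nonempty valid representation $1u$ at the \emph{first} $1$ of $w$, and then invoke the classical distinct-subsequence recurrence $D(xa)=2D(x)-D(y)$ to evaluate $E_k=\sum_{w\in\{0,1\}^k}D(w)$. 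I checked the delicate steps: the padding argument is sound because every element of $\rep_2(\mathbb{N})$ is either $\varepsilon$ or starts with $1$; the first-$1$ anchoring is the standard leftmost-embedding argument; the grouped recurrence $E_k=4E_{k-1}-2\sum_{i=0}^{k-2}E_i$ is right (over the binary alphabet the suffix after the last occurrence of $a$ is forced to be a block of $\bar a$'s, which is what makes the grouping by position yield exactly $E_{j-1}$ per letter --- worth saying explicitly, since this step would change over larger alphabets); the difference trick gives $E_k=5E_{k-1}-6E_{k-2}$, and the initial values $E_0=1$, $E_1=4$ (also $E_2=14$, $E_3=46$) confirm $E_k=2\cdot 3^k-2^k$, whence $A(2^n)=2^n+(3^n-1)-(2^n-1)=3^n$, including the edge case $n=0$. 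What your approach buys: a proof independent of both the external reference and of Proposition~\ref{pro:rec} (itself imported from \cite{LRS2}), so it could genuinely replace the citation without creating circularity --- note that deriving the lemma inside the paper from Lemma~\ref{lem:recurrence} instead would be circular, since that lemma's proof already uses $A(2^\ell)=3^\ell$. What the citation buys the authors is, of course, brevity, and in \cite{LRS1} the result sits inside a broader structural study of the generalized Pascal triangle rather than being proved ad hoc.
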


\begin{remark}
The sequence $(\mathsf{s}(n))_{n\ge 0}$ also appears as the sequence {\tt A007306} of the denominators occurring in the Farey tree (left subtree of the full Stern--Brocot tree) which contains every (reduced) positive rational less than $1$ exactly once. 
Note that we can also relate $(\mathsf{s}(n))_{n\ge 0}$ to Simon's congruence where two finite words are equivalent if they share exactly the same set of subwords \cite{Simon}. 
\end{remark}

For the sake of presentation, we introduce the \emph{relative position} $\relpos_2(x)$ of a positive real number $x$ inside an interval of the form $[2^n,2^{n+1})$, i.e.,  
$$
\relpos_2(x):=\frac{x-2^{\lfloor \log_2 x\rfloor}}{2^{\lfloor \log_2 x\rfloor}}=2^{ \{ \log_2 x \} } - 1\in[0,1),
$$
where $\{x \} = x - \lfloor x \rfloor$ denotes the fractional part of any real number $x$. 
In Figure~\ref{fig:log}, the map $\lfloor \log_2 N \rfloor+\relpos_2(N)$ is compared with $\log_2 N$. 
Observe that both functions take the same value at powers of $2$ and the first one is affine between two consecutive powers of $2$. 
\begin{figure}[h!tb]
    \centering
    \scalebox{.5}{\includegraphics{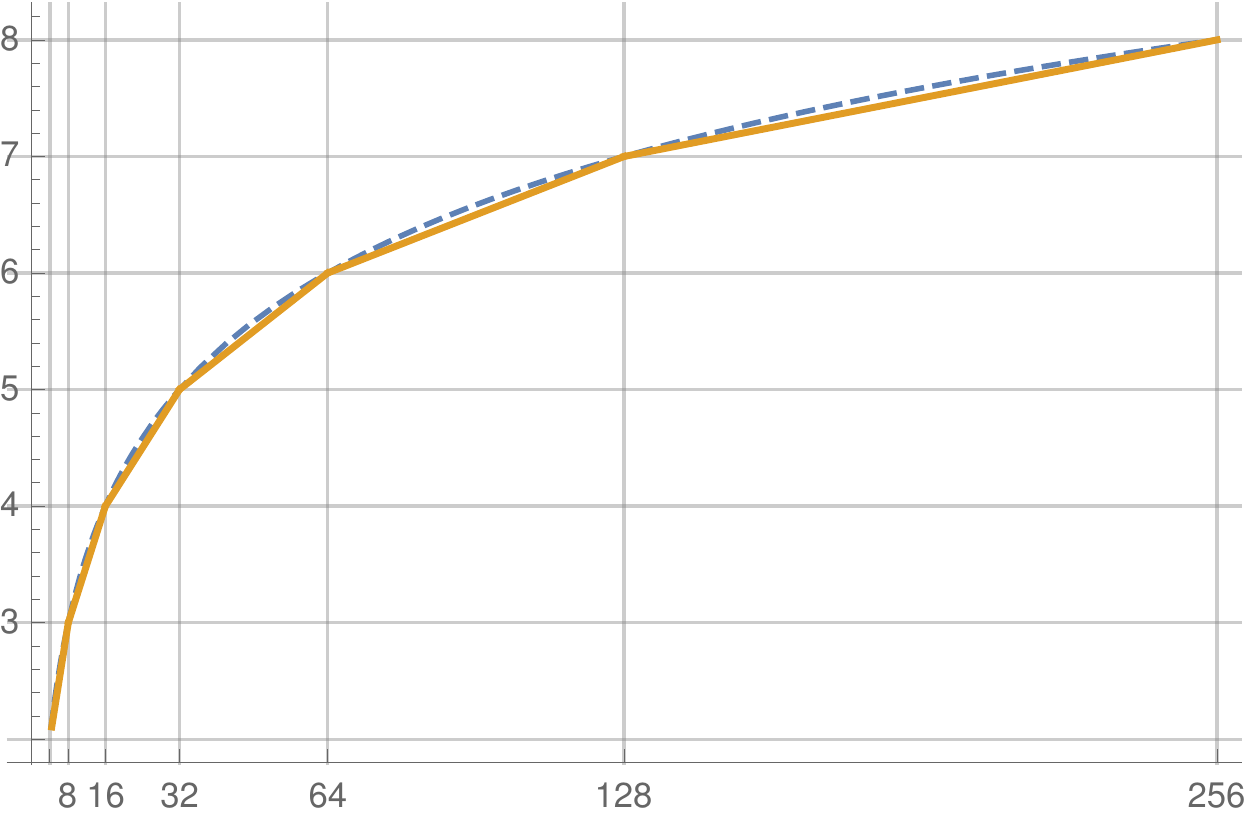}}
    \caption{The map $\lfloor \log_2 N \rfloor+\relpos_2(N)$.}
    \label{fig:log}
\end{figure}

In the rest of the section, we prove the following result which is an equivalent version of Theorem~\ref{thm:AsymptoBase2} when considering the function $\mathcal{H}$ defined by $\mathcal{H}(x)=\Phi(\relpos_2(2^x))$.

\begin{theorem}\label{thm:convergence}
There exists a continuous function  $\Phi$ over $[0,1)$ such that $\Phi(0)=1$, $\lim_{\alpha\to 1^-}\Phi(\alpha)=1$ and the sequence $(A(N))_{N\ge 0}$ satisfies, for all $N\ge 1$, 
$$
A(N)=3^{\log_2 N}\ \Phi(\relpos_2(N))=N^{\log_2 3}\ \Phi(\relpos_2(N)).
$$ 
\end{theorem}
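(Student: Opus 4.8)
The plan is to renormalise and show that the rescaled sequence depends on $N$ only through $\relpos_2(N)$. Set
$$ g(N):=\frac{A(N)}{3^{\log_2 N}}=\frac{A(N)}{N^{\log_2 3}}, $$
so that the assertion is exactly that $g$ factors as $g(N)=\Phi(\relpos_2(N))$ for a continuous $\Phi$ with $\Phi(0)=\lim_{\alpha\to1^-}\Phi(\alpha)=1$. First I would record two recurrences for $A$. The relation $\mathsf{s}(2m-1)+\mathsf{s}(2m)=3\,\mathsf{s}(m)$ (which follows from the $2$-regular structure of $(\mathsf{s}(n))_{n\ge0}$, or directly from the subword interpretation of $\mathsf{s}$) gives, upon summation over $1\le m\le M$ and using $A(0)=0$, the exact self-similarity $A(2M)=3A(M)$; together with $\mathsf{s}\ge0$ this also yields $A(2M+1)=3A(M)+\mathsf{s}(2M+1)$. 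Since $\relpos_2(2M)=\relpos_2(M)$ and $A(2M)=3A(M)$ give $g(2M)=g(M)$, the value $g(N)$ is unchanged when $N$ is multiplied by a power of $2$; as two integers have the same relative position precisely when their ratio is a power of $2$, this shows $g$ descends to a well-defined function $\Phi$ on the dyadic rationals of $[0,1)$, with $\Phi(0)=g(2^n)=3^n/3^n=1$ by Lemma~\ref{lem:2^n 3^n}.

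Next I would collect the two quantitative inputs needed for the limit. Monotonicity of $A$ and $A(2^{n+1})=3^{n+1}$ give, for $2^n\le N<2^{n+1}$, the uniform bound $0\le g(N)\le 3^{n+1}/3^{n}=3$. The second input is a subdominant growth estimate $\mathsf{s}(n)=O(n^{\log_2\varphi})$, where $\varphi$ is the golden ratio: indeed $\mathsf{s}(n)$ is bounded by the number of distinct scattered subwords of a binary word of length $\lfloor\log_2 n\rfloor+1$, which is $\Theta(\varphi^{\ell})$ for length $\ell$, and $\log_2\varphi<\log_2 3$.

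The core is then a Cauchy estimate in the sup norm. For $\alpha\in[0,1)$ put $N^{(n)}:=2^n+\lfloor 2^n\alpha\rfloor$, the integer of level $n$ whose relative position best approximates $\alpha$; then $N^{(n+1)}\in\{2N^{(n)},\,2N^{(n)}+1\}$. In the even case $g(N^{(n+1)})=g(N^{(n)})$ exactly. In the odd case, writing $P=N^{(n)}$ and using both recurrences,
$$ g(2P+1)=\frac{g(P)+\mathsf{s}(2P+1)\,/\,(3P^{\log_2 3})}{(1+\tfrac{1}{2P})^{\log_2 3}}, $$
so that $|g(N^{(n+1)})-g(N^{(n)})|\le 3\,\bigl|(1+\tfrac{1}{2P})^{-\log_2 3}-1\bigr|+\mathsf{s}(2P+1)/3^{n}\le C\,(2^{-n}+(\varphi/3)^{n})$, a bound summable in $n$ and uniform in $\alpha$. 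Hence $\Phi(\alpha):=\lim_n g(N^{(n)})$ exists; since two parameters within $2^{-n}$ of each other share $N^{(m)}$ for all $m\le n$, the same geometric tail bounds $|\Phi(\alpha)-\Phi(\alpha')|$, giving a uniform modulus of continuity. Thus $\Phi$ is continuous on $[0,1)$, and at a dyadic $\alpha$ the sequence $N^{(n)}$ is eventually a power of $2$ times the canonical integer $N_0$ with $\relpos_2(N_0)=\alpha$, so $\Phi(\alpha)=g(N_0)$ agrees with the dyadic values; this is precisely $A(N)=3^{\log_2 N}\Phi(\relpos_2(N))$ for every integer $N\ge1$.

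It remains to compute the boundary value $\lim_{\alpha\to1^-}\Phi(\alpha)$. Evaluating along $N=2^{n+1}-1$, whose relative position $1-2^{-n}$ tends to $1^-$, gives $A(2^{n+1}-1)=3^{n+1}-\mathsf{s}(2^{n+1})=3^{n+1}(1-o(1))$ by the growth bound, while $3^{\log_2(2^{n+1}-1)}=3^{n+1}(1-2^{-(n+1)})^{\log_2 3}=3^{n+1}(1-o(1))$; hence $g(2^{n+1}-1)\to1$, and uniform continuity promotes this to $\lim_{\alpha\to1^-}\Phi(\alpha)=1$. I expect the main obstacle to be the combinatorial heart rather than the analysis: establishing the exact identity $\mathsf{s}(2m-1)+\mathsf{s}(2m)=3\,\mathsf{s}(m)$ that powers the self-similarity $A(2M)=3A(M)$, together with the accompanying subdominant bound on $\mathsf{s}(n)$; once these are in hand, the construction of $\Phi$ as a uniform limit and the verification of its two boundary values are routine.
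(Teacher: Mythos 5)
Your renormalization route --- setting $g(N)=A(N)/3^{\log_2 N}$, exploiting the self-similarity $g(2M)=g(M)$, and running a uniform Cauchy estimate along the approximants $N^{(n)}=2^n+\lfloor 2^n\alpha\rfloor$ --- is genuinely different from the paper's proof, which channels everything through an explicit ``$3$-decomposition'' of $A(n)$, prefix-stability of its digits, and a series formula for $\Phi$. Your ingredients are all available in the paper: $A(2n)=3A(n)$ is Corollary~\ref{cor:32n}, and the identity $\mathsf{s}(2m-1)+\mathsf{s}(2m)=3\mathsf{s}(m)$ that you flag as the ``combinatorial heart'' is nothing but its first difference, so it is not a real obstacle (the paper deduces Corollary~\ref{cor:32n} from Proposition~\ref{pro:rec} by induction). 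Likewise, your unproved bound $\mathsf{s}(n)=O(n^{\log_2\varphi})$ is an unnecessary detour: the paper's elementary bound $\mathsf{s}(n)\le 2n$ (Corollary~\ref{cor:majs}) already gives a summable error $O((2/3)^n)$ since $1<\log_2 3$. Up through the existence of $\Phi$ as a uniform limit, the exact formula $A(N)=3^{\log_2 N}\,\Phi(\relpos_2(N))$, and $\Phi(0)=1$, your argument is sound and arguably more elementary than the paper's.

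The genuine gap is the continuity argument. Your claim that ``two parameters within $2^{-n}$ of each other share $N^{(m)}$ for all $m\le n$'' is false at dyadic rationals: for $\alpha'=1/2$ and $\alpha=1/2-\epsilon$ one has $N^{(m)}(\alpha')=2^m+2^{m-1}$ but $N^{(m)}(\alpha)=2^m+2^{m-1}-1$ for every $m$ with $2^m\epsilon\le 1$, however small $\epsilon$ is. The functions $h_n(\alpha):=g(N^{(n)}(\alpha))$ are step functions with a jump at every dyadic rational of level at most $n$, so their uniform limit is automatically continuous only at non-dyadic points; continuity of $\Phi$ at dyadic points --- and the ``uniform modulus of continuity'' you invoke and then reuse for $\lim_{\alpha\to 1^-}\Phi(\alpha)=1$ --- does not follow and must be proved separately, by showing the jump of $h_n$ at $\alpha=r/2^k$ vanishes as $n\to\infty$. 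Concretely, with $M_n=2^{n-k}(2^k+r)$ one needs $|g(M_n)-g(M_n-1)|\to 0$, which does hold because $A(M_n)-A(M_n-1)=\mathsf{s}(M_n)=O(2^n)=o(3^n)$ while $\bigl|3^{-\log_2 M_n}-3^{-\log_2(M_n-1)}\bigr|=O(2^{-n}3^{-n})$ against $A(M_n-1)\le 3^{n+1}$; this is exactly the extra case ($\gamma<\alpha$ with $\rep_2(\alpha)=d_1\cdots d_k0^\omega$) that occupies the final part of the paper's proof. Similarly, the boundary value at $1^-$ should be obtained not from the unestablished uniform continuity but by interchanging $\lim_{\alpha\to 1^-}$ with $\lim_n$ via the uniform convergence you do have, noting $\lim_{\alpha\to 1^-}h_n(\alpha)=g(2^{n+1}-1)\to 1$, which is what the paper does. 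With these repairs your proof closes; as written, the continuity of $\Phi$, which is part of the statement, is not established.
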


The graph of $\Phi$ is depicted in Figure~\ref{fig:F} and we will show in Lemma~\ref{lem:rationalPhi} that $\Phi$ can be computed on a dense subset of $[0,1)$.
\begin{figure}[h!tb]
    \centering
    \scalebox{.49}{\includegraphics{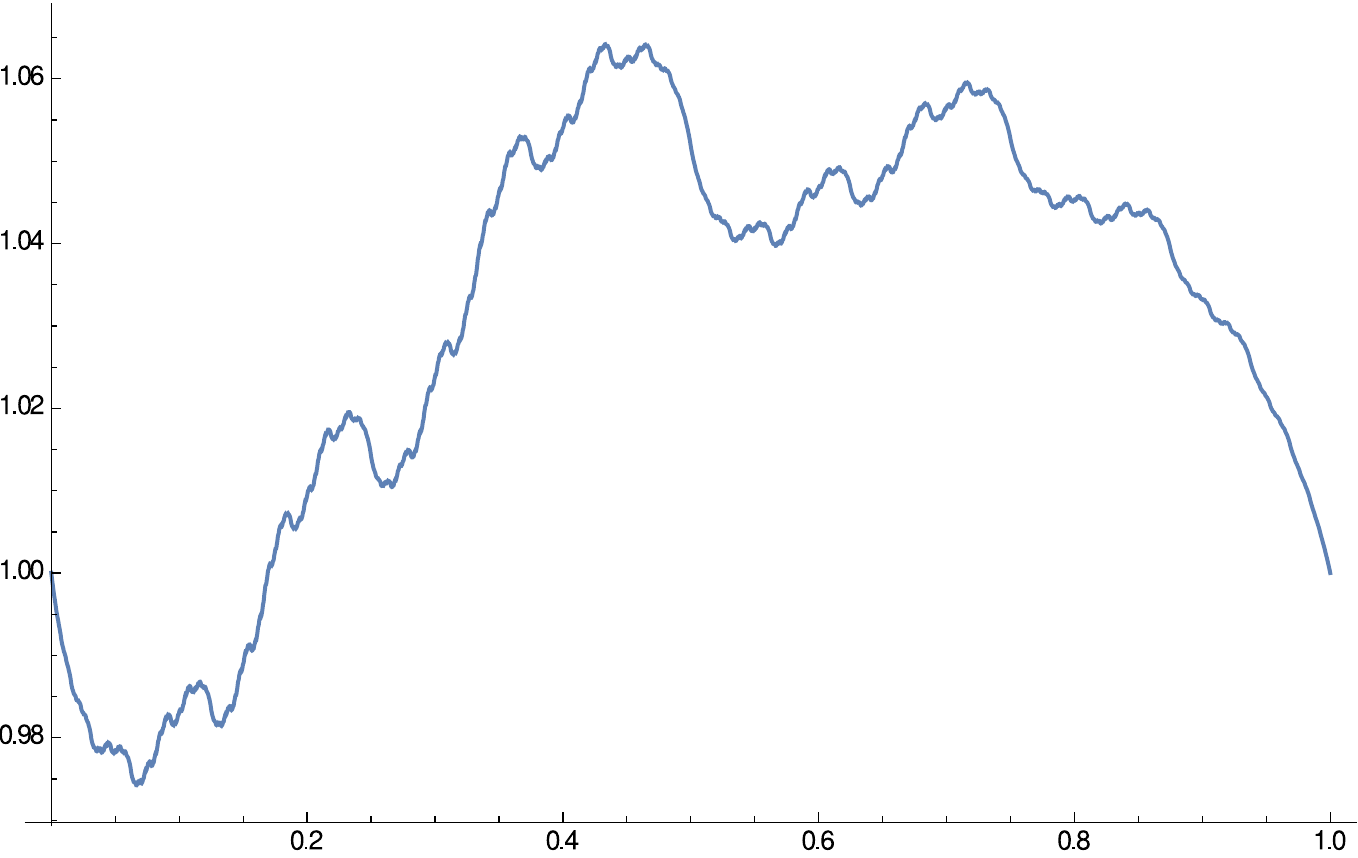}}
    \caption{The graph of $\Phi$.}
    \label{fig:F}
\end{figure}

Let us recall\footnote{With our current notation, the sequence was originally defined as $(\mathrm{s}(n+1))_{n\ge 0}$~\cite{LRS2}.
Considering a shifted version of it makes proofs simpler proofs.} the following result \cite{LRS2} which is our main tool. 

\begin{proposition}\label{pro:rec}
    The sequence $(\mathsf{s}(n))_{n\ge 0}$ satisfies $\mathsf{s}(0)=0$, $\mathsf{s}(1)=1$, $\mathsf{s}(2)=2$ and, for all $\ell\ge 1$ and $1\le r\le 2^\ell$, 
$$
\mathsf{s}(2^{\ell}+r)
=
\left\{
    \begin{array}{ll}
        \mathsf{s}(2^{\ell-1}+r)+\mathsf{s}(r),& \text{ if }1 \le r \le 2^{\ell-1};\\
        \mathsf{s}(2^{\ell+1}-r+1),& \text{ if } 2^{\ell-1} < r \le 2^\ell.\\
    \end{array}
\right.
$$
\end{proposition}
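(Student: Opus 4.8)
The plan is to reinterpret $\mathsf{s}(n)$ as a count of distinct scattered subwords and to reduce the two cases of the recurrence to elementary properties of such counts. For a binary word $x$, let $N(x)=\#\{s\in\{0,1\}^*\mid \binom{x}{s}>0\}$ be the number of distinct words occurring as a scattered subword of $x$ (including $\varepsilon$). Recalling \eqref{eq:defS} and that $\rep_2(\mathbb{N})=1\{0,1\}^*\cup\{\varepsilon\}$, every element of $\rep_2(\mathbb{N})$ is either $\varepsilon$ or begins with $1$. The first step is the reduction $\mathsf{s}(n)=1+N(x)$, valid for $n\ge 2$ where $\rep_2(n-1)=1x$: indeed every subword of $1x$ that begins with $1$ can be written uniquely as $1s$ with $s$ a subword of $x$, which yields a bijection between the nonempty legal subwords of $1x$ and the subwords of $x$. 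The base cases $\mathsf{s}(0)=0$, $\mathsf{s}(1)=1$ and $\mathsf{s}(2)=2$ follow at once from $\rep_2(0)=\varepsilon$ and $\rep_2(1)=1$.

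Next I would fix the block encoding. For $n=2^\ell+r$ with $1\le r\le 2^\ell$ one has $\rep_2(n-1)=1c$, where $c$ is the length-$\ell$ binary word (leading zeros allowed) representing $r-1$, so that $\mathsf{s}(2^\ell+r)=1+N(c)$. For the reflection case $2^{\ell-1}<r\le 2^\ell$, the word $c$ begins with $1$, and a short computation shows $\rep_2(2^{\ell+1}-r)=1\bar c$, where $\bar c$ is the letter-wise complement of $c$ (because $2^\ell-r=(2^\ell-1)-(r-1)$ complements the low $\ell$ bits). Since exchanging the two letters is a length-preserving bijection of $\{0,1\}^*$ mapping subwords of $c$ onto subwords of $\bar c$, we have $N(c)=N(\bar c)$, hence $\mathsf{s}(2^\ell+r)=1+N(c)=1+N(\bar c)=\mathsf{s}(2^{\ell+1}-r+1)$.

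For the additive case $1\le r\le 2^{\ell-1}$, the word $c$ begins with $0$, say $c=0d$ with $d$ of length $\ell-1$ representing $r-1$; note that $\mathsf{s}(2^{\ell-1}+r)=1+N(d)$ as well. Thus $\mathsf{s}(2^\ell+r)-\mathsf{s}(2^{\ell-1}+r)=N(0d)-N(d)$, and I would evaluate this via the standard prepend identity: splitting the subwords of $0d$ into those avoiding the leading letter (the subwords of $d$) and those using it (the words $0t$ with $t$ a subword of $d$), inclusion--exclusion gives $N(0d)=2N(d)-N_0(d)$, where $N_0(d)$ counts subwords of $d$ beginning with $0$. Hence $N(0d)-N(d)=N(d)-N_0(d)=1+N_1(d)$, with $N_1(d)$ the number of subwords of $d$ beginning with $1$. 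Finally, deleting the leading zeros of $d$ yields $\rep_2(r-1)=1y$ (or $\varepsilon$ when $r=1$), and the subwords of $d$ beginning with $1$ biject with the subwords of $y$; therefore $1+N_1(d)=1+N(y)=\mathsf{s}(r)$, which is precisely the claimed relation $\mathsf{s}(2^\ell+r)=\mathsf{s}(2^{\ell-1}+r)+\mathsf{s}(r)$.

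I expect the main obstacle to be the bookkeeping of leading zeros when passing between the fixed-length block encodings $c,d$ and the canonical expansions $\rep_2(r-1)$, together with making the two elementary subword identities fully rigorous: the prepend formula $N(0d)=2N(d)-N_0(d)$ and the suffix description $N_1(d)=N(y)$. None of this is deep, but the degenerate case $r=1$ (where $d$ is a block of zeros, $y=\varepsilon$ and $N_1(d)=0$) must be checked separately to confirm $\mathsf{s}(2^\ell+1)=\mathsf{s}(2^{\ell-1}+1)+1$.
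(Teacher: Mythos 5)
Your proposal is correct, but it cannot be ``the same approach as the paper'' for a simple reason: the paper offers no proof of Proposition~\ref{pro:rec} at all --- it recalls the statement from the companion paper \cite{LRS2}, where it is obtained within the framework of generalized Pascal triangles (there $\mathsf{s}(n)$ arises as the number of non-zero entries $\binom{\rep_2(n-1)}{v}$ in a row of the triangle). Your argument is thus a self-contained, elementary alternative. Its engine is the reduction $\mathsf{s}(n)=1+N(x)$ for $\rep_2(n-1)=1x$, where $N$ counts \emph{all} distinct scattered subwords; this is valid because a subword of $1x$ beginning with $1$, say $1s$, always has $s$ a subword of $x$ (whether or not a given occurrence uses the leading $1$), and conversely. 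After that, the two branches of the recurrence follow from genuinely elementary facts about subword counts: the prepend inclusion--exclusion $N(0d)=2N(d)-N_0(d)$, the decomposition $N(d)=1+N_0(d)+N_1(d)$ giving $N(0d)-N(d)=1+N_1(d)$, and the invariance $N(c)=N(\bar c)$ under exchanging $0$ and $1$, which makes the reflection case transparent. Note that it is essential here that $N$ counts all subwords rather than only those in $\rep_2(\mathbb{N})$, since complementation destroys legality; legality is restored by the prepended $1$, which is exactly why your reduction is the right one. Your bookkeeping is also sound: for $2^{\ell-1}<r\le 2^\ell$ one indeed has $\rep_2(2^{\ell+1}-r)=1\bar c$ because $2^\ell-r=(2^\ell-1)-(r-1)$, and the degenerate case $r=1$ gives $d\in 0^*$, $N_1(d)=0$, consistent with $\mathsf{s}(1)=1$. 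What each route buys: yours is short, purely combinatorial, and independent of \cite{LRS2}; the paper's citation instead imports the statement from the Pascal-triangle setting, which is the context the authors actually exploit elsewhere (e.g.\ for the $2$-regularity of $(\mathsf{s}(n))_{n\ge 0}$).
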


The following result directly follows by induction from Proposition~\ref{pro:rec}.

\begin{corollary}\label{cor:majs}
For all $n \geq 0$, we have $\mathsf{s}(n) \leq 2n$.
\end{corollary}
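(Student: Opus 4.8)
The plan is to prove the bound by strong induction on $n$, feeding the two branches of the recurrence in Proposition~\ref{pro:rec} into the induction hypothesis. First I would dispose of the base cases $n\in\{0,1,2\}$, where $\mathsf{s}(0)=0$, $\mathsf{s}(1)=1$ and $\mathsf{s}(2)=2$ all obey $\mathsf{s}(n)\le 2n$. For $n\ge 3$ I would write $n=2^{\ell}+r$ with $\ell\ge 1$ and $1\le r\le 2^{\ell}$; this decomposition is unique and is exactly the regime on which Proposition~\ref{pro:rec} is stated, so the argument splits naturally along its two cases. In each case the point will be that every index appearing on the right-hand side is strictly smaller than $n$, so the induction hypothesis applies, after which the target inequality collapses to the defining condition on $r$.

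In the first case $1\le r\le 2^{\ell-1}$, the recurrence gives $\mathsf{s}(n)=\mathsf{s}(2^{\ell-1}+r)+\mathsf{s}(r)$, and both arguments $2^{\ell-1}+r$ and $r$ are smaller than $n=2^{\ell}+r$. The induction hypothesis then yields $\mathsf{s}(n)\le 2(2^{\ell-1}+r)+2r=2^{\ell}+4r$. Comparing with $2n=2^{\ell+1}+2r$, the desired bound $\mathsf{s}(n)\le 2n$ reduces to $2r\le 2^{\ell}$, i.e.\ to $r\le 2^{\ell-1}$, which is precisely the hypothesis of this case.

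In the second case $2^{\ell-1}<r\le 2^{\ell}$, the recurrence gives $\mathsf{s}(n)=\mathsf{s}(2^{\ell+1}-r+1)$, and the key preliminary check is that this new index is again strictly below $n$: from $r>2^{\ell-1}$ one gets $2^{\ell+1}-r+1\le 3\cdot 2^{\ell-1}<n$, so the induction hypothesis is legitimately applicable and gives $\mathsf{s}(n)\le 2(2^{\ell+1}-r+1)=2^{\ell+2}-2r+2$. Comparing with $2n=2^{\ell+1}+2r$, the bound reduces to $2^{\ell+1}+2\le 4r$, which follows from $r\ge 2^{\ell-1}+1$. I expect the only genuinely delicate point to be this index bookkeeping in the second case—verifying that $2^{\ell+1}-r+1$ is a valid and strictly smaller argument—since the arithmetic itself is immediate, each case bottoming out exactly at its defining constraint on $r$. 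It is worth noting that the constant $2$ is comfortably loose (empirically $\mathsf{s}(n)/n\le 1$), which is what makes both inequalities close with room to spare.
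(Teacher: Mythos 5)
Your proof is correct and follows exactly the route the paper intends: the paper dispatches this corollary with the single remark that it ``directly follows by induction from Proposition~\ref{pro:rec}'', and your strong induction—splitting along the two branches of the recurrence and reducing each inequality to the defining constraint on $r$—is precisely that argument, carried out in full detail (including the index check $2^{\ell+1}-r+1<n$ in the second case, which the paper leaves implicit).
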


Proposition~\ref{pro:rec} also permits us to derive two convenient relations for $A(N)$ where powers of $3$ appear. 
This is the starting point of the $3$-decompositions mentioned above. 

\begin{lemma}\label{lem:recurrence}
Let $\ell\ge 1$. 
If $0\le r\le 2^{\ell-1}$, then
    $$A(2^\ell+r)=2\cdot 3^{\ell-1}+A(2^{\ell-1}+r)+A(r).$$
If $2^{\ell-1}< r< 2^{\ell}$, then
    $$A(2^\ell+r)=4\cdot  3^\ell-2\cdot 3^{\ell-1}-A(2^{\ell-1}+r')-A(r')\quad 
\text{ where }r'=2^\ell-r.$$
\end{lemma}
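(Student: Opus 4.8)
The plan is to derive both identities directly from Proposition~\ref{pro:rec} by summing the recurrence for $\mathsf{s}$ over a suitable range and recognizing the resulting partial sums as values of $A$. The key observation is that $A(2^\ell+r)=A(2^{\ell-1})+\sum_{k=1}^{2^{\ell-1}}\mathsf{s}(2^{\ell-1}+k)+\sum_{k=1}^{r}\mathsf{s}(2^\ell+k)$ when $r\ge 0$, so I would telescope $A(2^\ell+r)-A(2^\ell)$ or, more cleanly, $A(2^\ell+r)-A(2^{\ell-1}+r)$ by writing it as a sum of differences $\mathsf{s}(2^\ell+k)$ minus the appropriate terms. For the first case $0\le r\le 2^{\ell-1}$, each summand $\mathsf{s}(2^\ell+k)$ with $1\le k\le r\le 2^{\ell-1}$ falls into the first branch of Proposition~\ref{pro:rec}, giving $\mathsf{s}(2^\ell+k)=\mathsf{s}(2^{\ell-1}+k)+\mathsf{s}(k)$. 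Summing this identity over $k$ from $1$ to $r$ and adding $A(2^\ell)=3^\ell$ (from Lemma~\ref{lem:2^n 3^n}) produces $A(r)$ and $A(2^{\ell-1}+r)-A(2^{\ell-1})$ as the two sums; substituting $A(2^{\ell-1})=3^{\ell-1}$ and using $3^\ell-3^{\ell-1}=2\cdot 3^{\ell-1}$ should yield exactly $A(2^\ell+r)=2\cdot 3^{\ell-1}+A(2^{\ell-1}+r)+A(r)$.

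For the second case $2^{\ell-1}<r<2^\ell$, I would set $r'=2^\ell-r$, so $0<r'<2^{\ell-1}$, and write $A(2^\ell+r)=A(2^{\ell+1})-\sum_{k=r+1}^{2^\ell}\mathsf{s}(2^\ell+k)$, where $A(2^{\ell+1})=3^{\ell+1}$. The summation index $k$ now ranges over values with $2^{\ell-1}<k$ in part and over the symmetric branch of Proposition~\ref{pro:rec}; the relevant terms $\mathsf{s}(2^\ell+k)$ for $2^{\ell-1}<k\le 2^\ell$ satisfy $\mathsf{s}(2^\ell+k)=\mathsf{s}(2^{\ell+1}-k+1)$. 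A change of variable in this sum (sending $k$ to $2^{\ell+1}-k+1$) turns the tail sum into a sum of consecutive values of $\mathsf{s}$, which I can then re-express through $A$ and, using the first identity already established at level $\ell$ applied to $A(2^\ell+r')$, reduce to $A(2^{\ell-1}+r')+A(r')$ together with the power-of-$3$ constants. The arithmetic with the constants, collecting $3^{\ell+1}$, $3^\ell$, and $3^{\ell-1}$ into $4\cdot 3^\ell-2\cdot 3^{\ell-1}$, is routine.

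The main obstacle I anticipate is bookkeeping in the second case: correctly handling the index shift induced by the reflection $k\mapsto 2^{\ell+1}-k+1$ and ensuring the endpoints match, since the recurrence treats $r=2^{\ell-1}$ and $r=2^\ell$ as boundary cases and the statement excludes the endpoints ($2^{\ell-1}<r<2^\ell$ strictly). I would take care to verify the range conditions $1\le r\le 2^\ell$ required by Proposition~\ref{pro:rec} are respected for every index appearing, and to confirm that the reflected sum indeed reconstructs $A(2^\ell+r')$ rather than an off-by-one variant. Once the reflection is pinned down precisely, substituting the first identity (for the symmetric point $r'$) and simplifying the $3$-power constants should close the argument; I expect the first case to be essentially immediate by comparison.
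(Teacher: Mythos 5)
Your proposal is correct and follows essentially the same route as the paper: the first identity by summing the first branch of Proposition~\ref{pro:rec} over $k=1,\dots,r$ and using $A(2^\ell)=3^\ell$, and the second by reflecting the tail sum via the second branch of Proposition~\ref{pro:rec} into $A(2^\ell+r')-A(2^\ell)$ and then applying the first identity at $r'$. The paper merely indexes the tail sum as $\sum_{j=1}^{r'}\mathsf{s}(2^{\ell+1}-j+1)$ rather than $\sum_{k=r+1}^{2^\ell}\mathsf{s}(2^\ell+k)$, which is the same computation after your change of variable.
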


\begin{proof}
Let us start with the first case. 
If $r=0$, the result directly follows from Lemma~\ref{lem:2^n 3^n}. 
Now assume that $0< r\le 2^{\ell-1}$. 
Applying Proposition~\ref{pro:rec} and recalling that $\mathsf{s}(0)=0$, we get
\begin{eqnarray*}
    A(2^\ell+r)&=&A(2^\ell)+\sum_{j=1}^r \mathsf{s}(2^\ell+j)\\
&=&3^\ell+\sum_{j=1}^r \mathsf{s}(j) + \sum_{j=1}^r \mathsf{s}(2^{\ell-1}+j)\\
&=&3^\ell+A(r)+A(2^{\ell-1}+r)-A(2^{\ell-1})\\
&=&2\cdot 3^{\ell-1}+A(r)+A(2^{\ell-1}+r).
\end{eqnarray*}

Let us proceed to the second case with $2^{\ell-1}< r< 2^{\ell}$ and $r'=2^\ell-r$. 
Notice that $0< r'<2^{\ell-1}$. 
Applying Proposition~\ref{pro:rec}, we get
\begin{eqnarray*}
    A(2^\ell+r)=A(2^{\ell+1}-r')&=& A(2^{\ell+1})-\sum_{j=1}^{r'} \mathsf{s}(2^{\ell+1}-j+1)\\
&=&3^{\ell+1}-A(2^{\ell}+r')+A(2^{\ell})\\
&=&4\cdot  3^\ell-A(2^{\ell}+r').
\end{eqnarray*}
We may apply the first part of this lemma with $r'$ and thus get
$$
A(2^\ell+r)=4\cdot 3^\ell-2\cdot 3^{\ell-1}-A(r')-A(2^{\ell-1}+r').
$$
\end{proof}

\begin{corollary}\label{cor:32n}
    For all $n\ge 0$, $A(2n)=3 A(n)$.
\end{corollary}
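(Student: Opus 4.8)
The plan is to prove the identity by strong induction on $n$, leaning entirely on the two recurrences of Lemma~\ref{lem:recurrence}. First I would dispose of the base cases $n=0$ and $n=1$ directly: $A(0)=0=3A(0)$, and from the initial terms $A(2)=3=3\cdot 1=3A(1)$. For the inductive step, assume $A(2m)=3A(m)$ for every $m<n$ with $n\ge 2$, and write $n=2^{\ell-1}+s$ where $2^{\ell-1}\le n<2^\ell$ and $0\le s<2^{\ell-1}$, so that $2n=2^\ell+2s$ with $0\le 2s<2^\ell$.

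The key idea is that doubling turns the ``tail'' $s$ of $n$ into the tail $2s$ of $2n$, and the position of $2s$ relative to the midpoint $2^{\ell-1}$ mirrors the position of $s$ relative to the midpoint $2^{\ell-2}$. Concretely, I would split into two subcases. If $0\le s\le 2^{\ell-2}$ (so $2s\le 2^{\ell-1}$), apply the first case of Lemma~\ref{lem:recurrence} to $A(2n)=A(2^\ell+2s)$, obtaining $2\cdot 3^{\ell-1}+A(2^{\ell-1}+2s)+A(2s)$. Since $2^{\ell-1}+2s=2(2^{\ell-2}+s)$ and $2s=2\cdot s$, with both $2^{\ell-2}+s$ and $s$ strictly below $n$, the induction hypothesis rewrites these as $3A(2^{\ell-2}+s)$ and $3A(s)$. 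Factoring out $3$ leaves exactly $3$ times the expression the first case of Lemma~\ref{lem:recurrence} gives for $A(n)=A(2^{\ell-1}+s)$ (here using $s\le 2^{\ell-2}$), which is what I want.

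If instead $2^{\ell-2}<s<2^{\ell-1}$ (so $2^{\ell-1}<2s<2^\ell$), I would use the second case of Lemma~\ref{lem:recurrence} for $A(2n)$ with $r'=2^\ell-2s=2s'$ where $s'=2^{\ell-1}-s$, and again observe that $2^{\ell-1}+r'=2(2^{\ell-2}+s')$ and $r'=2s'$ with both $2^{\ell-2}+s'$ and $s'$ strictly below $n$, so the induction hypothesis produces an overall factor of $3$ matching the second case of Lemma~\ref{lem:recurrence} applied to $A(n)$. The only points requiring care are bookkeeping: checking that the indices fed to the induction hypothesis are genuinely smaller than $n$ (they are, since $2^{\ell-2}<2^{\ell-1}$ and $s'<2^{\ell-1}\le n$), verifying that the two subcases exhaust $0\le 2s<2^\ell$ with the boundary $2s=2^{\ell-1}$ absorbed into the first case, and confirming that the recurrences applied to $A(n)$ live at level $\ell-1\ge 1$, which forces $n=1$ (level $\ell=1$) to be handled as a base case. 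I expect this routine case-matching — rather than any genuine difficulty — to be the main thing to get right.
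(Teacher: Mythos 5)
Your proof is correct and takes essentially the same approach as the paper's: strong induction on $n$, a case split according to whether the remainder lies below or above the midpoint of the dyadic interval, an application of Lemma~\ref{lem:recurrence} to both $A(n)$ and $A(2n)$ in the matching cases, and a conclusion via the induction hypothesis. The only differences are cosmetic: your parametrization $n=2^{\ell-1}+s$ is the paper's $n=2^{\ell}+r$ shifted by one, and you factor out $3$ from $A(2n)$ where the paper instead verifies that $3A(n)-A(2n)=0$.
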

\begin{proof}
Let us proceed by induction on $n\ge 0$. 
The result holds for $n\in\{0,1\}$. 
Thus consider $n \ge 2$ and suppose that the result holds for all $m < n$. 
Let us write $n=2^\ell+r$ with $\ell\ge 1$ and $0\le r < 2^\ell$. 
Let us first suppose that $0\le r \le 2^{\ell-1}$. 
Then, by Lemma~\ref{lem:recurrence}, we have
$$
3A(n)-A(2n) = 2\cdot 3^{\ell}+3A(2^{\ell-1}+r)+3A(r) - 2\cdot 3^{\ell}-A(2^{\ell}+2r)-A(2r).
$$
We conclude this case by using the induction hypothesis. 
Now suppose that $2^{\ell-1} < r < 2^\ell$. 
Then, by Lemma~\ref{lem:recurrence}, we have
\begin{align*}
3A(n)-A(2n) = &4\cdot 3^{\ell+1}-2\cdot 3^{\ell}-3A(2^{\ell-1}+r')-3A(r') \\
&- 4\cdot 3^{\ell+1}+2\cdot 3^{\ell}+A(2^{\ell}+2r')+A(2r')  
\end{align*}
where $r'=2^\ell-r$. 
We again conclude by using the induction hypothesis. 
\end{proof}

\subsection{$3$-decomposition of $A(n)$}\label{sec:3dec} 

Let us consider two examples to understand the forthcoming notion of $3$-decomposition. 
The idea is to iteratively apply Lemma~\ref{lem:recurrence} to derive a decomposition of $A(n)$ as a particular linear combination of powers of $3$. 
Indeed, each application of Lemma~\ref{lem:recurrence} provides a ``leading'' term of the form $2\cdot 3^{\ell-1}$ or $4\cdot 3^\ell-2\cdot 3^{\ell-1}$ plus terms where smaller powers of $3$ will occur. 
To be precise, the special case of $A(2^\ell+2^{\ell-1})$ gives, when applying the lemma twice, a term $4\cdot 3^{\ell-1}$ plus terms where smaller powers of $3$ will occur. 
We also choose to set $A(0)=0\cdot 3^0$ and $A(1)=1\cdot 3^0$.

\begin{example}
To compute $A(42)$, three applications of Lemma~\ref{lem:recurrence} yield
\begin{eqnarray*}
A(42)=A(2^5+10) &=& 2\cdot 3^{4}+A(2^{4}+10)+A(2^3+2); 	\\
A(2^{4}+10) 	&=& 4\cdot 3^4-2\cdot 3^3-A(2^3+6)-A(2^2+2);	\\
A(2^3+2)		&=& 2\cdot 3^{2}+A(2^2+2)+A(2).
\end{eqnarray*}
We thus get 
$$
A(42)=6\cdot 3^{4}-2\cdot 3^3-A(2^3+6)+2\cdot 3^{2}+3\cdot 3^0.
$$
At this stage, we already know that, in the forthcoming applications of the lemma, no other term in $3^4$ may occur because we are left with the decomposition of $A(2^3+6)$. 
Applying again Lemma~\ref{lem:recurrence} yields
\begin{eqnarray*}
A(2^3+6)	&=& 4\cdot 3^3-2\cdot 3^{2}-A(2^{2}+2)-3\cdot 3^0;	\\
A(2^2+2)	&=& 2\cdot 3+A(2+2)+A(2);	\\
A(4)=A(2^2)	&=& 2\cdot 3+A(2)+A(0) = 2\cdot 3+3\cdot 3^0.
\end{eqnarray*}
So we have $A(2^2+2)=4\cdot 3+6\cdot 3^0$, $A(2^3+6)=4\cdot 3^3-2\cdot 3^{2}-4\cdot 3-9\cdot 3^0$
and, finally,
\begin{equation}
    \label{eq:42}
    A(42)=6\cdot 3^{4}-6\cdot 3^3+4\cdot 3^{2}+4\cdot 3+12\cdot 3^0.
\end{equation}

Proceeding similarly with $A(84)$, we have 
\begin{eqnarray*}
A(84)=A(2^6+20) &=& 2\cdot 3^{5}+A(2^{5}+20)+A(2^4+4);		\\
A(2^{5}+20)		&=& 4\cdot 3^5-2\cdot 3^4-A(2^4+12)-A(2^3+4);		\\
A(2^4+4)		&=& 2\cdot 3^{3}+A(2^3+4)+A(4) 				\\
				&=& 2\cdot 3^{3}+A(2^3+4)+ 2\cdot 3+3\cdot 3^0.
\end{eqnarray*}
We thus get 
$$
A(84)=6\cdot 3^{5}-2\cdot 3^4-A(2^4+12)+2\cdot 3^{3}+2\cdot 3+3\cdot 3^0
$$
and
\begin{eqnarray*}
A(2^4+12)	&=& 4\cdot 3^4-2\cdot 3^{3}-A(2^{3}+4)-A(4);	\\
A(2^3+4)	&=& 2\cdot 3^2+A(2^2+4)+A(4);			\\
A(2^2+4)	&=& A(2^3)=2\cdot 3^2+A(4)+A(0);			\\
A(4)		&=& 2\cdot 3+3\cdot 3^0.
\end{eqnarray*}
Thus, we get $A(2^3+4)=4\cdot 3^2+4\cdot 3+6\cdot 3^0$ and finally
\begin{equation}
    \label{eq:84}
    A(84)=6\cdot 3^{5}-6\cdot 3^4+4\cdot 3^{3}+4\cdot 3^2+8\cdot 3+12\cdot 3^0.
\end{equation}

If we compare \eqref{eq:42} and \eqref{eq:84}, we may already notice that the same leading coefficients $6,-6$ and $4$ occur in front of the dominating powers of $3$.
\end{example}

\begin{definition}[$3$-decomposition]\label{def:3dec1}
Let $n\ge 2$. 
Iteratively applying Lemma~\ref{lem:recurrence} provides a unique decomposition of the form 
$$
A(n)=\sum_{i=0}^{\ell_2(n)} a_i(n)\, 3^{\ell_2(n)-i},
$$
where $a_i(n)$ are integers, $a_{0}(n)\neq 0$ and $\ell_2(n)$ stands for $\lfloor \log_2 n \rfloor$ or $\lfloor \log_2 n \rfloor-1$ (depending on the fact that $n=2^{\lfloor \log_2 n \rfloor}+r$ with $2^{\lfloor \log_2 n \rfloor-1}<r<2^{\lfloor \log_2 n \rfloor}$ or $0\le r\le 2^{\lfloor \log_2 n \rfloor-1}$ respectively). 
We say that the word 
$$
\mathsf{3dec}(A(n)):=a_0(n) \cdots a_{\ell_2(n)}(n)
$$ 
is the {\em $3$-decomposition} of $A(n)$. 
Observe that when the integer $n$ is clear from the context, we simply write $a_i$ instead of $a_i(n)$. For the sake of clarity, we will also write $(a_0(n), \ldots, a_{\ell_2(n)}(n))$.
\end{definition}

As an example, we have $\ell_2(84)=5$ and, using~\eqref{eq:84}, the $3$-decomposition of $A(84)$ is $(6, -6, 4, 4, 8, 12)$. 
See also Table~\ref{tab:Adecomp}. 
Also notice that the notion of $3$-decomposition is only valid when the values taken by the sequence $(A(N))_{N\ge 0}$ are concerned. 
For instance, the $3$-decomposition of $5$ is not defined because $5 \notin \{A(n) \mid n \in \mathbb{N}\}$. 
\begin{table}[h!tb]
$$\begin{array}{c|rrrr|r}
n & a_0(n) & a_1(n) & a_2(n) & a_3(n)& A(n) \\
\hline
2& 3 &  &  &  &3\times 1=3\\
3& 6 &  &  &  &6\times 1=6\\
4& 2 & 3 &  & &2\times 3+3\times 1=9 \\
5& 2 & 7 &  & &2\times 3+7\times 1=13 \\
6& 4 & 6 &  &  &4\times 3+6\times 1=18\\
7& 4 & -2 & -7 & & 4\times 3^2-2\times 3-7\times 1=23 \\
\vdots & & & & & \\
20& 2 & 4 & 6 & 9 & 2\times 3^3+4\times 3^2+6\times 3+9\times 1=117\\
\end{array}$$
\caption{The $3$-decomposition of $A(2),A(3),\ldots$}
    \label{tab:Adecomp}
\end{table}

\begin{remark}\label{rem:cases}
Assume that we want to develop $A(n)$ using only Lemma~\ref{lem:recurrence}, i.e., to get the $3$-decomposition of $A(n)$. Several cases may occur.
\begin{enumerate}[(i)]
\item If $\rep_2(n)=10u$, with $u\in\{0,1\}^*$ possibly starting with $0$, then we apply the
  first part of Lemma~\ref{lem:recurrence} and we are left with evaluations of $A$ at integers whose base-$2$ expansions are shorter and given by 
$1u \text{ and } \rep_2(\val_2(u))$. 
Note that $\rep_2(\val_2(u))$ removes the possible leading zeroes in front of $u$.
\item If $\rep_2(n)=11u$, with $u\in\{0,1\}^*\setminus 0^*$, i.e., $u$ contains at least one $1$, then we apply the
  second part of Lemma~\ref{lem:recurrence} and we are left with evaluations of $A$ at integers whose base-$2$ expansions are shorter and given by 
$1u' \text{ and } \rep_2(\val_2(u'))$ 
where $u'\in\{0,1\}^*$ has the same length as $u$ and satisfies $\val_2(u')=\val_2(h(u))+1$ where $h$ is the involutory morphism exchanging $0$ and $1$. 
As an example, if $u=01011000$, then $h(u)=10100111$ and $u'=10101000$. 
If we mark the last occurrence of $1$ in $u$ (such an occurrence always exists): $u=v10^n$ for some $n\ge 0$, then $u'=h(v)10^n$.
\item If $\rep_2(n)=110^k$, then we will apply the first part of Lemma~\ref{lem:recurrence} and we are left with evaluations of $A$ at integers whose base-$2$ expansions are given by $10^{k+1}$ and $10^k$. 
This situation seems not so nice : we are left with a word $10^{k+1}$ of the same length as the original one $110^k$. 
However, the next application of Lemma~\ref{lem:recurrence} provides the word $10^k$ and the computation easily ends with a total number of calls to this lemma equal to $k+3$, namely the computations of $A(2^{k+1}), A(2^{k}),\ldots, A(2^0), A(0)$ are needed. 
This situation is not so bad since the numbers of calls to Lemma~\ref{lem:recurrence} to evaluate $A$ at integers with base-$2$ expansions of the same length can be equal. 
For instance, the computation of $A(12)$ requires the computations of $A(8), A(4), A(2), A(1), A(0)$ and the one of $A(14)$ requires the computations of $A(6), A(4), A(2), A(1), A(0)$.  
\end{enumerate}
\end{remark}

As already observed with Equations~\eqref{eq:42} and~\eqref{eq:84}, the 3-decompositions of $42$ and $84$ share the same first digits.
The next lemma states that this is a general fact. 
Roughly speaking, if two integers $m,n$ have a long common prefix in their base-$2$ expansions, then the most significant coefficients in the corresponding $3$-decompositions of $A(m)$ and $A(n)$ are the same.
\begin{lemma}\label{lem:conv}
Let $u\in\{0,1\}^*$ be a finite word of length at least 2. 
For all finite words $v,v'\in \{0,1\}^*\setminus 0^*$, the $3$-decompositions of $A(\val_2(1uv))$ and $A(\val_2(1uv'))$ share the same coefficients $a_0,\ldots,a_{|u|-2}$.
\end{lemma}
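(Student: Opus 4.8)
The plan is to prove the statement by induction on the length $p=|u|$ of the common prefix, by unwinding the recursive reduction that underlies the $3$-decomposition. The starting observation is that one step of Lemma~\ref{lem:recurrence}, as catalogued in Remark~\ref{rem:cases}, rewrites $A(\val_2(1uv))$ as an explicit leading term ($2\cdot 3^{\ell-1}$ in case (i), or $4\cdot 3^{\ell}-2\cdot 3^{\ell-1}$ in case (ii)) plus a \emph{primary} term $A(\val_2(1\hat u w))$ whose base-$2$ expansion is one digit shorter, plus a \emph{secondary} term $A(\val_2(z))$ whose expansion is at least two digits shorter. The crucial bookkeeping is that the primary child's prefix is governed by $u$ alone: in case (i) (when $u_1=0$) it is $1u_2\cdots u_p$, while in case (ii) (when $u_1=1$, which is the relevant subcase since $v\notin 0^*$ guarantees a $1$ in the tail) the marked-last-$1$ description of Remark~\ref{rem:cases}(ii) shows it is $1\,h(u_2\cdots u_p)$, the complement of $u_2\cdots u_p$; in both cases the appended tail $w$ again lies in $\{0,1\}^*\setminus 0^*$. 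Thus the primary child has the same form $A(\val_2(1\hat u v''))$ with $|\hat u|=p-1$ and $\hat u$ a function of $u$ only, which is exactly what the induction consumes.

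Next I would install the magnitude/alignment analysis. By Definition~\ref{def:3dec1} the top power of $3$ occurring in $A(m)$ is at most $3^{\lfloor\log_2 m\rfloor}$, so a node of the reduction tree whose expansion is $t$ digits shorter than the root can affect only the coefficients $a_j$ with $j\ge t-1$, and each node contributes to a coefficient only through its own \emph{leading} coefficients. Since the explicit leading term affects only $a_0$ (case (i)) or $a_0,a_1$ (case (ii)), and the secondary child is at least two digits shorter and hence feeds only into coefficients of index $\ge 1$, the top coefficients $a_0,\dots,a_{p-2}$ are assembled from the explicit leading term, the primary child, and the secondary child. The induction hypothesis, applied to the primary child (prefix $1\hat u$ of length $p$), returns its coefficients $a_0,\dots,a_{p-3}$ $v$-independently, and applied to the secondary child returns enough of its leading coefficients; for the non-aligned patterns the power shift means the primary feeds into the parent's $a_i$ through its own index $i-1$ or $i-2$, so the induction closes directly there.

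The step I expect to be the main obstacle is the boundary coefficient $a_{p-2}$ in the single pattern where the primary child sits at the \emph{same} top power of $3$ as the parent, i.e. when $\rep_2(n)$ begins with $101$ (parent in case (i), primary in case (ii)). There the primary feeds into $a_{p-2}$ through its own coefficient of index $p-2$, which the induction hypothesis does not yet control. I would resolve this in two complementary ways. First, as a preliminary lemma (which also serves as the base case $p=2$), record that the leading coefficient $a_0$ of any $A(\val_2(1w))$ depends only on the first three digits of $1w$: a finite check gives $a_0=4$ when $1w$ starts $11\cdots$ with a later $1$, $a_0=6$ when it starts $101$ (or in the degenerate $110^k$ shape), and $a_0=2$ when it starts $100$. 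Second, unfold one extra step of the recursion at the aligned node: being itself in case (ii), that node sends its grandchildren into indices strictly below $p-2$ by the alignment analysis, so the induction hypothesis applies to them and $a_{p-2}$ is pinned down as a function of $u$ alone.

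Finally I would assemble the pieces: the leading term and both children's contributions to each of $a_0,\dots,a_{p-2}$ are functions of $u$ only, hence identical for $v$ and $v'$, completing the induction. Two routine but error-prone points deserve attention. The leading-zero stripping when passing to the secondary child $A(\val_2(u_2\cdots u_p v))$ only lowers its top power, and so can only shift its influence toward higher indices, which is harmless; and the additive carry hidden in $\val_2(u')=\val_2(h(u))+1$ of Remark~\ref{rem:cases}(ii) is absorbed cleanly by the marked-last-$1$ normalization $u'=h(v)10^m$, so that the complemented prefix $h(u_2\cdots u_p)$ survives intact and the prefix-tracking of the first paragraph goes through.
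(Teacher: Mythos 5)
Your reduction bookkeeping is sound, and you have isolated exactly the subtlety that the paper's own terse induction passes over in silence: when the word begins $101$, the primary child $11u_3\cdots u_pv$ is a case-(ii) node whose top power of $3$ equals the parent's, so the parent's $a_{p-2}$ receives the child's $b_{p-2}$, which is one coefficient more than the induction hypothesis (shared prefix $u_2\cdots u_p$ of length $p-1$) controls. The genuine gap is in your repair of this case. The claim that one extra unfolding ``sends its grandchildren into indices strictly below $p-2$'' is false: the primary grandchild is $1\,h(u_3\cdots u_p)h(x)10^t$ (writing $v=x10^t$), and when $u_3=0$, i.e.\ the parent begins $1010$, this grandchild again begins with $11$, hence is again a case-(ii) node, now with top power exactly one below the parent's; its coefficient $b_{p-3}$ therefore feeds the parent's $a_{p-2}$, while the induction hypothesis (its shared prefix $h(u_3\cdots u_p)$ has length $p-2$) controls only $b_0,\ldots,b_{p-4}$. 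The off-by-one recurses: complementation sends case-(ii) nodes to case-(ii) nodes along an alternating prefix, e.g.\ $10101v\to 1101v\to 110h(x)10^t\to 11v$, so for $u=(01)^k$ no bounded number of unfoldings closes the argument; the obstruction is not a boundary accident but a defect of the induction statement itself. A second pattern you did not flag has the same defect: for a parent beginning $1011$, the \emph{secondary} child $1u_3\cdots u_pv$ also begins with $11$, sits one power below the parent, and its $b_{p-3}$ feeds $a_{p-2}$, again one beyond what the hypothesis yields. Finally, your preliminary finite check is misstated: a word beginning $101$ has $a_0=6$ only when a later $1$ occurs ($\rep_2(20)=10100$ has $a_0(20)=2$, see Table~\ref{tab:Adecomp}), and $110^k$ has $a_0=4$ for $k\ge 1$, not $6$; inside your base case this is harmless because $v\notin 0^*$, but the claim that $a_0$ depends only on the first three digits is wrong as stated.

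What is missing is not deeper unfolding but a \emph{strengthened} induction hypothesis that is stable under the reduction of Lemma~\ref{lem:recurrence}. The statement that closes is asymmetric in the first digit of $u$: for words beginning $11$, a common prefix $1u$ forces agreement of $a_0,\ldots,a_{|u|-1}$ (one coefficient more than the lemma asserts), while for words beginning $10$ it forces $a_0,\ldots,a_{|u|-2}$. Equivalently, index coefficients by powers of $3$ counted from the length of the word: prove that for $z=1uv$ and $z'=1uv'$ the coefficients of $3^{|z|-1-j}$ and $3^{|z'|-1-j}$ in the two $3$-decompositions agree for all $j\le|u|-1$. In these coordinates the explicit leading terms occupy $j\in\{0,1\}$, the primary child (one digit shorter) contributes with a shift of $+1$, the secondary child (at least two digits shorter, after stripping leading zeroes) with a shift of at least $+2$, and the signs with which children enter are determined by the case, hence by the shared prefix; one then checks in all four patterns $100,101,110,111$ that every contribution to $j\le|u|-1$ is covered by the hypothesis applied to a child whose shared prefix is one letter shorter and whose tail still avoids $0^*$. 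Your finite check (corrected, and extended to the coefficient $a_1$ for words beginning $110$ and $111$) settles the base case $|u|=2$, and the lemma follows since $a_{|u|-2}$ sits at $j=|u|-2$ or $j=|u|-1$ according to the first digit of $u$. This is also the repair that the paper's own proof tacitly needs, since it too invokes the unstrengthened hypothesis on both children without addressing the alignment.
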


\begin{proof}
It is a direct consequence of Lemma~\ref{lem:recurrence}. 
Proceed by induction on the length of the words. 
The word $u$ is of the form $0^{n_1}10^{n_2}1\cdots 10^{n_k}$ with $k\ge 1$ and $n_1,\ldots,n_k\ge 0$. 
If $n_1>0$, due to Lemma~\ref{lem:recurrence}, $A(\val_2(1uv))$ is decomposed as
$$2\cdot 3^{|u|+|v|-1}+A(\val_2(10^{n_1-1}10^{n_2}1\cdots 10^{n_k}v))+A(\val_2(10^{n_2}1\cdots 10^{n_k}v)).$$
Proceeding similarly, $A(\val_2(1uv'))$ is decomposed as
$$2\cdot3^{|u|+|v'|-1}+A(\val_2(10^{n_1-1}10^{n_2}1\cdots 10^{n_k}v'))+A(\val_2(10^{n_2}1\cdots 10^{n_k}v')).$$
The first term in these two expressions will equally contribute to the coefficient $a_0$ in the two $3$-decompositions. 
For the last two terms, we may apply the induction hypothesis. 
If $n_1=0$, applying again Lemma~\ref{lem:recurrence} to $A(\val_2(1uv))$ gives 
\begin{align*}
4\cdot3^{|u|+|v|}&-2\cdot3^{|u|+|v|-1}+A(\val_2(11^{n_2}0\cdots 01^{n_k}h(x)10^t))\\
&+A(\val_2(1^{n_i}01^{n_{i+1}}0\cdots 01^{n_k}h(x)10^t))
\end{align*}
where $v=x10^t$ and $i$ is the smallest index such that $n_i>0$.
We can conclude in the same way as in the case $n_1>0$.
\end{proof}

\begin{example}
 Take $\rep_2(745)=  1(01110)1001$ and $\rep_2(5904)=1(01110)0010000$. 
 If we compare the $3$-decompositions of $A(745)$ and $A(5904)$, they share the same first four coefficients.
$$\begin{array}{r||cccc|cccccccc}
n&a_0 & a_1 & a_2 & a_3 & a_4& a_5& a_6 & a_7 & a_8 & a_9 & a_{10} & a_{11}  \\
\hline
745&6& 2& -4& -12& 12& -42& -10& 32& 121 \\
5904&6& 2& -4& -12& -16& 14& 14& 28& 60& 60& 60& 90 \\
\end{array}$$
\end{example}

\begin{example}
    In this second example, we show that the assumption that $v\not\in 0^*$ is important. 
    Consider $\rep_2(448)=111000000$ and $\rep_2(449)=111000001$. 
    Even though these two words have the same prefix of length $8$, the third coefficient of the $3$-decompositions of $A(448)$ and $A(449)$ differ.
$$\begin{array}{r||cc|cccccccccc}
n&a_0 & a_1 & a_2 & a_3 & a_4& a_5& a_6 & a_7 & a_8   \\
\hline
448 & 4 & -2 & -4 & -6 & -6 & -6 & -6 & -6 & -9 \\
449 & 4 & -2 & -6 & -6 & 6 & 6 & 6 & 6 & 31 \\
\end{array}$$
\end{example}

The idea in the next three definitions is that $\alpha$ gives the relative position of an integer in the interval $[2^{n+1},2^{n+2})$.

\begin{definition}
    Let $\alpha$ be a real number in $[0,1)$. 
    Define the sequence of finite words $(w_n(\alpha))_{n\ge 1}$ where
$$w_n(\alpha):=(\rep_2\left(2^n+\lfloor \alpha 2^n \rfloor\right))1.$$
Roughly, $w_n(\alpha)$ is a word of length $n+2$ and its relative position amongst the words of length $n+2$ in $1\{0,1\}^*$ is given by an approximation of $\alpha$. 
\end{definition}
We add an extra $1$ as least significant digit for convenience (i.e., to avoid the third case of Remark~\ref{rem:cases}). 
The sequence $(w_n(\alpha))_{n\ge 1}$ converges to the infinite word $1\rep_2(\alpha)$ where $\rep_2(\alpha)$ is the infinite word $d_1d_2d_3\cdots$ over $\{0,1\}$ with the $d_i$'s not all eventually equal to $1$ and $\sum_{i\ge 1} d_i 2^{-i}=\alpha$. 
In particular, we may apply Lemma~\ref{lem:conv} to $w_n(\alpha)=1d_1\cdots d_{n}1$ and $w_{n+1}(\alpha)=1d_1\cdots d_{n}d_{n+1}1$ with $u=d_1\cdots d_{n}$, $v=1$ and $v'=d_{n+1}1$.

\begin{definition}\label{def:22}
    Let $\alpha$ be a real number in $[0,1)$. 
    Define the sequence $(e_n(\alpha)))_{n\ge 1}$ where $$e_n(\alpha):=\val_2(w_n(\alpha))=2^{n+1}+2\lfloor \alpha 2^n \rfloor+1.$$
\end{definition}
Note that $e_n(\alpha)$ only takes odd integer values in $[2^{n+1}+1,2^{n+2}-1]$ and 
\begin{equation}
    \label{eq:relposalpha}
    \relpos_2 (e_n(\alpha))=\frac{\lfloor \alpha 2^n \rfloor}{2^n}+\frac{1}{2^{n+1}}\to \alpha
\end{equation}
as $n$ tends to infinity.

\begin{definition}\label{def:3dec2}
Let $\alpha$ be a real number in $[0,1)$. 
We consider the sequence of finite words $(\mathsf{3dec}(A(e_n(\alpha))))_{n\ge 1}$. 
Thanks to Lemma~\ref{lem:conv}, this sequence of finite words converges to an infinite sequence of integers denoted by
$$\mathbf{a}(\alpha)=a_0(\alpha)\, a_1(\alpha)\, \cdots.$$
\end{definition}

\begin{example} Take $\alpha=\pi-3$. 
The sequence $(w_n(\alpha))_{n\ge 1}$ converges to $$10010010000111111011\cdots.$$
$$
    \begin{array}{c|c||ccccccccccc}
n & e_n(\alpha) & a_0 & a_1 & a_2 & a_3 &\cdots \\
\hline
 1 & 5 & 2 & 7 &  &  &  &  &  &  &  &  &
    \\
 2 & 9 & 2 & 2 & 8 &  &  &  &  &  &  &  &
    \\
 3 & 19 & 2 & 6 & -2 & 5 &  &  &  &  &  &  & 
   \\
 4 & 37 & 2 & 6 & -6 & 6 & 15 &  &  &  &  &  &  \\
 5 & 73 & 2 & 6 & -6 & 2 & 8 & 31 &  &  &  &  &  \\
 6 & 147 & 2 & 6 & -6 & 2 & 24 & -8 & 14 &  &  &  &  \\
 7 & 293 & 2 & 6 & -6 & 2 & 24 & -24 & 22 & 53 &  &  &  \\
 8 & 585 & 2 & 6 & -6 & 2 & 24 & -24 & 6 & 30 & 116 &  &  \\
 9 & 1169 & 2 & 6 & -6 & 2 & 24 & -24 & 6 & 30 & 30 & 131 &  \\
 10 & 2337 & 2 & 6 & -6 & 2 & 24 & -24 & 6 & 30 & 30 & 30 & 146 \\
\end{array}$$
Hence the first terms of the sequence $\mathbf{a}(\alpha)$ are $2,6,-6,2,24,-24,6,30$. 
At each step, all coefficients are fixed except for the last two ones (see Lemma~\ref{lem:conv}). 
\end{example}

\subsection{Definition of the function $\Phi$}

We will first introduce an auxiliary function $\Phi(\alpha)$, for $\alpha\in[0,1)$, defined as the limit of a converging sequence of step functions built on the $3$-decomposition of $A(e_n(\alpha))$. 
For all $n\ge 1$, let $\phi_n$ be the function defined by 
$$
\phi_n(\alpha) = \frac{A(e_n(\alpha))}{3^{\log_2(e_n(\alpha))}} \quad \text{for } \alpha\in[0,1). 
$$

\begin{proposition}\label{pro:limexi}
The sequence $(\phi_n)_{n\ge 1}$ uniformly converges to the function $\Phi$ defined, for $\alpha\in[0,1)$, by
$$
\Phi(\alpha) = \begin{cases} \dfrac{1}{3^{1+\log_2(\alpha+1)}} \sum\limits_{i=0}^{+\infty} \dfrac{a_i(\alpha)}{3^{i}}, & \text{if } \alpha < 1/2; \\
\dfrac{1}{3^{\log_2(\alpha+1)}} \sum\limits_{i=0}^{+\infty} \dfrac{a_i(\alpha)}{3^{i}}, & \text{if } \alpha \ge 1/2.
\end{cases}
$$
\end{proposition}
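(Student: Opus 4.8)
The plan is to split $\phi_n$ into an elementary normalizing factor times a partial sum of the series, prove uniform convergence of $(\phi_n)$ by a telescoping argument that never touches the individual coefficients $a_i$, and only at the very end match the limit with the series in the statement.

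First I would record the value of $\ell_2(e_n(\alpha))$. Since $e_n(\alpha)=2^{n+1}+2\lfloor\alpha 2^n\rfloor+1$ is odd and lies in $[2^{n+1}+1,2^{n+2}-1]$, writing $e_n(\alpha)=2^{n+1}+r$ with $r=2\lfloor\alpha 2^n\rfloor+1$ one checks that $r<2^{n}$ when $\alpha<1/2$ and $r>2^{n}$ when $\alpha\ge 1/2$; hence, by Definition~\ref{def:3dec1}, $\ell_2(e_n(\alpha))=n$ in the first case and $\ell_2(e_n(\alpha))=n+1$ in the second. Consequently
$$\phi_n(\alpha)=3^{\ell_2(e_n(\alpha))-\log_2 e_n(\alpha)}\,S_n(\alpha),\qquad S_n(\alpha):=\sum_{i=0}^{\ell_2(e_n(\alpha))}\frac{a_i(e_n(\alpha))}{3^i}=\frac{A(e_n(\alpha))}{3^{\ell_2(e_n(\alpha))}}.$$
Using $\log_2 e_n(\alpha)=\lfloor\log_2 e_n(\alpha)\rfloor+\log_2(1+\relpos_2(e_n(\alpha)))$ and \eqref{eq:relposalpha}, the prefactor equals $3^{-\log_2(1+\relpos_2(e_n(\alpha)))}$ when $\alpha\ge 1/2$ and $3^{-1-\log_2(1+\relpos_2(e_n(\alpha)))}$ when $\alpha<1/2$; since $|\relpos_2(e_n(\alpha))-\alpha|\le 3\cdot 2^{-n-1}$ uniformly in $\alpha$, these prefactors converge uniformly to $3^{-\log_2(1+\alpha)}$, respectively $3^{-1-\log_2(1+\alpha)}$, which are exactly the factors appearing in $\Phi$. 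So the statement reduces to proving that $(\phi_n)$ converges uniformly and that $\lim_n S_n(\alpha)=\sum_{i\ge 0}a_i(\alpha)/3^i$.

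For the uniform convergence I would avoid the coefficients altogether and use the recursive structure of $(A(N))$. Setting $m_n=e_n(\alpha)$, a direct computation with $\lfloor 2x\rfloor\in\{2\lfloor x\rfloor,2\lfloor x\rfloor+1\}$ gives $m_{n+1}=2m_n+\varepsilon_n$ with $\varepsilon_n\in\{-1,+1\}$. Since $A(2m_n)=3A(m_n)$ by Corollary~\ref{cor:32n} and $A(N)-A(N-1)=\mathsf{s}(N)$, we obtain $A(m_{n+1})=3A(m_n)+\delta_n$ where $\delta_n\in\{\mathsf{s}(2m_n+1),-\mathsf{s}(2m_n)\}$, so $|\delta_n|\le 2(2m_n+1)\le 2^{n+4}$ by Corollary~\ref{cor:majs}. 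Writing $\log_2 m_{n+1}=1+\log_2 m_n+\log_2(1+\varepsilon_n/(2m_n))$, this yields $\phi_{n+1}=(\phi_n+\eta_n)(1+\xi_n)$ with $|\eta_n|=\tfrac13|\delta_n|\,3^{-\log_2 m_n}\le \tfrac{16}{9}(2/3)^n$ (using $3^{\log_2 m_n}\ge 3^{n+1}$) and $|\xi_n|=O(2^{-n})$, both bounds uniform in $\alpha$. A standard Grönwall-type estimate then shows $(\phi_n)$ is uniformly bounded, and feeding this back gives $|\phi_{n+1}-\phi_n|\le C\,(2/3)^n$ uniformly; hence $(\phi_n)$ is uniformly Cauchy and converges uniformly to some $\Phi$. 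In particular $S_n(\alpha)=\phi_n(\alpha)/(\text{prefactor})$ converges (the prefactor is bounded away from $0$), say to $S(\alpha)$, and $\Phi$ has the claimed product form with $S(\alpha)$ in place of the series.

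The remaining, and I expect hardest, point is the identification $S(\alpha)=\sum_{i\ge 0}a_i(\alpha)/3^i$. By Lemma~\ref{lem:conv} one has $a_i(m_n)=a_i(\alpha)$ for $i\le n-2$, so $S_n(\alpha)=\sum_{i=0}^{n-2}a_i(\alpha)/3^i+g_n$ with $g_n=\sum_{i=n-1}^{\ell_2(m_n)}a_i(m_n)/3^i$; since $\ell_2(m_n)\in\{n,n+1\}$, the remainder $g_n$ consists of at most three terms, all of index $\ge n-1$. Thus both the convergence of the defining series and $g_n\to 0$ (hence $S(\alpha)=\sum_{i\ge0}a_i(\alpha)/3^i$) follow once one controls the size of the coefficients, e.g.\ a bound of the form $|a_i(\alpha)|\le C\rho^i$ with $\rho<3$ together with an $o(3^{n})$ bound on the trailing coefficients $a_{n-1}(m_n),a_n(m_n),a_{n+1}(m_n)$ of $A(m_n)$. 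I would establish such a bound by induction on the length of $\rep_2$ of the argument, using Lemma~\ref{lem:recurrence} to write $A$ of a length-$(L+1)$ word as a bounded leading term plus the superposition of the $3$-decompositions of $A$ at two shorter arguments (the ``main'' and ``side'' branches of Remark~\ref{rem:cases}), the side branch always landing at strictly higher indices. The delicate case is when the main branch is aligned (same $\ell_2$) with the original argument, where a naive triangle-inequality bound loses a constant factor; handling this—by strengthening the inductive hypothesis, or by invoking the linear representation of the $2$-regular sequence $(\mathsf{s}(n))$, whose relevant growth rate is smaller than $3$—is the main obstacle of the proof.
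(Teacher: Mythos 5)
Your first two steps are sound and the second is genuinely different from the paper. The reduction of the prefactor to $3^{-\log_2(1+\relpos_2(e_n(\alpha)))}$ (with the extra $3^{-1}$ when $\alpha<1/2$) and its uniform convergence is exactly what the paper does. But your route to uniform convergence of $(\phi_n)$ is not: the paper obtains it by combining prefix stability (Lemma~\ref{lem:conv}) with an a priori coefficient bound, whereas you exploit the recursions $e_{n+1}(\alpha)=2e_n(\alpha)\pm 1$, $A(2m)=3A(m)$ (Corollary~\ref{cor:32n}) and $\mathsf{s}(m)\le 2m$ (Corollary~\ref{cor:majs}) to get $\phi_{n+1}=(\phi_n+\eta_n)(1+\xi_n)$ with $|\eta_n|=O((2/3)^n)$, $|\xi_n|=O(2^{-n})$ uniformly, hence a uniformly Cauchy sequence. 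This telescoping argument is correct, elementary, and completely bypasses the $3$-decomposition; it would be a legitimate alternative way to establish existence of a uniform limit.

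The genuine gap is the third step, and you have named it yourself: identifying the limit with the series $\sum_{i\ge 0}a_i(\alpha)/3^i$ requires a growth bound on the coefficients, and your proposal neither proves one nor can the proposition be considered proved without it --- the statement's formula is not even known to make sense (convergence of the series) otherwise, and the circularity you implicitly flag ($g_n\to 0$ versus convergence of the series, given only $S_n\to S$) is real. This missing piece is precisely the paper's Lemma~\ref{lem:estimate}: $|a_i(n)|\le 10\cdot 2^i$ and hence $|a_i(\alpha)|\le 10\cdot 2^i$. Note also that the paper's proof of that bound avoids the obstacle you worry about (the "aligned" case where a length-based induction loses a constant factor): instead of inducting on the length of $\rep_2$ and bounding each $a_i$ separately, it tracks through the iterated applications of Lemma~\ref{lem:recurrence} the total $\ell^1$ mass of the coefficients of all "active" terms $A(2^p+r)$, observing that each application replaces one such term by at most two others, so the mass at most doubles per iteration and is at most $2^i$ by the time one reaches the scales $\ell_2(n)-i$ and $\ell_2(n)-i+1$; only terms at those two scales can contribute to $a_i$, each with bounded weight ($4$ or $6$), giving $|a_i(n)|\le 10\cdot 2^i$. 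With that lemma cited (it is stated and proved in the paper independently of the proposition), your argument closes: the tail $g_n$ has at most three terms of index $\ge n-1$, each $O((2/3)^n)$, and the truncated sums converge uniformly to the series. As written, however, the proposal leaves the decisive estimate unproved.
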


\begin{remark}
If the reader wonders about the difference between the exponents in the definition of $\Phi$, observe that, if $\alpha$ tends to $(1/2)^+$, then the $3$-decomposition of $\alpha$ converges to $(z_n)_{n\ge 0}= 4, -6, -2, 4, 4, 4, \ldots$ and $\sum_{i=0}^{+\infty} (z_i/3^i)=2$. 
If $\alpha$ tends to $(1/2)^-$, then the $3$-decomposition of $\alpha$ converges to $(z'_n)_{n\ge 0}= 6, 2, -4, -4, -4, \ldots$ and $\sum_{i=0}^{+\infty} (z'_i/3^i)=6$. 
The continuity of $\Phi$ will be discussed in the proof of Theorem~\ref{thm:convergence}.
\end{remark}

To visualize the uniform convergence stated in Proposition~\ref{pro:limexi}, we have depicted the first functions $\phi_2,\ldots,\phi_9$ in Figure~\ref{fig:phin}. 
For instance, $e_2(\alpha)\in\{9,11,13,15\}$ explaining the four subintervals defining the step function $\phi_2$.
\begin{figure}[h!tb]
    \centering
    \scalebox{.29}{\includegraphics{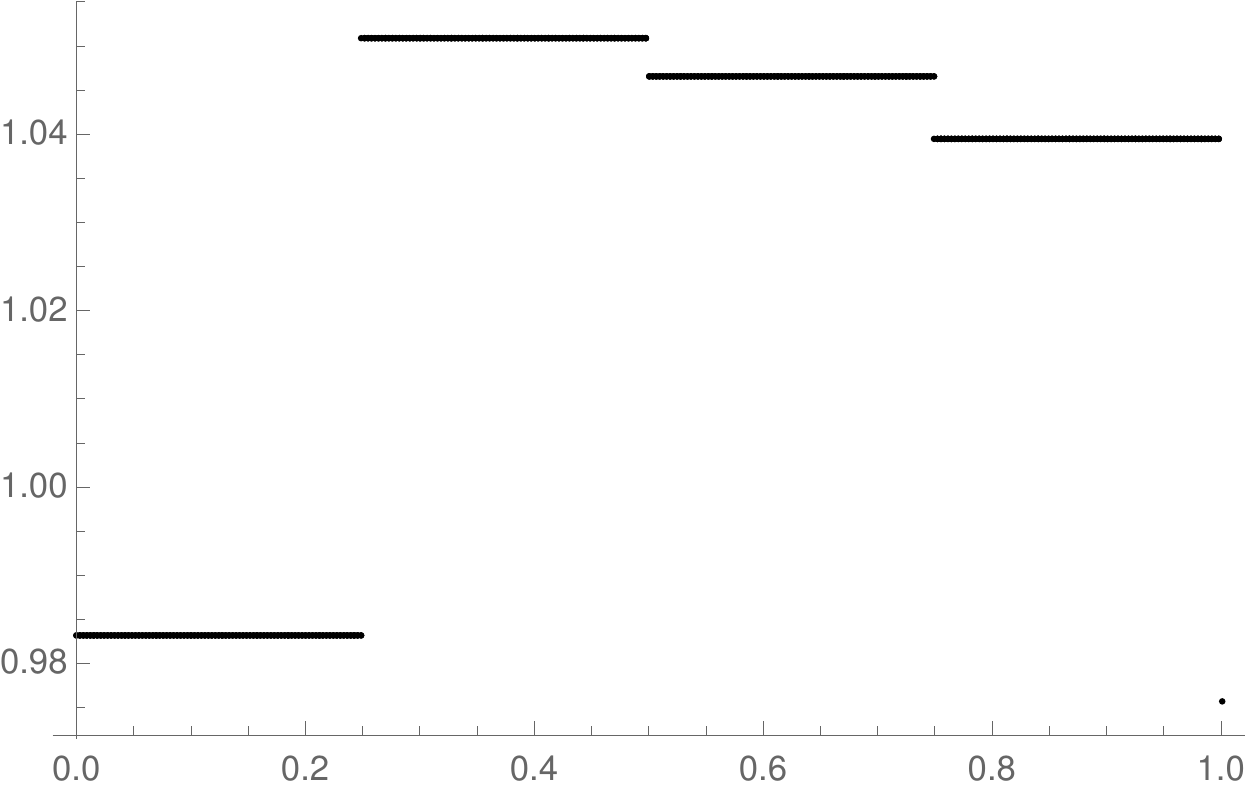}}\quad \scalebox{.29}{\includegraphics{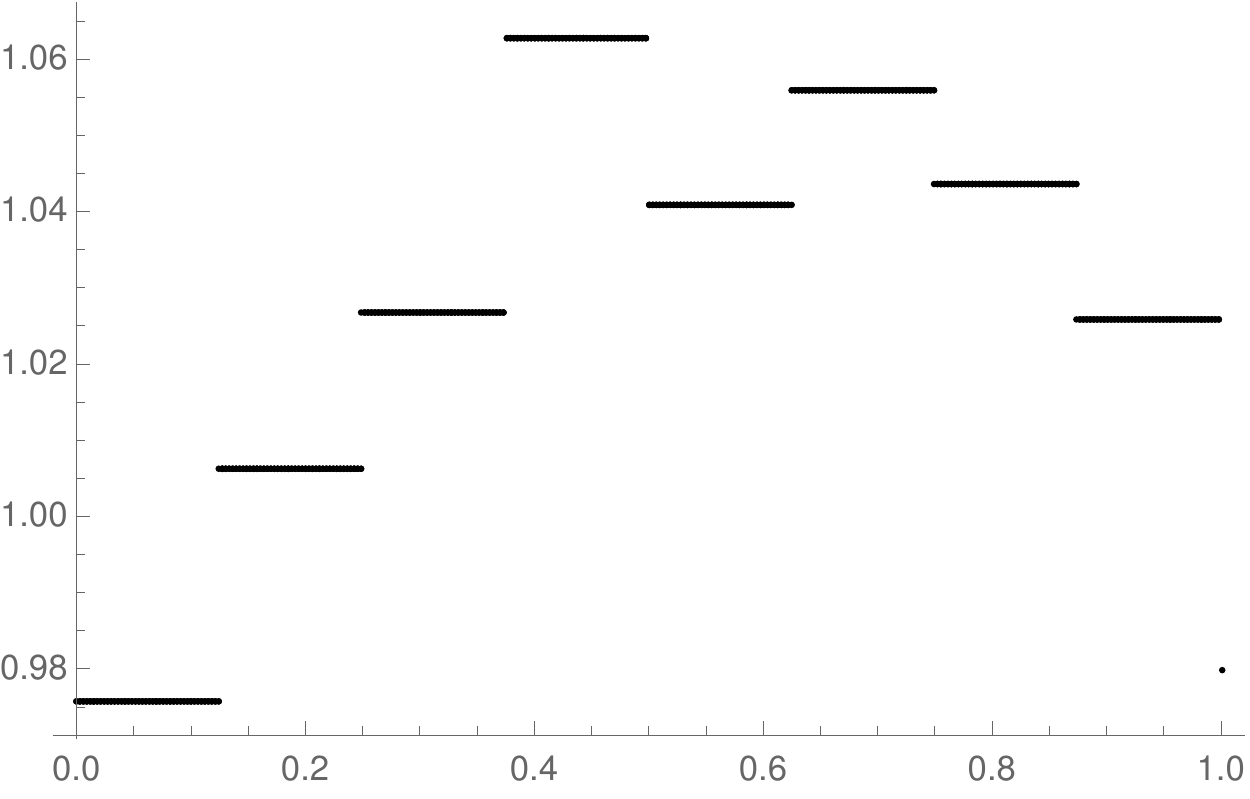}}\quad \scalebox{.29}{\includegraphics{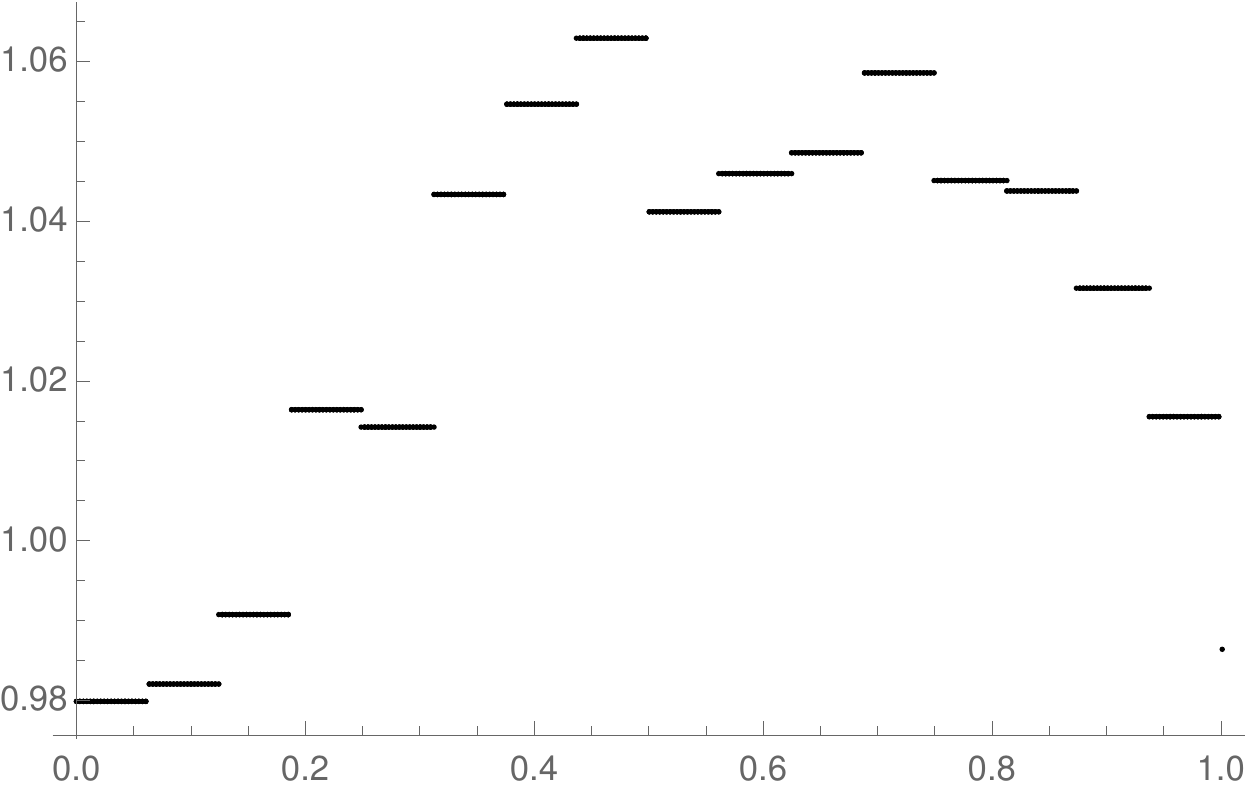}}\\

\scalebox{.29}{\includegraphics{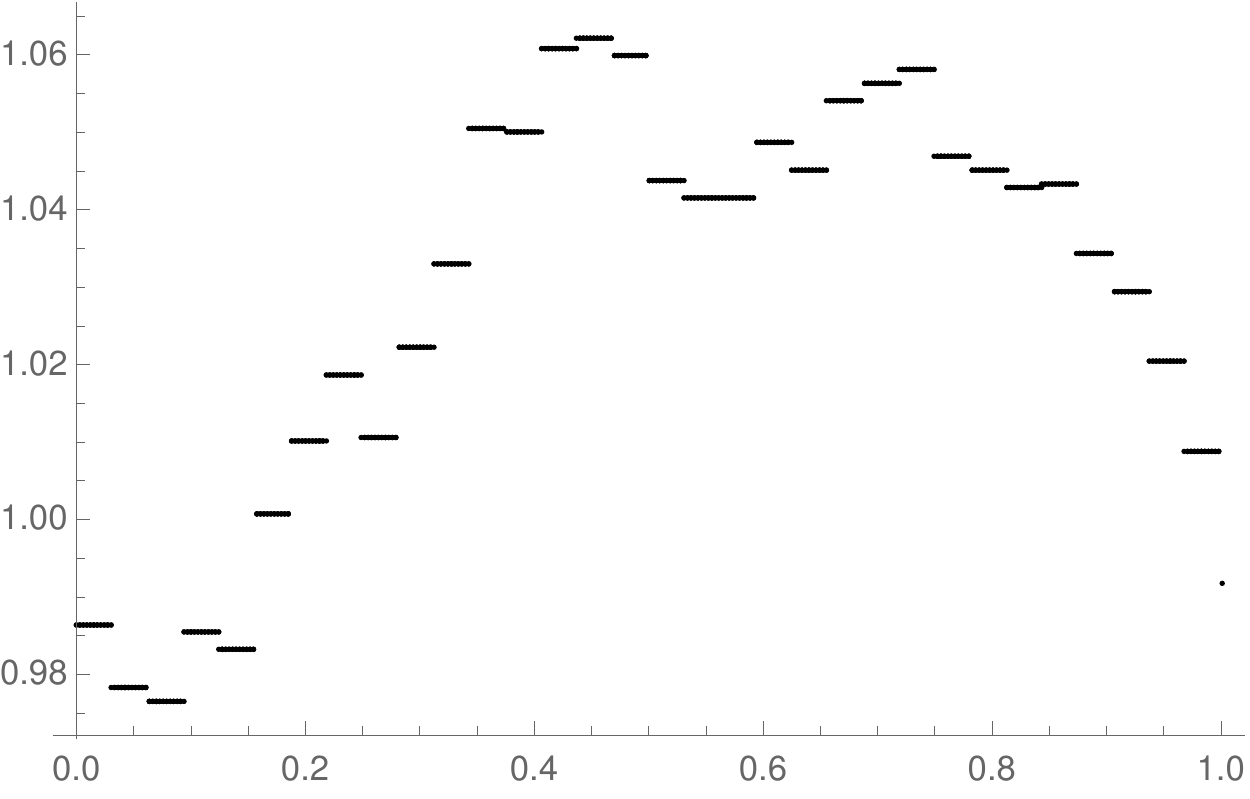}}\quad \scalebox{.29}{\includegraphics{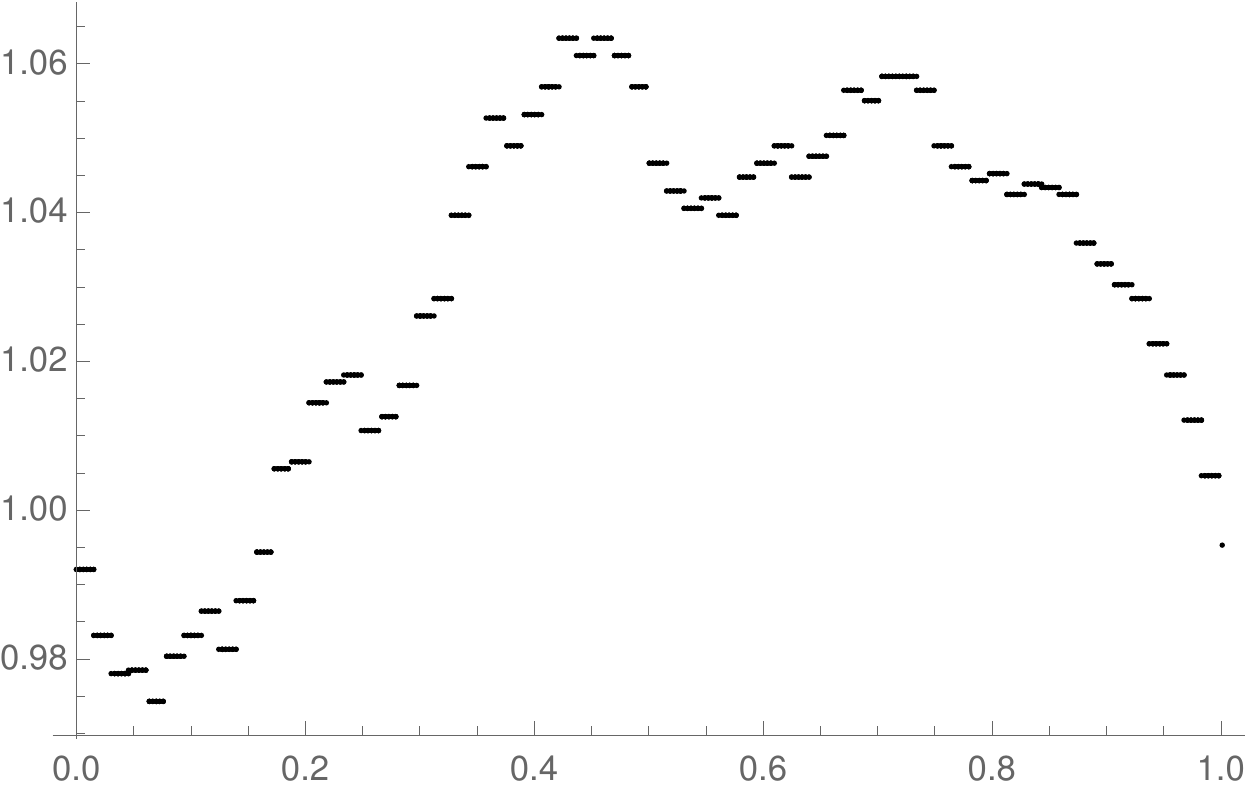}}\quad \scalebox{.29}{\includegraphics{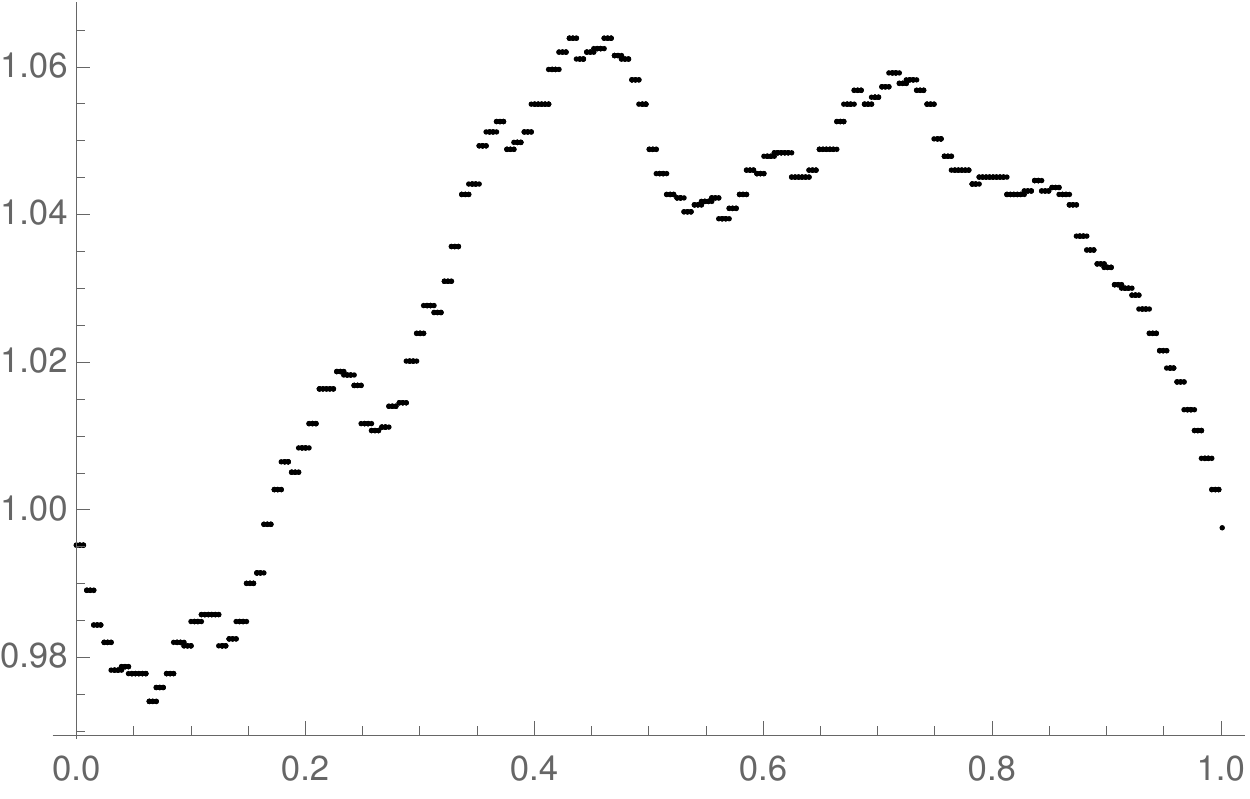}}\\

\scalebox{.29}{\includegraphics{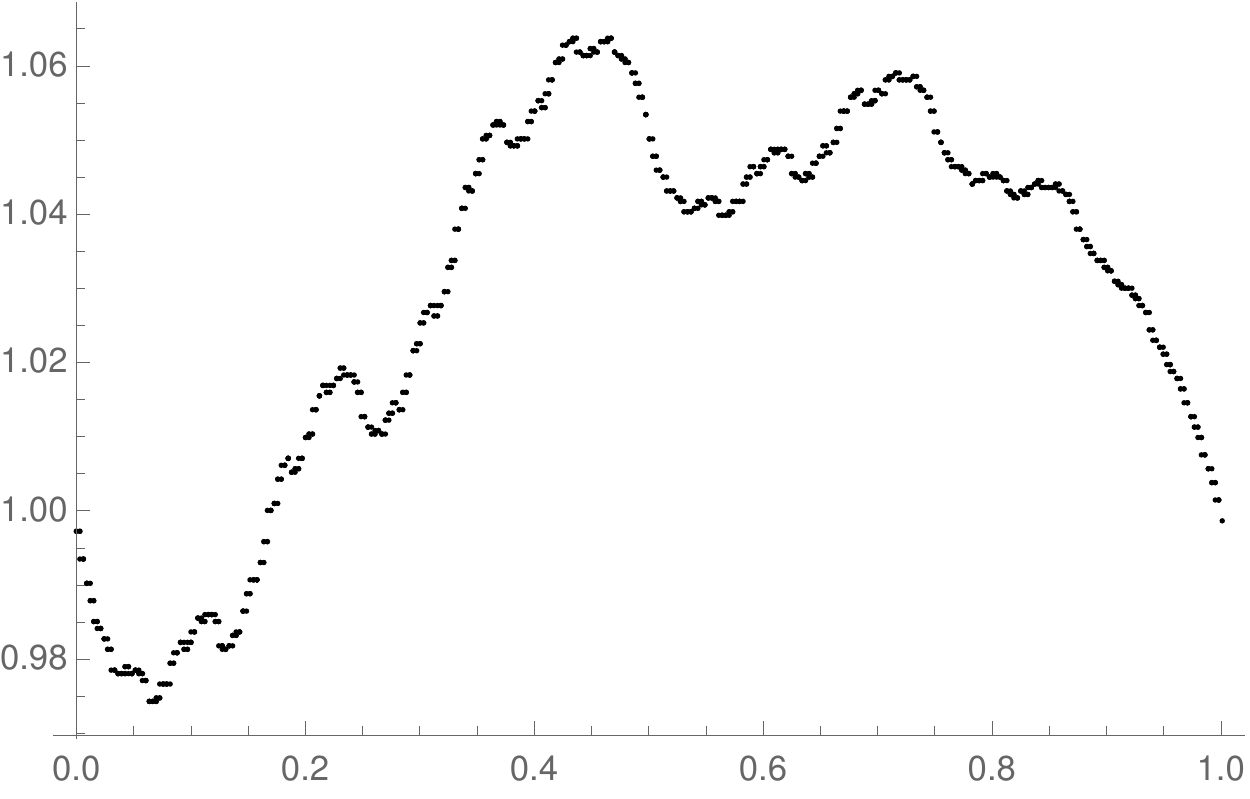}}\quad \scalebox{.29}{\includegraphics{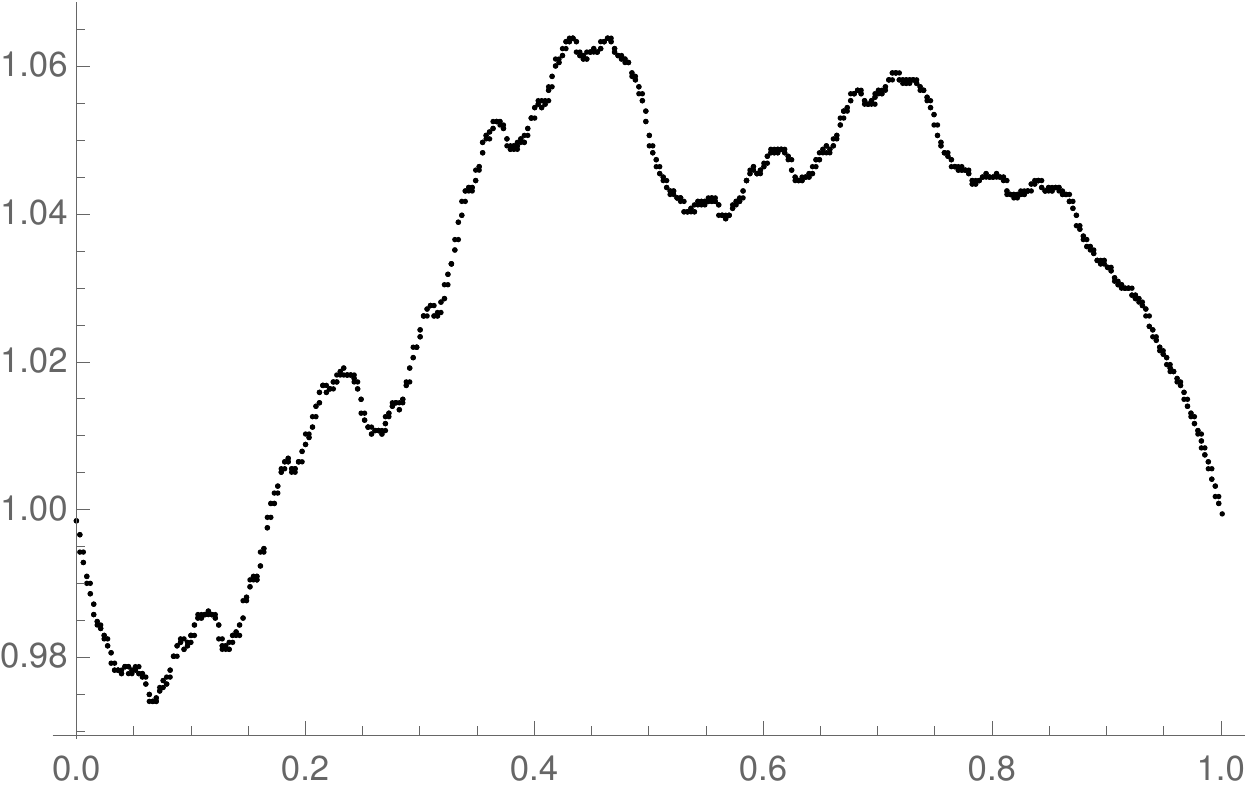}}
    \caption{Representation of $\phi_2,\ldots,\phi_9$.}
    \label{fig:phin}
\end{figure}

To ensure convergence of the series that we will encounter, we need some very rough estimate on the coefficients occurring in $\mathsf{3dec}(A(e_n(\alpha)))$.

\begin{lemma}\label{lem:estimate}
Using notation of Definitions \ref{def:3dec1} and \ref{def:3dec2}, for all $n\ge 2$ and for $0\le i\le \ell_2(n)$, we have 
$$
|a_i(n)|\le 10 \cdot 2^i.
$$
In particular, for all $\alpha \in [0,1)$ and all $i\ge 0$, we have
$$
|a_i (\alpha)|\le 10\cdot 2^i.
$$
\end{lemma}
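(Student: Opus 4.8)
The plan is to prove the bound by induction on $n$ (equivalently, on the length of $\rep_2(n)$), using the recurrence of Lemma~\ref{lem:recurrence} to track how coefficients of a $3$-decomposition propagate when we pass from $A(n)$ to the $A$-values of the shorter integers appearing in the recurrence. The crucial observation is the structure described in Remark~\ref{rem:cases}: a single application of Lemma~\ref{lem:recurrence} to $A(n)$ with $\ell_2(n) = \ell$ produces a leading term that is $2\cdot 3^{\ell-1}$ (first case) or $4\cdot 3^\ell - 2\cdot 3^{\ell-1}$ (second case), together with two sub-evaluations $A(m_1)$ and $A(m_2)$ whose binary expansions are strictly shorter. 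When we assemble the full $3$-decomposition, the coefficient $a_i(n)$ of $3^{\ell_2(n)-i}$ is obtained by adding the contributions of these sub-decompositions (suitably shifted in index) to the explicit leading contribution. So I would first make precise, as a small combinatorial claim, exactly how $a_i(n)$ decomposes into a bounded ``new'' part plus the relevant coefficients $a_{\bullet}(m_1)$ and $a_{\bullet}(m_2)$ inherited from the recursion.

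Concretely, the induction hypothesis is $|a_j(m)| \le 10\cdot 2^j$ for all relevant $j$ whenever $\rep_2(m)$ is shorter than $\rep_2(n)$. For the inductive step I would split according to the two cases of Lemma~\ref{lem:recurrence}. In the first case $A(2^\ell + r) = 2\cdot 3^{\ell-1} + A(2^{\ell-1}+r) + A(r)$, the index $i$ of $3^{\ell-1-i}$ in $A(2^{\ell-1}+r)$ aligns with a certain coefficient of $A(n)$, and similarly for the (shorter) term $A(r)$, so that each $a_i(n)$ is a sum of at most the explicit leading constant plus two previously-controlled coefficients. Bounding crudely, $|a_i(n)| \le (\text{leading contribution at that index}) + |a_{i'}(2^{\ell-1}+r)| + |a_{i''}(r)|$, and since the two recursive terms have one fewer power of $3$ in their top index, their coefficients at the relevant shifted index are bounded by $10\cdot 2^{i-1}$ each; together with the bounded leading term this must be arranged to stay below $10\cdot 2^i$. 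The second case is handled identically, the only change being that the leading constant is $4\cdot 3^\ell - 2\cdot 3^{\ell-1}$, contributing $4$ and $-2$ (hence absolute value at most $6$) to the top two coefficients. I would verify the base cases $n=2,3$ directly from Table~\ref{tab:Adecomp} (where $a_0 \in \{3,6\}$ and indeed $|a_0| \le 10$), and also check the degenerate situation of Remark~\ref{rem:cases}(iii) where the binary length does not immediately drop, handling it by grouping the two successive applications of the lemma so that the effective recursion still strictly shortens.

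The main obstacle is getting the arithmetic of the bound to close cleanly. Since $10\cdot 2^i = 2\cdot(10\cdot 2^{i-1}) + 0$, two inherited coefficients each bounded by $10\cdot 2^{i-1}$ already saturate the target bound $10\cdot 2^i$ exactly, leaving no room for the additive leading constant (which can be as large as $6$ in absolute value, from the $4$ and $-2$ of the second case, or $2$ in the first case). This means a naive triangle-inequality estimate does not immediately work, and I expect the real content of the proof to be a more careful bookkeeping: the leading constant contributes only to the top one or two indices $i$, where at most one inherited term is present rather than two, so at those indices the bound is $6 + 10 \le 10\cdot 2^i$ comfortably for $i\ge 1$; and at deeper indices where both recursive terms contribute, there is no additive constant, so $2\cdot(10\cdot 2^{i-1}) = 10\cdot 2^i$ holds with equality. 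The constant $10$ is presumably chosen precisely to make both the shallow indices (where the constant appears) and the deep indices (where the doubling appears) fit simultaneously, so the delicate part is confirming that no single index $i$ is forced to absorb both a large additive constant \emph{and} two doubled inherited terms at once. Once that index-by-index separation is justified, the induction goes through, and the statement for $a_i(\alpha)$ follows immediately by taking the limit in Definition~\ref{def:3dec2}, since each $a_i(\alpha)$ equals $a_i(e_n(\alpha))$ for all large $n$ and the bound is uniform in $n$.
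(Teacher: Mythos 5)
There is a genuine gap, and it is located exactly where you suspected the "delicate part" was — but the difficulty is worse than you describe, and the resolution you sketch is not available. Your bookkeeping assumes that the two inherited terms produced by Lemma~\ref{lem:recurrence} always enter the $3$-decomposition of $A(n)$ with an index shift of at least $1$ ("the two recursive terms have one fewer power of $3$ in their top index"). This is false. In the first case of Lemma~\ref{lem:recurrence}, $A(2^\ell+r)=2\cdot 3^{\ell-1}+A(2^{\ell-1}+r)+A(r)$ with $0\le r\le 2^{\ell-1}$, if moreover $2^{\ell-2}<r<2^{\ell-1}$ (i.e.\ $\rep_2(n)$ begins with $101$), then $\ell_2(2^{\ell-1}+r)=\ell-1=\ell_2(n)$: the sub-term has the \emph{same} top power of $3$ as $A(n)$, so its coefficients align with zero shift. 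Concretely, for $n=11$ one has $A(11)=2\cdot 3^2+A(7)+A(3)$ with $\ell_2(7)=\ell_2(11)=2$, whence $a_0(11)=2+a_0(7)$, $a_1(11)=a_1(7)$ and $a_2(11)=a_2(7)+a_0(3)$. At index $i=2$ your induction hypothesis only gives $|a_2(7)|+|a_0(3)|\le 10\cdot 2^2+10\cdot 2^0=50$, which exceeds the target $10\cdot 2^2=40$; so the induction does not close. Nor can you repair it by the grouping trick you propose for Remark~\ref{rem:cases}(iii): unfolding $A(2^{\ell-1}+r)$ one more time replaces the unshifted term by \emph{three} sub-terms with shifts $1,1,2$, and $10\cdot 2^{i-1}+10\cdot 2^{i-1}+10\cdot 2^{i-2}=12.5\cdot 2^i$ still exceeds $10\cdot 2^i$. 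A per-index induction on the single quantity $|a_i(n)|\le 10\cdot 2^i$ is simply too weak an invariant: it forgets how the sub-coefficient was itself assembled, which is exactly the information needed.

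The paper's proof keeps precisely that information, and is structurally different from yours: it never bounds coefficients of sub-decompositions recursively. Instead it tracks, through iterated applications of Lemma~\ref{lem:recurrence}, the $\ell_1$-norm of the coefficient vector of the current linear combination of active terms $A(2^p+\cdot)$. Each application replaces a term by at most two new terms with coefficients $\pm 1$ at strictly lower levels (your case (iii) and my $101$-example are absorbed by grouping the forced second application, which yields the combined contribution $6\cdot 3^{\ell_2(n)-i}$), so the norm at most doubles per round; since at most $i$ rounds are needed before the only terms that can still feed $a_i(n)$ are those at levels $\ell_2(n)-i+1$ and $\ell_2(n)-i$, the total weight of such terms is at most $2^i$, and each unit of weight contributes at most $6$ (first form, counting the forced second application) respectively $4$ (second form) to $a_i(n)$. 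This multiplicative weight bookkeeping gives $|a_i(n)|\le (6+4)\cdot 2^i$ — the constant $10$ arises as $6+4$, not as slack designed to absorb an additive constant on top of two doubled terms, which is what your arithmetic needed. Your final limiting step (passing from $a_i(e_n(\alpha))$ to $a_i(\alpha)$ via Lemma~\ref{lem:conv}) is fine and matches the paper, but the core estimate needs the paper's global argument or an equivalently strengthened induction hypothesis.
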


\begin{proof}
Let us write $n = 2^\ell + r$ with $\ell \geq 1$ and $0 \leq r < 2^\ell$. 
Using Definition~\ref{def:3dec1}, let us write
$$
A(n)=\sum_{j=0}^{\ell_2(n)} a_j(n)\, 3^{\ell_2(n)-j}
$$
where $a_j(n)$ are integers, $a_{0}(n)\neq 0$. 
Observe that we have $\ell_2(n) \in \{\ell,\ell-1\}$.
Let us fix some $i \in \{0,1,\dots,\ell_2(n)\}$.
By Lemma~\ref{lem:recurrence}, terms of the form 
\begin{equation}\label{eq:form}
\begin{array}{ll}
A(2^{\ell_2(n)-i+1}+r') & \text{where } r'\in\{0,\ldots,2^{\ell_2(n)-i}\}, \text{ or }\\
A(2^{\ell_2(n)-i}+r'') & \text{where } r''\in\{2^{\ell_2(n)-i-1}+1,\ldots, 2^{\ell_2(n)-i}-1\}
\end{array}
\end{equation}
are the only ones possibly contributing to $a_i(n)$. Those of the first (resp., second) form yield $2\cdot 3^{\ell_2(n)-i}$ (resp., $4\cdot 3^{\ell_2(n)-i}$). 
Observe that for a term $A(2^{\ell_2(n)-i+1}+r')$ of the first form with $2^{\ell_2(n)-i-1} < r' \leq 2^{\ell_2(n)-i}$, a second application of Lemma~\ref{lem:recurrence} gives, in addition to $2\cdot 3^{\ell_2(n)-i}$, the term $A(2^{\ell_2(n)-i}+r')$, which is of the second form.
Together, these terms give $6 \cdot 3^{\ell_2(n)-i}$.
 
Our aim is now to understand, starting from $A(2^\ell +r)$, how the successive applications of Lemma~\ref{lem:recurrence} lead to terms of the form~\eqref{eq:form}.
Observe that the successive applications of the lemma can give terms of the form $A(2^p + r')$ where $r'$ can take several values for a given value of $p$. 
This is the reason why we consider a second index $q$ in the sum below.

Let us describe a transformation process starting from a linear combination of the form 
\[
	\sum_{\substack{0 \leq p \leq k \\ 0 \leq q \leq s_p}} 
	x_{p,q} A(2^p+r_{p,q}),
\]
where $k > \ell_2(n)-i+1$ and, for all $p$ and $q$, $s_p \in \mathbb{N}$, $x_{p,q} \in \mathbb{Z}$ and $r_{p,q} \in \{0,1,\dots,2^p-1\}$.
Applying Lemma~\ref{lem:recurrence} to every term of the form $A(2^p+r_{p,q})$ with $p< \ell_2(n)-i$ will provide terms of the form $A(2^{p'}+r')$ with $p' \leq p$ and $r' < 2^{p'}$.
Hence these terms are not of the form~\eqref{eq:form} and thus will never contribute to $a_i(n)$.
Applying the lemma to every term of the form $A(2^p+r_{p,q})$ with $p> \ell_2(n)-i+1$ gives a linear combination of $3^p$ and $3^{p-1}$ together with a linear combination of the form $x_1 A(2^{p_1}+r_{p_1}) + x_2 A(2^{p_2}+r_{p_2})$ with $p_1 < p_2 \leq p$ and $x_1,x_2 \in \{-1,1\}$.
Observe that $p_2=p$ if and only if $r_{p,q} = 2^{p-1}$.
In this case we get $p_1 = p-1$, $r_1 = r_2 = 0$ and the terms $A(2^p) = 3^p$ and $A(2^{p-1}) = 3^{p-1}$.
Therefore, applying Lemma~\ref{lem:recurrence} to all terms of the form $A(2^p+r_{p,q})$ with $p>\ell_2(n)-i+1$ gives a linear combination of the form
\[
	\sum_{j= \ell_2(n)-i+1}^{k} y_j 3^j 
	+
	\sum_{\substack{0 \leq p < k \\ 0 \leq q \leq t_p}} 
	y_{p,q} A(2^p+r'_{p,q}),
\]
where for all $j$, $y_{j} \in \mathbb{Z}$ and for all $p$ and $q$, $t_p \in \mathbb{N}$, $y_{p,q} \in \mathbb{Z}$ and $r'_{p,q} \in \{0,1,\dots,2^p-1\}$ and where 
\[
	\sum_{\substack{0 \leq p < k \\ 0 \leq q \leq t_p}} |y_{p,q}|
	\leq 
	2 \sum_{\substack{0 \leq p \leq k \\ 0 \leq q \leq s_p}} |x_{p,q}|.
\] 
So we get some information about how behave the coefficients when applying once the transformation process. 
Starting from the particular combination $1\cdot A(2^\ell+r)$ and iterating this process $\ell - \ell_2(n)+i-1$ times, we thus obtain a linear combination of the form 
\[
	\sum_{j= \ell_2(n)-i+1}^{\ell_2(n)} y_j 3^j 
	+
	\sum_{\substack{0 \leq p \leq \ell_2(n)-i+1 \\ 0 \leq q \leq t_p}} 
	y_{p,q} A(2^p+r'_{p,q}),
\]
where 
\[
	\sum_{\substack{0 \leq p \leq \ell_2(n)-i+1 \\ 0 \leq q \leq t_p}} |y_{p,q}|
	\leq 2^{\ell - \ell_2(n) +i-1}  \cdot 1 
	\leq 2^{i}.
\] 
We conclude by observing that 
\begin{eqnarray*}
	|a_i(n)| 
	& \leq & 
	6 \sum_{0 \leq q \leq t_{\ell_2(n)-i+1}} |y_{\ell_2(n)-i+1,q}|
	+
	4 \sum_{0 \leq q \leq t_{\ell_2(n)-i}} |y_{\ell_2(n)-i,q}|	\\
	& \leq & 
	10 \cdot 2^i.
\end{eqnarray*}
Their combination yields a term $6\cdot 3^{\ell-1}$.  
\end{proof}

\begin{remark}
With a deeper analysis, one could probably refine the above lemma (even though this is not required for what remains). 
Let $(F(n))_{n\ge 0}$ be the Fibonacci sequence where $F(0)=1$, $F(1)=2$ and $F(n+2)=F(n+1)+F(n)$ for all $n\ge 0$. 
For all $\alpha\in[0,1]$ and all $i\ge 1$, we claim that $|a_i(\alpha)|\le 6\, F(i-1)$. 
The equality holds for $\alpha=1/3$. 
In this case, the sequence $(w_n(\alpha))_{n\ge 1}$ converges to $(10)^\omega$.
\end{remark}

\begin{proof}[Proof of Proposition~\ref{pro:limexi}.]
Using the $3$-decomposition of $A(e_n(\alpha))$, we have
$$\phi_n(\alpha) =\frac{1}{3^{\log_2(e_n(\alpha))}}
\sum_{i=0}^{\ell_2(e_n(\alpha))} a_i(e_n(\alpha))\, 3^{\ell_2(e_n(\alpha))-i}.
$$

Note that $\log_2(e_n(\alpha))=n+1+\{\log_2(e_n(\alpha))\}$.
Moreover $\ell_2(e_n(\alpha))=n$ if $e_n(\alpha)=2^{n+1}+r$ with $0\le r\le 2^n$, or $\ell_2(e_n(\alpha))=n+1$ if $e_n(\alpha)=2^{n+1}+r$ with $2^n< r< 2^{n+1}$.

If $\alpha<1/2$, then $e_n(\alpha)=2^{n+1}+r$ with $r\le 2^n-1$.
If $\alpha\ge 1/2$, then $e_n(\alpha)=2^{n+1}+r$ with $2^n+1\le r< 2^{n+1}$. 
Consequently, if $\alpha<1/2$, we have
\begin{equation}
\label{eq:phi_n<1/2}
\phi_n(\alpha)=\frac{1}{3^{1+\{\log_2(e_n(\alpha))\}}}\sum_{i=0}^{n} \frac{a_i(e_n(\alpha))}{3^{i}}
\end{equation}
and if $\alpha\ge 1/2$, we get
\begin{equation}
\label{eq:phi_n>1/2}
\phi_n(\alpha)=\frac{1}{3^{\{\log_2(e_n(\alpha))\}}}\sum_{i=0}^{n+1} \frac{a_i(e_n(\alpha))}{3^{i}}.
\end{equation}
 
First, in both expressions, the sums are converging when $n$ tends to infinity to the series
$$\sum_{i=0}^{+\infty} \frac{a_i(\alpha)}{3^{i}}.$$
Indeed, thanks to Lemma~\ref{lem:conv}, the sequence of finite words $(\mathsf{3dec}(A(e_n(\alpha))))_{n\ge 1}$ converges to $\mathbf{a}(\alpha)$. 
Moreover, due to Lemma~\ref{lem:estimate}, the sequence of partial sums uniformly converges to the series.

By Definition~\ref{def:22}, we get
$$
\left|\frac{e_n(\alpha)}{2^{n+1}}-(\alpha+1)\right| \le \frac{3}{2^{n+1}}.$$
Thus, the sequence of functions $(e_n(\alpha)/2^{n+1})_{n\ge 1}$ uniformly converges to $(\alpha + 1)$.
Since the function $\log_2$ is uniformly continuous on $[1,+\infty[$, the sequence $(\log_2 ( e_n(\alpha)/2^{n+1}))_{n\ge 1}$ also uniformly converges to $\log_2(\alpha+1)$. Now observe that 
$$
\log_2 \left( \dfrac{e_n(\alpha)}{2^{n+1}}\right) = \lfloor \log_2(e_n(\alpha))\rfloor+\{\log_2(e_n(\alpha))\}-n-1 = \{\log_2(e_n(\alpha))\}.
$$

Let $\epsilon>0$. 
For all $\alpha \ge 1/2$, we observe, using~\eqref{eq:phi_n>1/2}, that the inequality 
\begin{align*}
| \phi_n(\alpha) - \Phi(\alpha) | &\le \left| \sum_{i=0}^{n+1} \frac{a_i(e_n(\alpha))}{3^{i}} \right| \cdot \left| \frac{1}{3^{\{\log_2(e_n(\alpha))\}}} - \frac{1}{3^{\log_2(\alpha+1)}} \right|\\ &+ \left| \frac{1}{3^{\log_2(\alpha+1)}} \right| \cdot \left| \sum_{i=0}^{n+1} \frac{a_i(e_n(\alpha))}{3^{i}} - \sum_{i=0}^{+\infty} \frac{a_i(\alpha)}{3^{i}} \right|<\epsilon
\end{align*}
is valid for $n$ large enough. 
Indeed, the first inequality comes from~\eqref{eq:phi_n>1/2}. 
For the second inequality, we know that 
$$\left| \sum_{i=0}^{n+1} \frac{a_i(e_n(\alpha))}{3^{i}} \right| < C$$
where $C$ is a positive constant. 
Moreover, the sequence of functions $(\{\log_2(e_n(\alpha))\})_{n\ge 1}$ uniformly converges to $\log_2(\alpha+1)$ and thus
$$\left| \frac{1}{3^{\{\log_2(e_n(\alpha))\}}} - \frac{1}{3^{\log_2(\alpha+1)}} \right| < \frac{\epsilon}{2C} $$
for $n$ large enough. 
Finally, 
$$\left| \sum_{i=0}^{n+1} \frac{a_i(e_n(\alpha))}{3^{i}} - \sum_{i=0}^{+\infty} \frac{a_i(\alpha)}{3^{i}} \right|< \frac{\epsilon}{2}$$
for $n$ large enough. 
One proceeds similarly with~\eqref{eq:phi_n<1/2} for the case where $\alpha < 1/2$.

\end{proof}

The function $\Phi$ defined by Proposition~\ref{pro:limexi} takes particular values over rational numbers of the form $r/2^k$ with $r<2^k$ odd. 
This lemma is the key point to get an exact formula in Theorem~\ref{thm:convergence}.

\begin{lemma}\label{lem:rationalPhi}
    Let $k\ge 1$ and $r<2^k$ be integers. 
    We have
$$A(2^k+r)=3^{\log_2(2^k+r)}\, \Phi\left(\frac{r}{2^k}\right).$$
\end{lemma}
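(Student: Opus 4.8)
The plan is to relate the function $\Phi$ evaluated at a dyadic rational $r/2^k$ directly to the value $A(2^k+r)$ by exploiting the fact that the defining sequence $(e_n(\alpha))_{n\ge 1}$ stabilizes when $\alpha$ is dyadic. The key observation is that if $\alpha = r/2^k$ with $r<2^k$, then for all $n\ge k$ we have $\lfloor \alpha 2^n\rfloor = r\,2^{n-k}$, so that $e_n(\alpha)=2^{n+1}+2r\,2^{n-k}+1 = 2^{n-k}(2^{k+1}+2r)+1$. Thus the numbers $e_n(\alpha)$ are essentially the scaled copies $2^{n-k}(2^k+r)\cdot 2 + 1$, and their base-$2$ expansions are $1\,\rep_2(r\text{ padded to length }k)\,0^{n-k}1$, i.e. they all share the long common prefix $1\,\rep_2(2^k+r)$ followed by a tail of zeros and a final $1$.

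First I would fix $\alpha=r/2^k$ and compute $\mathbf{a}(\alpha)$ explicitly. Since the $e_n(\alpha)$ are obtained from $2^k+r$ by appending zeros (and an extra $1$), I would use Corollary~\ref{cor:32n}, which states $A(2m)=3A(m)$, to track how the $3$-decomposition behaves under multiplication by $2$. Appending a zero to the base-$2$ expansion corresponds to doubling the argument, hence multiplies $A$ by $3$; this shows that the $3$-decomposition of $A(2^k+r)$ propagates to that of $A(e_n(\alpha))$ in a controlled way, with the coefficients $a_0(\alpha),\ldots$ eventually becoming constant (matching the stabilization guaranteed by Lemma~\ref{lem:conv}). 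The upshot is that the series $\sum_{i\ge 0} a_i(\alpha)/3^i$ can be evaluated in closed form in terms of the finite $3$-decomposition of $A(2^k+r)$ itself.

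Next I would plug this into the explicit formula for $\Phi$ from Proposition~\ref{pro:limexi}, being careful about the two cases $\alpha<1/2$ and $\alpha\ge 1/2$ (equivalently $r<2^{k-1}$ versus $r\ge 2^{k-1}$), which differ exactly by one power of $3$ matching the two possible values of $\ell_2$. Combining the closed-form series with the normalizing factor $3^{-\log_2(\alpha+1)}$ (or $3^{-1-\log_2(\alpha+1)}$) and using $\alpha+1=(2^k+r)/2^k$, so that $\log_2(\alpha+1)=\log_2(2^k+r)-k$, I expect the powers of $3$ and $2$ to telescope and leave precisely $A(2^k+r)=3^{\log_2(2^k+r)}\,\Phi(r/2^k)$.

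The main obstacle will be the bookkeeping of the exponents and the two-case split at $\alpha=1/2$: one must verify that the exponent of $3$ appearing in the definition of $\Phi$ (the $1+\log_2(\alpha+1)$ versus $\log_2(\alpha+1)$ discrepancy) exactly compensates for the shift in $\ell_2(e_n(\alpha))$ and for the $2^{n-k}$ scaling, so that the final power of $3$ is $\log_2(2^k+r)$ and not off by one. A clean way to sidestep much of this is to note that $\Phi(r/2^k)=\lim_n \phi_n(r/2^k)=\lim_n A(e_n(\alpha))/3^{\log_2(e_n(\alpha))}$, and then to observe that because $e_n(\alpha)$ is, up to the harmless trailing $1$, a doubling of $2^k+r$, Corollary~\ref{cor:32n} forces $A(e_n(\alpha))/3^{\log_2(e_n(\alpha))}$ to be constant in $n$ (for $n\ge k$) and equal to $A(2^k+r)/3^{\log_2(2^k+r)}$; this reduces the whole computation to controlling the effect of the single extra trailing digit, which is exactly where the $\alpha<1/2$ versus $\alpha\ge 1/2$ dichotomy and the estimate of Lemma~\ref{lem:estimate} come into play.
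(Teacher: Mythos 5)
Your closing ``clean way'' is precisely the paper's proof: for $\alpha=r/2^k$ the sequence stabilizes to $e_n(\alpha)=2^{n-k+1}(2^k+r)+1$, Corollary~\ref{cor:32n} makes the normalized quantity $A\bigl(2^{n-k+1}(2^k+r)\bigr)/3^{\log_2(2^{n-k+1}(2^k+r))}$ constant in $n$ and equal to $A(2^k+r)/3^{\log_2(2^k+r)}$, and the only remaining issue is the trailing digit $1$. Two corrections on how that last step is actually carried out. First, no dichotomy $\alpha<1/2$ versus $\alpha\ge 1/2$ is needed: one works directly with $\phi_n(\alpha)=A(e_n(\alpha))/3^{\log_2 e_n(\alpha)}$, whose definition is uniform, rather than with the case-split formula of Proposition~\ref{pro:limexi}. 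Second, the tool that controls the trailing digit is not Lemma~\ref{lem:estimate} but Corollary~\ref{cor:majs}: writing $A(e_n(\alpha))=A(e_n(\alpha)-1)+\mathsf{s}(e_n(\alpha))$, the bound $\mathsf{s}(m)\le 2m$ gives $\mathsf{s}(e_n(\alpha))\le 2^{n+3}$ while $A(e_n(\alpha)-1)\ge 3^{n+1}$, so the ratio of the two normalized quantities tends to $1$ and both limits coincide. By contrast, your primary route (computing $\mathbf{a}(\alpha)$ in closed form and substituting into Proposition~\ref{pro:limexi}) is harder than you suggest: doubling multiplies $A$ by $3$ but does \emph{not} simply shift the $3$-decomposition (compare \eqref{eq:42} and \eqref{eq:84}), and Lemma~\ref{lem:conv} cannot be invoked to compare the decomposition of $A(\val_2(u0^{m}))$ with that of $A(\val_2(u0^{m-1}1))$, since it explicitly excludes suffixes in $0^*$ --- which is exactly why the paper's proof bypasses the $3$-decomposition altogether at this point.
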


\begin{proof}
For $n\ge k$, we have 
$$w_n\left(\frac{r}{2^k}\right)=\rep_2(2^k+r)\, 0^{n-k}\, 1
\quad\text{ and }\quad
e_n\left(\frac{r}{2^k}\right)=2^{n-k+1}(2^k+r)+1.$$
By definition of $\Phi$, we know that 
$$\lim_{n\to+\infty} \frac{A(2^{n-k+1}(2^k+r)+1)}{3^{\log_2(2^{n-k+1}(2^k+r)+1)}}=\Phi\left(\frac{r}{2^k}\right).$$
Thanks to Corollary~\ref{cor:32n},  for all $n\ge k$, we have
\begin{equation}
    \label{eq:constant_sequence}
    \frac{A(2^{n-k+1}(2^k+r))}{3^{n-k+1}}=A(2^k+r).
\end{equation}
Now observe that 
$$\frac{A(2^{n-k+1}(2^k+r)+1)}{A(2^{n-k+1}(2^k+r))}
\frac{3^{\log_2(2^{n-k+1}(2^k+r))}}{3^{\log_2(2^{n-k+1}(2^k+r)+1)}}\to 1
$$
when $n$ tends to infinity because the first factor is equal to 
\begin{equation}
    \label{eq:arg1}
    1+\frac{\mathsf{s}(2^{n-k+1}(2^k+r)+1)}{A(2^{n-k+1}(2^k+r))},
\end{equation}
$A(2^{n-k+1}(2^k+r))\ge 3^n$ 
and, by Corollary~\ref{cor:majs}, $\mathsf{s}(m) \leq 2m$ for all $m$. 
This proves that the sequence 
$$\left(\frac{A(2^{n-k+1}(2^k+r))}{3^{\log_2(2^{n-k+1}(2^k+r))}} \right)_{n\ge k}$$
also converges to $\Phi(r/2^k)$. 
But from \eqref{eq:constant_sequence}, this sequence is constant and equal to 
$$\frac{A(2^k+r)}{3^{\log_2(2^k+r)}}.$$
\end{proof}

\begin{proof}[Proof of Theorem~\ref{thm:convergence}.] This proof is divided into four parts: the exact formula for the sequence $(A(N))_{N\ge 0}$, the fact that $\Phi(0)=1$, the limit $\lim_{\alpha\to 1^-}\Phi(\alpha)=1$ and the continuity of the function $\Phi$. 
 
$\bullet$ Every integer $N\ge 1$ can be uniquely written as $N=2^j(2^k+r)$ where $j\ge 0$ maximum, $k\ge 0$ and $r$ in $\{0,\ldots,2^k-1\}$. 
Thanks to Corollary~\ref{cor:32n}, $A(N)=3^j\, A(2^k+r)$. From Lemma~\ref{lem:rationalPhi}, we get
$$A(N)=3^j\,  A(2^k+r)=3^{j+\log_2(2^k+r)}\, \Phi\left(\frac{r}{2^k}\right).$$
To obtain the relation of the statement, observe that
$$\relpos_2(N)=\frac{N-2^{j+k}}{2^{j+k}}=\frac{r}{2^k}.$$

$\bullet$ From Lemma~\ref{lem:rationalPhi}, for any $k\ge 1$, we have
$$A(2^k)=3^{\log_2(2^k)}\Phi(0)$$
but, $A(2^k)=3^k$ thanks to Lemma~\ref{cor:32n}. 
Hence, $\Phi(0)=1$. 

$\bullet$ To show that $$\lim_{\alpha\to 1^-}\Phi(\alpha)= \lim_{\alpha\to 1^-} \lim_{n\to +\infty} \phi_n(\alpha) = 1,$$
we make use of the uniform convergence and permute the two limits 
$$ \lim_{n\to +\infty} \lim_{\alpha\to 1^-} \phi_n(\alpha) =  \lim_{n\to +\infty} \lim_{\alpha\to 1^-} \frac{1}{3^{\{\log_2(e_n(\alpha))\}}}\sum_{i=0}^{n+1} \frac{a_i(e_n(\alpha))}{3^{i}}. $$
Observe that if $\alpha$ is close enough to $1$, then the infinite word $\rep_2(\alpha)$ has a long prefix containing only letters $1$. By definition, we get $w_n(\alpha)=1^{n+2}$ and $e_n(\alpha)=2^{n+2}-1$. 
Iteratively applying Lemma~\ref{lem:recurrence} gives
\begin{align*}
A(2^{n+1}+2^{n+1}-1) &= 4\cdot 3^{n+1} - 2\cdot 3^{n} - A(2^{n}+1) - A(1) \\
&= 4\cdot 3^{n+1} - 2\cdot 3^{n} - 2\cdot 3^{n-1} - A(2^{n-1}+1) - 2\cdot A(1).
\end{align*}
This yields
$$ (a_i(e_n(\alpha)))_{0\le i \le n+1} = 4,-2,-2,-2,\ldots, -2, a_n(e_n(\alpha)), a_{n+1}(e_n(\alpha))$$
and since, by Lemma~\ref{lem:estimate}, the last two terms are respectively less than $10\cdot 2^n$ and $10\cdot 2^{n+1}$, the limit $\lim_{\alpha\to 1^-}\Phi(\alpha)$ is equal to 
$$\lim_{n\to +\infty}\frac{1}{3} \left(4-2 \sum_{i=1}^{n-1} 3^{-i}\right)= \lim _{n\to +\infty}\frac{1}{3} (3 + 3^{1-n})=1.$$

$\bullet$ We finally prove that $\Phi$ is continuous. 
Let $\alpha \in [0,1)$ and let us write $\rep_2(\alpha) = (d_n)_{n \geq 1}$. 
To show that $\Phi$ is continuous at $\alpha$, we make use of the uniform convergence of the sequence $(\phi_n)_{n \in \mathbb{N}}$ to $\Phi$ and consider
\begin{eqnarray*}
	\lim_{\gamma \to \alpha} 
	|\Phi(\gamma) - \Phi(\alpha)|
	&=& \lim_{\gamma \to \alpha} \lim_{n \to +\infty} 
	|\phi_n(\gamma)-\phi_n(\alpha)|	\\
	&=& \lim_{n \to +\infty} \lim_{\gamma \to \alpha}
	|\phi_n(\gamma)-\phi_n(\alpha)|
\end{eqnarray*}
First assume that $\alpha$ is not of the form $r/2^k$ with $k \geq 1$, $0 \leq r <2^k$, and $r$ odd, i.e., $(d_n)_{n \geq 1}$ does not belong to $\{0,1\}^*1 0^\omega$.
For any fixed integer $n$, we can choose $\gamma$ close enough to $\alpha$ such that $\rep_2(\gamma) \in d_1 d_2 \cdots d_n \{0,1\}^\omega$.
Therefore, we have $e_n(\gamma) = e_n(\alpha)$, hence $\phi_n(\gamma) = \phi_n(\alpha)$.

Now assume that $\rep_2(\alpha) = d_1d_2 \cdots d_k 0^\omega$ with $d_k = 1$.
For any fixed integer $n > k$, we can chose $\gamma$ close enough to $\alpha$ such that
\begin{eqnarray*}
	\rep_2(\gamma)  \in & d_1d_2 \cdots d_k 0^n \{0,1\}^\omega,
	& \text{if }  \gamma \geq \alpha; \\
	\rep_2(\gamma)  \in & d_1d_2 \cdots d_{k-1} 01^n \{0,1\}^\omega, & \text{if } \gamma < \alpha.
\end{eqnarray*}
If $\gamma \geq \alpha$, we get $\phi_n(\gamma) = \phi_n(\alpha)$ as in the first case.
If $\gamma < \alpha$, we get 
\begin{eqnarray*}
	e_n(\alpha) &=& 2^{n+1} + 2 \sum_{i=1}^k d_i 2^{n-i} +1; \\
	e_n(\gamma) &=& 2^{n+1} + 2 \sum_{i=1}^{k-1} d_i 2^{n-i} + 2 \sum_{j=0}^{n-k-1} 2^j + 1.
\end{eqnarray*}
We thus get $e_n(\alpha) = e_n(\gamma)+2$ and, since ${\lfloor\log_2(e_n(\alpha))\rfloor} = {\lfloor \log_2(e_n(\gamma))\rfloor} = n+1$, we get 
\begin{eqnarray*}
	|\phi_n(\alpha) - \phi_n(\gamma)| 
	&\leq &
	\left| \frac{A(e_n(\alpha))}{3^{n+1}} 
	\left( \frac{1}{3^{\{\log_2(e_n(\alpha))\}}} - \frac{1}{3^{\{\log_2(e_n(\gamma))\}}}
	\right) 
	\right| \\
	&+&
	\left|
	\frac{1}{3^{\log_2(e_n(\gamma))}} 
	\left( A(e_n(\alpha) - A(e_n(\gamma)) \right)
	\right|
\end{eqnarray*}
For the first term, the factor $A(e_n(\alpha))/3^{n+1}$ converges to the series $\sum_{i=0}^{+\infty} (a_i(\alpha)/3^i)$ when $n$ tends to infinity and the factor $(1/3^{\{\log_2(e_n(\alpha))\}} - 1/3^{\{\log_2(e_n(\gamma))\}})$ tends to 0 when $n$ tends to infinity because
\begin{eqnarray*}
	\{\log_2(e_n(\alpha))\} - \{\log_2(e_n(\gamma))\} 
	&=&
	\log_2(e_n(\alpha)) - \log_2(e_n(\gamma)) 	\\
	&=&
	\log_2 \left( \frac{e_n(\gamma)+2}{e_n(\gamma)} \right) \leq 
	\log_2 \left( 1+ \frac{1}{2^n}\right)
\end{eqnarray*}
For the second term, we have, by Corollary~\ref{cor:majs}, $A(e_n(\alpha)) - A(e_n(\gamma)) = \mathsf{s}(e_n(\alpha)) + \mathsf{s}(e_n(\alpha)-1) \leq 2^{n+4}$ and $3^{\log_2(e_n(\gamma))} \geq 3^{n+1}$. 
This shows that $\Phi$ is continuous.
\end{proof}

\begin{remark}
    As stated in \cite[Remark 9.2.2]{BR}, observe that since the $1$-periodic function $\Phi$ is continuous, then it is completely defined in the interval $[0,1]$ by the values taken on the dense set of points of the form $r/2^k$.
    \end{remark}


\section{Summatory function of a Fibonacci-regular sequence using an exotic numeration system}\label{sec:fibonacci}

In this section, we show how our method can be extended to sequences that do not exhibit a $k$-regular structure. 
Instead of the base-$2$ numeration system, we may use other systems such as the \emph{Zeckendorf numeration system} \cite{Zeck}, also called the \emph{Fibonacci numeration system}. 
Take the Fibonacci sequence $F=(F(n))_{n\ge 0}$ defined by $F(0)=1$, $F(1)=2$ and $F(n+2)=F(n+1)+F(n)$ for all $n\ge 0$. 
Any integer $n\ge 2$ can be written as $F(\ell)+r$ for some $\ell\ge 1$ and $0 \le r< F(\ell-1)$.
More precisely, for any integer $n\ge 1$, there exists $\ell \ge 1$ such that 
$$ 
n = \sum_{j=0}^{\ell -1} c_j\, F(j)
$$
where the $c_j$'s are non-negative integers in $\{0,1\}$, $c_{\ell-1}$ is non-zero and $c_j \cdot c_{j-1}=0$ for all $j\in\{1,\ldots,\ell-1\}$. 
The word $c_{\ell -1}\cdots c_0$ is called the \emph{normal $F$-representation} of $n$ and is denoted by $\rep_F(n)$. 
Said otherwise, the word $c_{\ell-1}\cdots c_0$ is the greedy $F$-expansion of $n$. 
We set $\rep_F(0)=\varepsilon$. 
Finally, we say that $\rep_F(\mathbb{N})=1\{01,0\}^*\cup\{\varepsilon\}$ is the \emph{language of the numeration}. 
If $d_{\ell -1}\cdots d_0$ is a word over the alphabet $\{0,1\}$, then we set
$$\val_F(d_{\ell -1}\cdots d_0):=\sum_{j=0}^{\ell -1} d_j\, F(j).$$

Compared to the sequence $(\mathrm{s}(n))_{n\ge 0}$, the sequence $(\mathrm{s}_F(n))_{n\ge 0}$ defined by~\eqref{eq:defSF} only takes into account words not containing two consecutive $1$'s. 
A major difference with the integer base case is that the sequence $(\mathsf{s}_F(n))_{n\ge 0}$ is not known to be $k$-regular for any $k\ge 2$. 
Thus we are not anymore in the setting of known results. 
Nevertheless, it is striking that we are still able to mimic the same strategy and obtain an expression for the summatory function 
$$A_F(n) := \sum_{j=0}^n \mathsf{s}_F(j).$$ 
The first few terms of $(A_F(n))_{n\ge 0}$ are
$$1, 3, 6, 10, 14, 19, 25, 31, 37, 45, 54, 62, 70, 77, 87, 99, 111, 123, 133, 145, \ldots$$
As in section~\ref{sec:base2et3}, we analogously consider a convenient $B$-decomposition of $A_F(N)$ based on the terms of a sequence $(B(n))_{n\ge 0}$. 

In~\cite{LRS2} we have showed that $(\mathsf{s}_F(n))_{n\ge 0}$ satisfies a recurrence relation of the same form as the one in Proposition~\ref{pro:rec}. 
This sequence is $F$-regular. 
This notion of regularity is a natural generalization of $2$-regular sequences to Fibonacci numeration system \cite{AST}. 

\begin{proposition}\label{pro:recF}
We have $\mathsf{s}_F(0)=1$, $\mathsf{s}_F(1)=2$ and, for all $\ell\ge 1$ and $0 \le r< F(\ell-1)$, 
$$
\mathsf{s}_F(F(\ell)+r)=
\begin{cases}
	\mathsf{s}_F(F(\ell-1)+r)+\mathsf{s}_F(r),
	& \text{ if }0\le r<F(\ell-2);\\
    2\mathsf{s}_F(r), 
    & \text{ if }F(\ell-2)\le r < F(\ell-1).
\end{cases}
$$
\end{proposition}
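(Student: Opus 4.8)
The plan is to interpret $\mathsf{s}_F(n)$ combinatorially and then peel off the digits of $\rep_F(n)$ one at a time, as one does when enumerating scattered subwords. By definition $\mathsf{s}_F(n)$ counts the distinct words $v$ in the Zeckendorf language $L:=1\{01,0\}^*\cup\{\varepsilon\}$ that occur as scattered subwords of $\rep_F(n)$ (the condition $\binom{\rep_F(n)}{v}>0$ means exactly that $v$ occurs at least once). The base cases $\mathsf{s}_F(0)=1$ and $\mathsf{s}_F(1)=2$ follow at once from $\rep_F(0)=\varepsilon$ and $\rep_F(1)=1$. The elementary tool used throughout is the leftmost embedding principle: a nonempty word $a v'$ with $a\in\{0,1\}$ is a scattered subword of $w$ if and only if $v'$ is a scattered subword of the suffix of $w$ following the first occurrence of the letter $a$. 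Anchoring every occurrence at its leftmost admissible position is what allows us to count distinct subwords without double counting.

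First I would record the shape of the host word. For $\ell\ge 1$ and $0\le r<F(\ell-1)$ one has $\rep_F(F(\ell)+r)=1\,0\,z$, where $z$ is the length-$(\ell-1)$ representation of $r$ padded with leading zeros; moreover $z$ begins with $1$ exactly when $F(\ell-2)\le r<F(\ell-1)$ and begins with $0$ (or is empty) exactly when $0\le r<F(\ell-2)$, so the first letter of $z$ already distinguishes the two cases of the statement. Peeling the leading $1$ and then the following $0$ reduces the count: if $E(u)$ denotes the number of distinct scattered subwords of $u$ containing no factor $11$, then applying the leftmost principle twice gives $\mathsf{s}_F(F(\ell)+r)=2+E(z)$, and the same two-step peeling applied to $F(\ell-1)+r$ will be used in the first case. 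Since leading zeros are irrelevant to the scattered occurrences of words starting with $1$, both $E(\cdot)$ and the analogous count of subwords lying in $\{01,0\}^*$ can be rewritten in terms of $\mathsf{s}_F$ at smaller arguments.

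Then I would treat the two cases. When $F(\ell-2)\le r<F(\ell-1)$, write $z=1\sigma$ with $\sigma\in\{01,0\}^*$ the tail of $\rep_F(r)$; splitting $E(1\sigma)$ according to whether a no-$11$ subword starts with $0$ or with $1$, the leftmost principle yields $E(1\sigma)=2\,T(\sigma)$, where $T(\sigma)$ counts the subwords of $\sigma$ lying in $\{01,0\}^*$. Since $\mathsf{s}_F(r)=1+T(\sigma)$, this gives $\mathsf{s}_F(F(\ell)+r)=2\,\mathsf{s}_F(r)$. When $0\le r<F(\ell-2)$, write $z=0\tilde z$, where $\tilde z$ is the length-$(\ell-2)$ padded representation of $r$, so that also $\rep_F(F(\ell-1)+r)=1\,0\,\tilde z$ and $\mathsf{s}_F(F(\ell-1)+r)=2+E(\tilde z)$. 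Splitting $E(0\tilde z)$ by the first letter of a no-$11$ subword, those starting with $0$ contribute $E(\tilde z)$ while those starting with $1$ are precisely the nonempty Zeckendorf subwords of $\tilde z$, of which there are $\mathsf{s}_F(r)-1$; adding the empty word gives $E(0\tilde z)=E(\tilde z)+\mathsf{s}_F(r)$ and hence $\mathsf{s}_F(F(\ell)+r)=\mathsf{s}_F(F(\ell-1)+r)+\mathsf{s}_F(r)$.

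The main obstacle is purely the distinctness bookkeeping. One must verify that peeling a leading letter induces a genuine bijection between distinct admissible subwords and distinct subwords of the appropriate suffix—this is exactly where the leftmost embedding principle is indispensable—and that the two auxiliary languages $\{01,0\}^*$ and the no-$11$ language recombine correctly, with no subword counted twice across the ``starts with $0$''/``starts with $1$'' split. A secondary care point is the boundary behaviour: small values of $\ell$, the case $r=0$, and the effect of the zero-padding relating $z$ and $\tilde z$ should be checked directly against the initial values rather than through the generic reduction.
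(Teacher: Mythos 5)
Your argument is correct, but there is nothing in the paper to compare it against: the paper does not prove this proposition at all, it recalls it from the earlier work \cite{LRS2} (``In \cite{LRS2} we have showed that\dots''), exactly as it does for Proposition~\ref{pro:rec}. So what you have produced is a self-contained combinatorial proof of a statement the paper only cites. Your route is sound, and the key identities all check out once one notes that $\rep_F(\mathbb{N})\setminus\{\varepsilon\}$ is exactly the set of words starting with $1$ and avoiding the factor $11$: peeling the prefix $10$ of $\rep_F(F(\ell)+r)=10z$ via the leftmost-embedding principle gives $\mathsf{s}_F(F(\ell)+r)=2+E(z)$; in the first case, the no-$11$ subwords of $0\tilde z$ split into the empty word, those starting with $0$ (in bijection with all no-$11$ subwords of $\tilde z$, giving $E(\tilde z)$) and those starting with $1$ (the $\mathsf{s}_F(r)-1$ nonempty Zeckendorf subwords of $\rep_F(r)$, the padding zeros being harmless), whence $E(0\tilde z)=E(\tilde z)+\mathsf{s}_F(r)$; in the second case, writing $\rep_F(r)=1\sigma$, one gets $E(1\sigma)=1+T(\sigma)+\bigl(T(\sigma)-1\bigr)=2T(\sigma)$ and $\mathsf{s}_F(r)=1+T(\sigma)$. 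Two points deserve explicit display in a final write-up. First, your split of $E(1\sigma)$ ``according to whether a subword starts with $0$ or with $1$'' silently omits the empty subword; the count $2T(\sigma)$ is nevertheless right because the empty word and the deficit of $1$ in the starts-with-$0$ class cancel, but this bookkeeping should be written out. Second, the case $\ell=1$ (which forces $r=0$) does not fit the generic reduction, since $\tilde z$ would have negative length; it needs either the convention $F(-1)=1$ or the direct check $\mathsf{s}_F(2)=3=\mathsf{s}_F(1)+\mathsf{s}_F(0)$, which you correctly flag as a boundary case to verify by hand.
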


The following result is the analogue of Corollary~\ref{cor:majs} and is obtained by induction.

\begin{corollary}\label{cor:majsF}
For all $\ell \geq 1$ and all $0 \leq r < F(\ell-1)$, we have $\mathsf{s}_F(F(\ell)+r) \leq 2^{\ell+1}$.
\end{corollary}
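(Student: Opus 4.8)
The plan is to prove the statement by strong induction on $\ell$, driving everything off the recurrence of Proposition~\ref{pro:recF}. Since $n = F(\ell)+r$ with $0 \le r < F(\ell-1)$ ranges exactly over the interval $[F(\ell), F(\ell+1))$, I would first reformulate the claim in the equivalent form: for every $\ell \ge 1$ and every integer $n$ with $F(\ell) \le n < F(\ell+1)$, one has $\mathsf{s}_F(n) \le 2^{\ell+1}$. The induction hypothesis then reads: for every $k < \ell$ and every $m$ with $F(k) \le m < F(k+1)$, we have $\mathsf{s}_F(m) \le 2^{k+1}$. This reformulation is convenient because the terms $\mathsf{s}_F(F(\ell-1)+r)$ and $\mathsf{s}_F(r)$ produced by the recurrence are then automatically located in earlier intervals, where the hypothesis applies.

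First I would dispose of the base cases. For small $\ell$ the case split of Proposition~\ref{pro:recF} refers to $F(\ell-2)$, which is not meaningful when $\ell \le 1$, so I would treat $\ell \in \{1,2\}$ directly, using $\mathsf{s}_F(0)=1$, $\mathsf{s}_F(1)=2$ and the listed initial values to record $\mathsf{s}_F(2)=3$, $\mathsf{s}_F(3)=4$, $\mathsf{s}_F(4)=4$, each of which obeys the bound comfortably.

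For the induction step with $\ell \ge 3$, I would split according to the two cases of Proposition~\ref{pro:recF}. In the first case $0 \le r < F(\ell-2)$, I use $\mathsf{s}_F(F(\ell)+r) = \mathsf{s}_F(F(\ell-1)+r) + \mathsf{s}_F(r)$: the index $F(\ell-1)+r$ lies in $[F(\ell-1), F(\ell))$, so the hypothesis bounds the first summand by $2^{\ell}$, while $r < F(\ell-2)$ forces $r$ into some interval $[F(k), F(k+1))$ with $k \le \ell-3$, bounding the second summand by $2^{\ell-2}$; their sum is at most $2^{\ell}+2^{\ell-2} < 2^{\ell+1}$. In the second case $F(\ell-2) \le r < F(\ell-1)$, the relation $\mathsf{s}_F(F(\ell)+r) = 2\,\mathsf{s}_F(r)$ together with $r < F(\ell-1)$ (so $r \in [F(k), F(k+1))$ with $k \le \ell-2$) yields $2\cdot 2^{\ell-1} = 2^{\ell} \le 2^{\ell+1}$. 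Both cases respect the bound, which closes the induction.

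There is no serious obstacle here: the bound $2^{\ell+1}$ is deliberately generous, so in each case the arithmetic leaves ample room to spare. The only point genuinely requiring care is the bookkeeping of the base cases, namely ensuring that the recurrence of Proposition~\ref{pro:recF} is invoked only for indices $\ell$ where $F(\ell-2)$ is defined, and drawing the value $\mathsf{s}_F(2)=3$ from the listed initial terms rather than from the recurrence itself.
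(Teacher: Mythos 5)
Your proof is correct and takes essentially the same route as the paper, which disposes of this corollary by simply stating that it ``is obtained by induction'' from Proposition~\ref{pro:recF}; your strong induction on $\ell$ with the two-case analysis is exactly that argument written out, with appropriate care about the base cases where $F(\ell-2)$ is undefined. One micro-detail: in your first case, when $r\in\{0,1\}$ the interval form of the induction hypothesis does not literally apply (and $r=0$ lies in no interval $[F(k),F(k+1))$), but the explicit values $\mathsf{s}_F(0)=1$ and $\mathsf{s}_F(1)=2$ still give $\mathsf{s}_F(r)\le 2^{\ell-2}$ for $\ell\ge 3$, so the estimate stands.
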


\subsection{Preliminary results and introduction of a sequence $(B(n))_{n\ge 0}$}
For the base-$2$ numeration system, we had that $A(2^n)=3^n$ for all $n$ (see Lemma~\ref{lem:2^n 3^n}). 
We have a similar result in the Fibonacci case.

\begin{proposition}\label{pro:Bdom}
Let $(B(n))_{n \geq 0}$ be the sequence of integers defined by $B(0)=1$, $B(1)=3$, $B(2)=6$ and $B(n+3) = 2B(n+2)+B(n+1)-B(n)$ for all $n \geq 0$. 
For all $n\ge 0$, we have  
$$A_F(F(n)-1)= B(n).$$
\end{proposition}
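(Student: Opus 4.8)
The plan is to prove the identity $A_F(F(n)-1)=B(n)$ by induction on $n$, using Proposition~\ref{pro:recF} to express the partial sum over the block $[0,F(n)-1]$ in terms of partial sums over shorter blocks. First I would verify the base cases directly: from the given initial terms of $(A_F(n))_{n\ge 0}$ one checks $A_F(F(0)-1)=A_F(0)=1=B(0)$, $A_F(F(1)-1)=A_F(1)=3=B(1)$, and $A_F(F(2)-1)=A_F(2)=6=B(2)$, matching the seed values of $B$.

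For the inductive step, fix $n\ge 3$ and split the sum $A_F(F(n)-1)=\sum_{j=0}^{F(n)-1}\mathsf{s}_F(j)$ at $j=F(n-1)$. The lower part is exactly $A_F(F(n-1)-1)=B(n-1)$ by the induction hypothesis. For the upper part, write each index as $F(n-1)+r$ with $0\le r<F(n-2)$ and apply Proposition~\ref{pro:recF} (with $\ell=n-1$). The range $0\le r<F(n-2)$ further splits at $r=F(n-3)$: on the first subrange $0\le r<F(n-3)$ we get $\mathsf{s}_F(F(n-1)+r)=\mathsf{s}_F(F(n-2)+r)+\mathsf{s}_F(r)$, and on the second subrange $F(n-3)\le r<F(n-2)$ we get $2\,\mathsf{s}_F(r)$. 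Summing these and re-grouping should express $\sum_{r=0}^{F(n-2)-1}\mathsf{s}_F(F(n-1)+r)$ as a combination of the block sums $A_F(F(n-2)-1)$, $A_F(F(n-3)-1)$, and the intermediate sum $\sum_{r=0}^{F(n-2)-1}\mathsf{s}_F(F(n-2)+r)$, the last of which equals $A_F(F(n-1)-1)-A_F(F(n-2)-1)$. Substituting everything back in terms of the $B$ values via the induction hypothesis, I expect the total to collapse to $2B(n-1)+B(n-2)-B(n-3)$, which is precisely the defining recurrence $B(n)=2B(n-1)+B(n-2)-B(n-3)$.

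The main obstacle will be the bookkeeping in the telescoping: the recurrence in Proposition~\ref{pro:recF} references $\mathsf{s}_F$ at the shorter level $F(\ell-1)+r$, so after one application I am left with a sum of the form $\sum_r \mathsf{s}_F(F(n-2)+r)$ over the truncated range $r<F(n-2)$ rather than the full range $r<F(n-3)$ that a clean block sum would require. I therefore expect to need a second, careful identification of these partial block sums with differences of consecutive $A_F(F(\cdot)-1)$ values, possibly unwinding the recurrence one more level or introducing auxiliary summation identities relating $\sum_{r=0}^{F(m)-1}\mathsf{s}_F(F(m)+r)$ to the $B$-sequence. Matching the ranges of summation with the Fibonacci index shifts $n-1,n-2,n-3$ is the delicate part; once the ranges align, the arithmetic reduces exactly to the linear recurrence defining $(B(n))_{n\ge 0}$, and the induction closes.
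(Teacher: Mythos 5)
Your plan is correct and is essentially the paper's own proof: the paper likewise checks the three base cases by hand and then uses Proposition~\ref{pro:recF} to split the top Fibonacci block into sub-blocks whose sums are differences of consecutive values $A_F(F(\cdot)-1)$, producing exactly the recurrence $B(n)=2B(n-1)+B(n-2)-B(n-3)$ (the paper states it shifted, as $A_F(F(n+3)-1)=2A_F(F(n+2)-1)+A_F(F(n+1)-1)-A_F(F(n)-1)$). The ``delicate'' range issue you anticipate in fact resolves itself: the first case of Proposition~\ref{pro:recF} with $\ell=n-1$ applies precisely for $0\le r<F(n-3)$, so the intermediate sum is $\sum_{r=0}^{F(n-3)-1}\mathsf{s}_F(F(n-2)+r)$ (not over $r<F(n-2)$ as you wrote), and since the block $[F(n-2),F(n-1))$ contains exactly $F(n-3)$ integers, this sum equals $A_F(F(n-1)-1)-A_F(F(n-2)-1)$ on the nose, so the ranges align automatically and the induction closes exactly as you predicted.
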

\begin{proof}
The equalities $B(0)=A_F(F(0)-1)$, $B(1)=A_F(F(1)-1)$, $B(2)=A_F(F(2)-1)$ can be checked by hand.
Let us show that $A_F(F(n+3)-1) = 2A_F(F(n+2)-1)+A_F(F(n+1)-1)-A_F(F(n)-1)$ for all $n \geq 0$.
By definition, we have $A_F(F(n+3)-1) = \sum_{j=0}^{F(n+3)-1}\mathsf{s}_F(j)$.
The equality to prove is thus equivalent to 
\[
	\sum_{j=F(n+2)}^{F(n+3)-1} \mathsf{s}_F(j)
	=
	\sum_{j=0}^{F(n)-1} \mathsf{s}_F(j)
	+ 2 \sum_{j=F(n)}^{F(n+1)-1} \mathsf{s}_F(j)
	+ \sum_{j=F(n+1)}^{F(n+2)-1} \mathsf{s}_F(j).
\]
Observe that $\{j \mid F(n+2) \leq j < F(n+3)\} = \{F(n+2)+r \mid 0 \leq r < F(n+1)\}$.
We thus get, using Proposition~\ref{pro:recF},
\begin{eqnarray*}
	\sum_{j=F(n+2)}^{F(n+3)-1} \mathsf{s}_F(j) 
	&=& 	\sum_{r=0}^{F(n+1)-1} \mathsf{s}_F(F(n+2)+r)			\\
	&=& 	\sum_{r=0}^{F(n)-1} 	\mathsf{s}_F(F(n+2)+r)		
		+ \sum_{r=F(n)}^{F(n+1)-1} \mathsf{s}_F(F(n+2)+r)		\\
	&=& 	\sum_{r=0}^{F(n)-1} 	\left(\mathsf{s}_F(F(n+1)+r)+\mathsf{s}_F(r) \right)		
		+ \sum_{r=F(n)}^{F(n+1)-1} 2 \mathsf{s}_F(r)			\\	
	&=& 	\sum_{r=0}^{F(n)-1} \mathsf{s}_F(r)		
		+ 2 \sum_{r=F(n)}^{F(n+1)-1} \mathsf{s}_F(r)
		+ \sum_{r=0}^{F(n)-1} \mathsf{s}_F(F(n+1)+r).			
\end{eqnarray*}
We conclude by observing that $\{j \mid F(n+1) \leq j < F(n+2)\} = \{F(n+1)+r \mid 0 \leq r < F(n)\}$.
\end{proof}

The first few terms of the sequence $(B(n))_{n \geq 0}$ are
$$
1, 3, 6, 14, 31, 70, 157, 353, 793, 1782, 4004, 8997, 20216, 45425, 102069, 229347, \ldots
$$
The characteristic polynomial of the linear recurrence of $(B(n))_{n \geq 0}$ has three real roots as depicted in Figure~\ref{fig:poldeg}. 
\begin{figure}[h!t]
    \centering
    \scalebox{.4}{\includegraphics{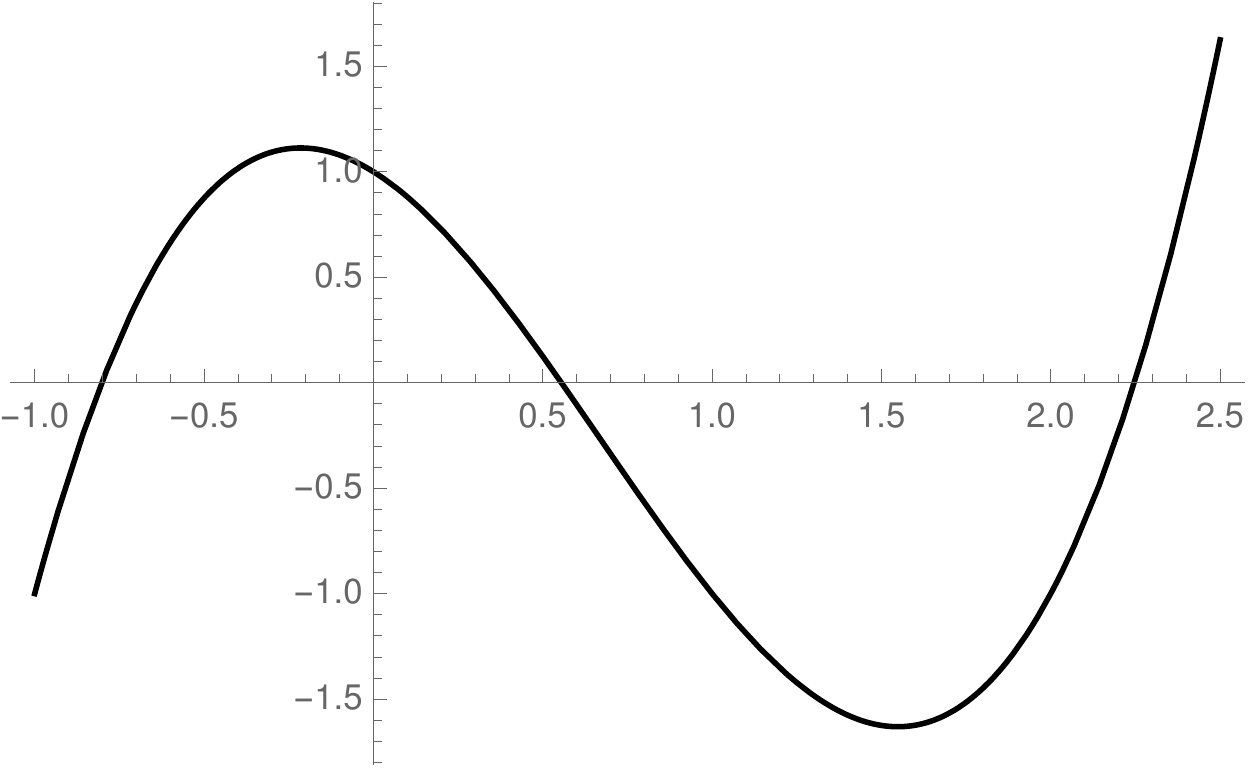}}
    \caption{The graph of $X^3-2 X^2 - X + 1$}
    \label{fig:poldeg}
\end{figure}
We let $\beta \approx 2.24698$ denote the root of maximal modulus of $X^3-2 X^2 - X + 1$. 
The other two roots are $\beta_2\approx -0.80194$ and $\beta_3\approx 0.55496$. 
From the classical theory of linear recurrences, 
there exist constants $c\approx 1.22041$, $c_2\approx -0.28011$ and $c_3\approx 0.0597$ such that, for all $n\in\mathbb{N}$,
\begin{equation}
    \label{eq:recB}
    B(n)=c\, \beta^n + c_2\, \beta_2^n+c_3\, \beta_3^n.
\end{equation}
In particular, we have
$$\lim_{n\to +\infty}\frac{B(n)}{c\, \beta^n}=1.$$
Thanks to Proposition~\ref{pro:Bdom}, we have an analogue of Lemma~\ref{lem:recurrence}.

\begin{lemma}\label{lem:recurrenceFib}
Let $\ell\ge 2$. 
If $0\le r< F(\ell-2)$, then
\begin{equation}
    \label{eq:partA}
    A_F(F(\ell)+r)= B(\ell)-B(\ell-1)+A_F(F(\ell-1)+r)+A_F(r).
\end{equation}
If $F(\ell-2)\le r < F(\ell-1)$, then
\begin{equation}
    \label{eq:partB}
    A_F(F(\ell)+r)= 2B(\ell)-B(\ell-1)-B(\ell-2)+2 A_F(r).
\end{equation}
\end{lemma}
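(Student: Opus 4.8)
The plan is to mirror the proof of Lemma~\ref{lem:recurrence} from the integer-base case, replacing the key identity $A(2^n)=3^n$ by the identity $A_F(F(n)-1)=B(n)$ furnished by Proposition~\ref{pro:Bdom}, and replacing Proposition~\ref{pro:rec} by its Fibonacci analogue Proposition~\ref{pro:recF}. The essential idea is to split the summation defining $A_F(F(\ell)+r)$ at the point $F(\ell)$ and then apply the recurrence of Proposition~\ref{pro:recF} term by term to the tail, re-expressing each piece as a summatory function $A_F$ evaluated at a smaller argument.

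For the first case, $0\le r< F(\ell-2)$, I would write
\[
A_F(F(\ell)+r)=A_F(F(\ell)-1)+\sum_{j=0}^{r}\mathsf{s}_F(F(\ell)+j).
\]
Since $r<F(\ell-2)$, every index $F(\ell)+j$ with $0\le j\le r$ falls into the first case of Proposition~\ref{pro:recF}, so $\mathsf{s}_F(F(\ell)+j)=\mathsf{s}_F(F(\ell-1)+j)+\mathsf{s}_F(j)$. Summing over $j$ and recognising the resulting sums as $A_F(F(\ell-1)+r)-A_F(F(\ell-1)-1)$ and $A_F(r)$ respectively, then using $A_F(F(\ell)-1)=B(\ell)$ and $A_F(F(\ell-1)-1)=B(\ell-1)$ from Proposition~\ref{pro:Bdom}, yields exactly \eqref{eq:partA}.

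For the second case, $F(\ell-2)\le r< F(\ell-1)$, I would split the tail sum at $F(\ell-2)$: the indices below $F(\ell-2)$ use the first case of Proposition~\ref{pro:recF} and the indices in $[F(\ell-2),r]$ use the second case, which contributes $2\mathsf{s}_F(j)$. The cleanest route is to first rewrite $A_F(F(\ell)+r)=A_F(F(\ell)+F(\ell-2)-1)+\sum_{j=F(\ell-2)}^{r}\mathsf{s}_F(F(\ell)+j)$, apply \eqref{eq:partA} with $r=F(\ell-2)-1$ (noting that $F(\ell)+F(\ell-2)-1$ equals $F(\ell+1)-1$ shifted appropriately, or simply evaluate the boundary term directly), and then convert the remaining sum $\sum 2\mathsf{s}_F(j)$ into $2\bigl(A_F(r)-A_F(F(\ell-2)-1)\bigr)=2A_F(r)-2B(\ell-2)$. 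Collecting the $B$-terms and invoking the recurrence $B(\ell)=2B(\ell-1)+B(\ell-2)-B(\ell-3)$ to simplify should produce \eqref{eq:partB}.

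The main obstacle I anticipate is the bookkeeping of the boundary terms in the second case: correctly tracking which indices land in which branch of Proposition~\ref{pro:recF}, and then algebraically collapsing the accumulated combination of $B(\ell),B(\ell-1),B(\ell-2),B(\ell-3)$ down to the clean coefficients $2,-1,-1$ appearing in \eqref{eq:partB}. This requires careful and repeated use of the defining recurrence $B(n+3)=2B(n+2)+B(n+1)-B(n)$, and it is the step where off-by-one errors in the Fibonacci indices are most likely to creep in. Everything else is a direct, if slightly tedious, transcription of the base-$2$ argument.
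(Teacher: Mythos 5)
Your proposal follows essentially the same route as the paper's proof: for the first case, split off $A_F(F(\ell)-1)=B(\ell)$ and apply Proposition~\ref{pro:recF} term by term, and for the second case, split at the boundary $A_F(F(\ell)+F(\ell-2)-1)$, apply \eqref{eq:partA} with $r=F(\ell-2)-1$, and convert the remaining tail into $2\bigl(A_F(r)-B(\ell-2)\bigr)$. The one simplification you overlook is that no appeal to the recurrence $B(n+3)=2B(n+2)+B(n+1)-B(n)$ is needed at the end: since $F(\ell-1)+F(\ell-2)=F(\ell)$, the term $A_F(F(\ell-1)+F(\ell-2)-1)$ produced by \eqref{eq:partA} equals $A_F(F(\ell)-1)=B(\ell)$ by Proposition~\ref{pro:Bdom}, and the $B$-terms collapse immediately to the coefficients $2,-1,-1$.
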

\begin{proof}
Assume first that $0\le r < F(\ell-2)$. 
Applying Proposition~\ref{pro:recF} and Proposition~\ref{pro:Bdom}, we get
\begin{eqnarray*}
A_F(F(\ell)+r)&=&\sum_{j=0}^{F(\ell)-1} \mathsf{s}_F(j) + \sum_{j=0}^{r} \mathsf{s}_F(F(\ell)+j) \\
&=& B(\ell) + \sum_{j=0}^{r} \mathsf{s}_F(F(\ell-1)+j) + \sum_{j=0}^{r} \mathsf{s}_F(j) \\
&=& B(\ell) + \sum_{j=0}^{F(\ell-1)+r} \mathsf{s}_F(j) - \sum_{j=0}^{F(\ell-1)-1} \mathsf{s}_F(j) + A_F(r) \\
&=& B(\ell)+A_F(F(\ell-1)+r)-B(\ell-1)+A_F(r).
\end{eqnarray*}
Let us prove the second part of the result by assuming that $F(\ell-2)\le r < F(\ell-1)$. 
According to the second case of Proposition~\ref{pro:recF}, we have
\begin{eqnarray*}
A_F(F(\ell)+r)&=&\sum_{j=0}^{F(\ell)+F(\ell-2)-1} \mathsf{s}_F(j) + \sum_{j=F(\ell-2)}^{r} \mathsf{s}_F(F(\ell)+j) \\
&=& A_F(F(\ell)+F(\ell-2)-1) + 2 \sum_{j=F(\ell-2)}^{r} \mathsf{s}_F(j)\\
&=& A_F(F(\ell)+F(\ell-2)-1) + 2 \left( A_F(r) - \sum_{j=0}^{F(\ell-2)-1} \mathsf{s}_F(j) \right).
\end{eqnarray*}
Applying the first part of the result to the term $A_F(F(\ell)+F(\ell-2)-1)$, we obtain
\begin{multline*}
A_F(F(\ell)+r) = B(\ell) - B(\ell-1) + A_F(F(\ell-1)+F(\ell-2)-1) \\
+ A_F(F(\ell-2)-1) + 2 ( A_F(r) - A_F(F(\ell-2)-1) )
\end{multline*}
and next, with Proposition~\ref{pro:Bdom}, we get
\begin{eqnarray*}
A_F(F(\ell)+r)&=&  B(\ell) - B(\ell-1) + B(\ell) + B(\ell-2) + 2 A_F(r) - 2 B(\ell-2) \\
&=& 2 B(\ell) - B(\ell-1) - B(\ell-2) + 2 A_F(r).
\end{eqnarray*}
\end{proof}

\subsection{$B$-decomposition of $A_F(n)$}
Similarly to the $3$-decomposition of $A(n)$ considered in Section~\ref{sec:3dec}, we will consider what we call the $B$-decomposition of $A_F(n)$. 
The idea is to apply iteratively Lemma~\ref{lem:recurrenceFib} to derive a decomposition of $A_F(n)$ as a particular linear combination of terms of the sequence $(B(n))_{n\ge 0}$. 
Indeed, each application of Lemma~\ref{lem:recurrenceFib} provides a ``leading'' term of the form $B(\ell)$ or $2B(\ell)$ plus terms of smaller indices. 
In this context, we choose to set $A_F(0)=1\cdot B(0)$, $A_F(1)=3\cdot B(0)$ and $A_F(2)=6\cdot B(0)$.

\begin{definition}[$B$-decomposition]\label{def:B-decomp}
Let $n\ge 3$. 
Iteratively applying Lemma~\ref{lem:recurrenceFib} provides a unique decomposition of the form 
$$
A_F(n)=\sum_{i=0}^{\ell_F(n)} b_i(n)\, B(\ell_F(n)-i),
$$
where $b_i(n)$ are integers, $b_{0}(n)\neq 0$ and $\ell_F(n)=|\rep_F(n)|-1$. 
We say that the word 
$$
\mathsf{Bdec}(A_F(n)):=b_0(n) \cdots b_{\ell_F(n)}(n)
$$
is the {\em $B$-decomposition} of $A_F(n)$. Observe that when the integer $n$ is clear from the context, we simply write $b_i$ instead of $b_i(n)$.  For the sake of clarity, we will also write $(b_0(n), \ldots, b_{\ell_F(n)}(n))$.
\end{definition}

As an example, we get $$A_F(42)=B(7)+B(6)-B(5)+2B(4)-3B(1)+27B(0).$$
We have $\ell_F(42)=7$ and the $B$-decomposition of $A_F(42)$ is $(1,1,-1,2, 0,0,-3,27)$. Table~\ref{tab:Bdecomp} displays the $B$-decomposition of $A_F(3),A_F(4),\ldots$ As in the base-$2$ case, observe that the $B$-decomposition is only defined for the integers $(A_F(n))_{n\ge 0}$.
\begin{table}[h!tb]
$$\begin{array}{c|ccccc|r}
n & b_0(n) & b_1(n) & b_2(n) & b_3(n) & b_4(n) & A_F(n) \\
\hline
3& 1 & -1 & 7  & &  &1\times 6 - 1 \times 3 + 7 \times 1=10\\
4& 2 & -1 & 5  & & &2\times 6 - 1 \times 3 + 5 \times 1=14\\
5& 1 & 0 & -1 & 8& &1\times 14   - 1 \times 3 + 8\times 1=19 \\
6& 1 & 1 & -1 & 8& &1\times 14 + 1 \times 6 - 1 \times 3 + 8\times 1=25 \\
7& 2 & -1 & -1  & 12 &  &2\times 14 -1 \times 6 - 1 \times 3 + 12\times 1=31\\
8& 1 & 0 & 0 & -1 & 9  & 1 \times 31  -1 \times 3 + 9\times 1=37 \\
\vdots&  &  &  &  & &  \\
\end{array}$$
\caption{The $B$-decomposition of $A_F(3),A_F(4),\ldots$}
    \label{tab:Bdecomp}
\end{table}

\begin{remark}\label{rem:casesFib}
Assume that we want to develop $A_F(n)$ using only Lemma~\ref{lem:recurrenceFib}, i.e., to get the $B$-decomposition of $A_F(n)$. 
Only two cases may occur.
\begin{enumerate}[(i)]
\item If $\rep_F(n)=100u$, with $u \in 0^* \rep_F(\mathbb{N})$, then we apply the first part of Lemma~\ref{lem:recurrenceFib} and we are left with evaluations of $A_F$ at integers whose normal $F$-representations are shorter and given by $10u \text{ and } \rep_F(\val_F(u))$. 
\item If $\rep_2(n)=101u$, with $u \in \{\varepsilon\} \cup 0^+ \rep_F(\mathbb{N})$, then we apply the second part of Lemma~\ref{lem:recurrenceFib} and we are left with evaluations of $A_F$ at an integer whose normal $F$-representation is shorter and given by $1u$.
\end{enumerate}
\end{remark}

Lemma~\ref{lem:conv} is adapted in the following way. 

\begin{lemma}\label{lem:convFib}
For all finite words $u,v,v'\in \{0,1\}^*$ such that $1uv, 1uv'\in 1\{0,01\}^*$ and $|u|\ge 2$, the $B$-decompositions of $A_F(\val_F(1uv))$ and $A_F(\val_F(1uv'))$ share the same coefficients $b_0,\ldots,b_{|u|-2}$.
\end{lemma}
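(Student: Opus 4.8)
The plan is to mirror the proof of Lemma~\ref{lem:conv} exactly, replacing the base-$2$ recurrence (Lemma~\ref{lem:recurrence}) with its Fibonacci analogue (Lemma~\ref{lem:recurrenceFib}) and exploiting the case analysis recorded in Remark~\ref{rem:casesFib}. The argument proceeds by induction on $|u|$, and the key structural fact is that a single application of Lemma~\ref{lem:recurrenceFib} to $A_F(\val_F(1uw))$ produces a leading term $B(\ell)$ or $2B(\ell)$ (with $\ell = |1uw|-1$) that depends only on the length of the word, together with evaluations of $A_F$ at integers whose normal $F$-representations are strictly shorter. Since $1uv$ and $1uv'$ have the same length in each block of the decomposition driven by the shared prefix $1u$, the leading coefficients $b_0, b_1, \ldots, b_{|u|-2}$ must coincide.

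Concretely, first I would write $u = 0^{n_1}(10)^{?}\cdots$ in the admissible form dictated by the language $1\{0,01\}^*$, i.e.\ analyze the first few letters of $u$ to decide which case of Remark~\ref{rem:casesFib} applies. Because $1uv,1uv' \in 1\{0,01\}^*$ and $|u|\ge 2$, the prefix $1u$ begins with either $100$ or $101$. In the $100$-prefix case, case~(i) of Remark~\ref{rem:casesFib} applies and Lemma~\ref{lem:recurrenceFib} (equation~\eqref{eq:partA}) gives
\[
A_F(\val_F(1uv)) = B(\ell) - B(\ell-1) + A_F(\val_F(\text{shorter}_1 v)) + A_F(\val_F(\text{shorter}_2 v)),
\]
where $\ell = |1uv|-1$ and the two shorter words are obtained by stripping the leading $1$ and by deleting the leading block, exactly as in Remark~\ref{rem:casesFib}(i). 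The leading term $B(\ell)-B(\ell-1)$ depends only on the common length, hence contributes identically to $b_0$ for both $1uv$ and $1uv'$; the two residual $A_F$-evaluations retain a common prefix of the truncated $u$, so the induction hypothesis applies and yields agreement of the next $|u|-2$ coefficients. In the $101$-prefix case, case~(ii) applies and equation~\eqref{eq:partB} gives $2B(\ell)-B(\ell-1)-B(\ell-2)+2A_F(\val_F(1u''v))$, where the single residual word $1u''$ again shares a long common prefix in both instances; the leading term is length-determined and one invokes the induction hypothesis on $1u''$.

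The main obstacle, just as in Lemma~\ref{lem:conv}, is bookkeeping the shared-prefix property under the residual terms: after stripping a block one must verify that $\val_F$ of the two truncated words still begins with a common factor of length at least $|u''|$ so that the inductive step is legitimate, and that the admissibility constraint $1\{0,01\}^*$ (no two consecutive $1$'s) is preserved throughout. The delicate point is case~(ii), where the second part of Lemma~\ref{lem:recurrenceFib} collapses two terms into the single term $2A_F(r)$; I must check that the prefix of length $|u''|\ge |u|-2$ is genuinely common to the residuals arising from $v$ and $v'$, which follows because the transformation of the leading block $101$ is determined solely by the prefix $1u$ and not by the suffixes $v,v'$. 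Once the length-dependence of the leading terms and the preservation of the common prefix are established, the agreement of $b_0,\ldots,b_{|u|-2}$ follows by a routine induction entirely parallel to the base-$2$ case, and no new estimate is required.
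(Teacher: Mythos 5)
Your proposal is correct and follows essentially the same route as the paper, whose proof of Lemma~\ref{lem:convFib} is simply the observation that the induction from Lemma~\ref{lem:conv} carries over verbatim once Lemma~\ref{lem:recurrence} is replaced by Lemma~\ref{lem:recurrenceFib}; your case analysis via Remark~\ref{rem:casesFib} (prefix $100$ giving two residual evaluations via~\eqref{eq:partA}, prefix $101$ giving a single residual via~\eqref{eq:partB}, with the leading contributions determined by the prefix alone) is exactly the intended argument, worked out in more detail than the paper provides.
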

\begin{proof}
The proof is similar to the proof of Lemma~\ref{lem:conv} and directly follows from Lemma~\ref{lem:recurrenceFib}. 
\end{proof}
\begin{example}
Take $\rep_F(163)=1(000010)1001$ and $\rep_F(673)=1(000010)0010000$. 
If we compare the $B$-decompositions of $A_F(163)$ and $A_F(673)$, they share the same first five coefficients.
$$\begin{array}{r||ccccc|ccccccccc}
n&b_0 & b_1 & b_2 & b_3 & b_4& b_5& b_6 & b_7 & b_8 & b_9 & b_{10} & b_{11} & b_{12} & b_{13} \\
\hline
163&1& 0& 0& 1& -1& 9& -5& 5&10 & -10 & 80 \\
673&1& 0& 0& 1& -1& 4& 0& 5& -5& 15& 0& 0 & -20 & 180 \\
\end{array}$$
\end{example}

In base $2$, evaluation of $A$ at powers of $2$ is of particular importance. 
Here we evaluate $A_F$ at $F(n)-1$.
\begin{lemma}\label{lem:Bdecomp_2ex}
The sequence $(\mathsf{Bdec}(A_F(F(n))))_{n\ge 0}$ converges to $1000\cdots$ and the sequence $(\mathsf{Bdec}(A_F(F(n)-1)))_{n\ge 0}$ converges to $(g_n)_{n\ge 0}$ where $g_0=2$, $g_1=-1$, $g_2=3$ and, for all $n\ge 3$, $g_n=2 g_{n-2}$. 
In particular, we have 
$$
\sum_{i=0}^{+\infty} \frac{g_i}{\beta^i}=\beta,
$$
and for all $n$, $g_n = \frac{4-\sqrt{2}}{2} (\sqrt{2})^n + \frac{4+\sqrt{2}}{2} (-\sqrt{2})^n$.
\end{lemma}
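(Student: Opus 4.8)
The plan is to establish the two convergence claims by relating $B$-decompositions of $A_F(F(n))$ and $A_F(F(n)-1)$ directly through Lemma~\ref{lem:recurrenceFib}, and then to verify the closed-form and the series identity separately. For the first sequence, I would observe that $\rep_F(F(n)) = 10^{n}$, so a single application of the first part of Lemma~\ref{lem:recurrenceFib} (with $r=0$) gives $A_F(F(n)) = B(n)-B(n-1)+A_F(F(n-1))+A_F(0)$. Since $A_F(0)=B(0)$ contributes only to the lowest-order coefficient, the leading coefficient $b_0$ equals $1$ and, by induction on $n$ using Lemma~\ref{lem:convFib}, each fixed coefficient $b_i$ stabilizes to $0$ for $i\ge 1$ as $n\to\infty$; hence $\mathsf{Bdec}(A_F(F(n)))$ converges to $1000\cdots$.

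For the second and main sequence, I would set $n\ge 2$ and use $\rep_F(F(n)-1)$. Here I expect the recurrence $g_n = 2g_{n-2}$ to fall out of iterating Lemma~\ref{lem:recurrenceFib}, because $F(n)-1$ sits at the top of its Fibonacci interval, which is exactly the regime $F(\ell-2)\le r < F(\ell-1)$ triggering the second part of the lemma with its characteristic factor of $2$ in front of $A_F(r)$. The cleanest route is to first prove by induction the explicit $B$-decomposition of $A_F(F(n)-1)$: writing $F(n)-1 = F(n-1)+(F(n-2)-1)$ and applying \eqref{eq:partB}, one gets a leading block $2B(n-1)-B(n-2)-B(n-3)$ followed by $2\,A_F(F(n-2)-1)$, and the factor $2$ is what propagates the doubling two indices down. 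I would track the first few coefficients by hand ($g_0=2$, $g_1=-1$, $g_2=3$) to fix the base of the induction, then show the tail obeys $g_n=2g_{n-2}$; stability of each fixed-index coefficient again follows from Lemma~\ref{lem:convFib}.

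Given the recurrence $g_n=2g_{n-2}$ with $g_1=-1$, $g_2=3$, the closed form is routine: the even- and odd-indexed subsequences are geometric with ratio $2$, so solving the two-term linear recurrence with characteristic roots $\pm\sqrt2$ and matching $g_0,g_1,g_2$ yields $g_n = \frac{4-\sqrt2}{2}(\sqrt2)^n + \frac{4+\sqrt2}{2}(-\sqrt2)^n$; I would simply check this against $g_0=2$, $g_1=-1$ and confirm it satisfies $g_n=2g_{n-2}$. For the series identity $\sum_{i\ge 0} g_i/\beta^i = \beta$, I would split into even and odd parts, sum the two geometric series (which converge since $\sqrt2 < \beta \approx 2.247$), and obtain a rational expression in $\beta$; the target equality $\sum g_i\beta^{-i}=\beta$ should then reduce, after clearing denominators, to the defining relation $\beta^3 = 2\beta^2+\beta-1$. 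Concretely the even terms give a factor $1/(1-2\beta^{-2})$ and the odd terms $1/(1+2\beta^{-2}) = 1/(1-2(-\beta)^{-2})$, and combining them over the common denominator $(\beta^2-2)(\beta^2+2)=\beta^4-4$ should leave exactly $\beta$ once $X^3-2X^2-X+1=0$ is invoked.

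The main obstacle I anticipate is not the algebra of the closed form or the series, but cleanly establishing the exact $B$-decomposition of $A_F(F(n)-1)$ by induction while correctly bookkeeping how the low-order coefficients interact. The recursion $F(n)-1 = F(n-1) + (F(n-2)-1)$ feeds $A_F(F(n-2)-1)$ back in with a multiplier of $2$, but the leading block $2B(n-1)-B(n-2)-B(n-3)$ must be re-expressed in terms of the $B$-basis aligned to index $\ell_F(F(n)-1)$, and one must verify that no cross-terms disturb the already-stabilized coefficients $g_0,\dots,g_{n-2}$. Carefully checking that $\ell_F(F(n)-1)$ increments exactly by one at each step (so the indexing of the $B$-decomposition shifts predictably) and that the base cases $g_0=2$, $g_1=-1$, $g_2=3$ are consistent with the doubling rule from $g_3=2g_1=-2$ onward is where the care is needed; everything downstream is then a mechanical verification invoking Lemma~\ref{lem:convFib} for convergence and equation~\eqref{eq:recB} only implicitly through $\beta$.
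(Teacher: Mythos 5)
Your handling of the two convergence claims is essentially the paper's own proof: the same application of \eqref{eq:partB} to $F(n)-1=F(n-1)+(F(n-2)-1)$ (legitimate since $F(n-3)\le F(n-2)-1<F(n-2)$), the same leading block $2B(n-1)-B(n-2)-B(n-3)$ followed by $2A_F(F(n-2)-1)$, and the same observation that the factor $2$ realizes $g_{j+2}=2g_j$ while the cross-term $-B(n-3)+2g_0B(n-3)=3B(n-3)$ produces $g_2=3$. The low-order bookkeeping you flag as the main obstacle is exactly what the paper's induction statement handles by carrying a parity-dependent tail: the decomposition has the shape $(g_0,\ldots,g_{n-2},x)$ for $n$ odd and $(g_0,\ldots,g_{n-3},y,z)$ for $n$ even. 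Your plan is sound there, as is your treatment of $A_F(F(n))$.

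The gaps are in the two computations you call routine. For the series: since $g_n=2g_{n-2}$ holds for odd and even indices alike, \emph{both} subsequences are geometric of ratio $2/\beta^2$; your identity $1/(1+2\beta^{-2})=1/(1-2(-\beta)^{-2})$ is false because $(-\beta)^{-2}=\beta^{-2}$, so no factor $\beta^2+2$ and no denominator $\beta^4-4$ ever appears. Carried out as written, your plan gives $2+\frac{3}{\beta^2-2}-\frac{\beta}{\beta^2+2}\approx 2.67$, not $\beta\approx 2.25$. The correct split puts both parts over $\beta^2-2$ and yields $2+\frac{3-\beta}{\beta^2-2}=\frac{2\beta^2-\beta-1}{\beta^2-2}=\beta$ by the minimal polynomial, which is the same expression the paper obtains from the functional equation $S=-\beta^{-1}+3\beta^{-2}+2\beta^{-2}S$ with $S=\sum_{i\ge1}g_i\beta^{-i}$. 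For the closed form: no two-parameter solution of $x_n=2x_{n-2}$ can match $g_0,g_1,g_2$ simultaneously, because $g_2=3\neq 2g_0=4$ (the recurrence only starts at $n=3$), so your claimed derivation cannot yield the displayed constants, and the check you propose against $g_0=2$, $g_1=-1$ in fact refutes them: the displayed formula gives $4$ at $n=0$ and $-2$ at $n=1$. That formula, as asserted by the lemma itself, is erroneous; the solution matching $g_1,g_2$, namely $g_n=\frac{3-\sqrt{2}}{4}(\sqrt{2})^n+\frac{3+\sqrt{2}}{4}(-\sqrt{2})^n$, is correct for all $n\ge 1$ but necessarily fails at $n=0$. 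Asserting that these verifications go through when they do not is the gap; fortunately nothing downstream depends on the closed form beyond the bound $|g_k|\le 2(\sqrt{2})^k$ and the series value $\sum_{i\ge 0}g_i\beta^{-i}=\beta$, both of which are true.
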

\begin{proof}
Let us prove the first part of the result. 
We show that, for all $n\ge 3$,
\begin{equation}\label{eq:BdecompFib}
\mathsf{Bdec}(A_F(F(n))) = (1, \underbrace{0, \cdots, 0}_{n-2 \text{ times}}, -1, n+5)
\end{equation} 
We proceed by induction on $n\ge 3$. 
One can check by hand that the result holds for $n=3$ using Lemma~\ref{lem:recurrenceFib}. 
Thus consider $n\ge 3$ and suppose the results holds for all $m<n+1$. 
From Lemma~\ref{lem:recurrenceFib}, we have
$$
A_F(F(n+1)) = B(n+1) - B(n) + A_F(F(n)) + A_F(0).
$$
By induction hypothesis, we find that
$$
A_F(F(n+1)) = B(n+1) - B(n) + ( B(n) - B(1) + (n+5)\cdot B(0)) + B(0),
$$
which proves~\eqref{eq:BdecompFib}. 
The convergence of the sequence of finite words $(\mathsf{Bdec}(A_F(F(n))))_{n\ge 0}$ to the infinite word $1000\cdots$ easily follows. 

Let us prove the second part of the result. 
We show that, for all $n\ge 3$, 
\begin{equation}\label{eq:Bded de 10101010}
\mathsf{Bdec}(A_F(F(n)-1)) = 
\begin{cases}
(g_0, g_1, \ldots, g_{n-2}, x), & \text{if } n \text{ is odd};\\
(g_0, g_1, \ldots, g_{n-3}, y,z), & \text{if } n \text{ is even};
\end{cases}
\end{equation}
where $x,y,z$ are integers. 
We proceed again by induction on $n\ge 3$. 
One can check by hand that the result holds for $n\in\{3,4\}$ using Lemma~\ref{lem:recurrenceFib}. 
Thus consider $n\ge 4$ and suppose the results holds for all $m<n+1$. 
Suppose first that $n$ is even. 
By Lemma~\ref{lem:recurrenceFib}, we have
$$
A_F(F(n+1)-1) = 2B(n) - B(n-1) - B(n-2) + 2 A_F(F(n-1)-1).
$$
Using the induction hypothesis with $\mathsf{Bdec}(A_F(F(n-1)-1))=(g_0, g_1, \ldots, g_{n-3}, x)$, we get
\begin{align*}
A_F(F(n+1)-1)
=& 2B(n) - B(n-1) - B(n-2) + 4 B(n-2) - 2 B(n-3) \\ 
&+ 6 B(n-4)+ \sum_{j=3}^{n-3} 2g_j B(n-j-2) + 2x B(0).
\end{align*}
By definition of the sequence $(g_n)_{n\ge 0}$, we have $2g_j=g_{j+2}$ and we finally obtain
\begin{align*}
A_F(F(n+1)-1)
=& 2B(n) - B(n-1) + 3B(n-2) + \sum_{j=3}^{n-1} g_{j} B(n-j) + 2x B(0),
\end{align*}
which concludes the case where $n$ is even. 
The case where $n$ is odd can be proved using the same argument. 

Let us prove the last part of the result. 
Using the definition of the sequence $(g_n)_{n\ge 0}$, we get
$$
\sum_{i=1}^{+\infty} \frac{g_i}{\beta^i} = \frac{-1}{\beta} + \frac{3}{\beta^2} + \frac{2}{\beta^2} \cdot \sum_{i=1}^{+\infty} \frac{g_i}{\beta^i},
\quad \text{ that is, } \quad
 \sum_{i=1}^{+\infty} \frac{g_i}{\beta^i} = \frac{-\beta + 3}{\beta^2 - 2}.
$$
Hence, since $\beta^3 - 2\beta^2 - \beta + 1 =0$, we have
$$
\sum_{i=0}^{+\infty} \frac{g_i}{\beta^i} = 2 + \frac{3-\beta}{\beta^2 - 2} = \frac{2\beta^2 - \beta -1}{\beta^2-2} = \beta.
$$
The equality $$g_n = \frac{4-\sqrt{2}}{2} (\sqrt{2})^n + \frac{4+\sqrt{2}}{2} (-\sqrt{2})^n$$ directly follows from the recurrence equation and the initial conditions defining the sequence $(g_n)_{n \geq 0}$.
\end{proof}

\subsection{Behavior of the sequence $(A_F(N))_{N\ge 0}$}

Let $\varphi$ be the golden ratio. 
In the following, we recall the notion of $\varphi$-expansion of a real number in $[0,1)$; for more on this subject and on numeration systems, see, for instance, \cite[Chap.~7]{Lot2}. 
The $\varphi$-expansion of $\alpha\in [0,1)$, denoted by $\rep_\varphi(\alpha)$, is  
the infinite word $d_1d_2d_3\cdots$ satisfying $\sum_{i\ge 1} d_i \varphi^{-i}=\alpha$ and for all $j \geq 1$, 
\begin{equation}\label{eq:maj_sum_phi}
\sum_{i\ge j} d_i \varphi^{-i} < \varphi^{-j+1}.
\end{equation}
Observe that $d_id_{i+1}\neq 11$ for all $i\ge 1$. 
The idea in the next definitions is that $\alpha$ gives the relative position of an integer in the interval $[F(n),F(n+1))$.

\begin{definition}
    Let $\alpha$ be a real number in $[0,1)$. 
    Define the sequence of finite words $(w_n(\alpha))_{n\ge 1}$ where $w_n(\alpha)$ is the prefix of length $n$ of the infinite word $10\rep_\varphi(\alpha)$. 
\end{definition}

\begin{definition}\label{def:e_n_Fib} Let $\alpha$ be a real number in $[0,1)$. 
For each $n\ge 1$, let us define
$$
e_n(\alpha)=\val_F(w_n(\alpha)).
$$
\end{definition}

Note that, since the $\varphi$-expansion of $\alpha$ does not contain any factor of the form $11$, the word $w_n(\alpha)$ is the normal $F$-representation of the integer $e_n(\alpha)$ belonging to the interval $[F(n-1), F(n))$. 

\begin{definition}
For each $n\ge 1$, we compute the $B$-decomposition of  $A_F(e_n(\alpha))$. 
We thus have a sequence of finite words $(b_0(e_n(\alpha))\cdots b_{n-1}(e_n(\alpha)))_{n\ge 1}$. 
Thanks to Lemma~\ref{lem:convFib}, this sequence of finite words converges to an infinite sequence of integers denoted by $\mathbf{b}(\alpha)=b_0(\alpha)\, b_1(\alpha)\, \cdots$.
\end{definition}

\begin{example} Take $\alpha=\pi-3$. 
The first few letters of $\rep_\varphi(\alpha)$ are
$$00001010100100010101 \cdots .$$
Thus, the first few terms of the sequence of finite words $(w_n(\alpha))_{n\ge 1}$ are
$$1, 10, 100, 1000, 10000, 100000, 1000001, 10000010, 100000101, 1000001010, \ldots$$
We get that the first few terms of $(e_n(\alpha)))_{n\ge 1}$ are $1,2,3,5,8,13, 22, 36, 59, 96, \ldots$  
$$
    \begin{array}{c|c||ccccccccccc}
n & e_n(\alpha) & b_0 & b_1 & b_2 & b_3 &\cdots \\
\hline
1 & 1 & 3 &  &  &  &  &  &  &  &  
    \\ 
 2 & 2 & 6 &  &  &  &  &  &  &  &    
    \\
 3 & 3 & 1 & -1 & 7 &  &  &  &  &  &    
    \\
 4 & 5 & 1 & 0 & -1 & 8 & & &  &  &  &     
   \\
 5 & 8 & 1 & 0 & 0 & -1 & 9 & &  &  &  &     \\
 6 & 13 & 1 & 0 & 0 & 0 & -1 & 10 & &  &  &     \\
 7 & 22 & 1 & 0 & 0 & 0 & 1 & -1 &  17& & &     \\
 8 & 36 & 1 & 0 & 0 & 0 & 1 & -1 &  -1 & 36 & &    \\
 9 & 59 & 1 & 0 & 0 & 0 & 1 & -1 & 11 & -6 & 30 &    \\
 10 & 96 & 1 & 0 & 0 & 0 & 1 & -1 & 11 & -6 & -6 & 72     \\
\end{array}$$
By computing the $B$-decomposition of $(A_F(e_n(\alpha)))_{n\ge 1}$, the first terms of the sequence $\mathbf{b}(\alpha)$ are $1, 0, 0, 0, 1, -1, 11, -6$.
\end{example}

To ensure convergence, a rough estimate is enough.

\begin{lemma}\label{lem:estimateFib}
For all $n\ge 3$ and all $0\le i\le \ell_F(n)$, we have 
$$|b_i(n)|\le 6\cdot 2^i.$$
In particular, for all $\alpha\in[0,1)$ and all $i\ge 0$, we have
$$|b_i(\alpha)|\le 6\cdot 2^i.$$
\end{lemma}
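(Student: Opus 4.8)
The plan is to mirror the proof of Lemma~\ref{lem:estimate} almost verbatim, replacing powers of $3$ by terms of the sequence $(B(n))_{n\ge 0}$ and replacing the two-case recurrence of Lemma~\ref{lem:recurrence} by the two-case recurrence of Lemma~\ref{lem:recurrenceFib}. First I would fix $n\ge 3$, write $\rep_F(n)$ and recall from Definition~\ref{def:B-decomp} that $\ell_F(n)=|\rep_F(n)|-1$ and $A_F(n)=\sum_{j=0}^{\ell_F(n)} b_j(n)\,B(\ell_F(n)-j)$. Fixing an index $i$, the key observation is that when we iteratively apply Lemma~\ref{lem:recurrenceFib}, the coefficient $b_i(n)$ receives contributions only from terms of the form $A_F(F(\ell_F(n)-i)+r')$. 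By Lemma~\ref{lem:recurrenceFib}, each such term contributes either the leading term $B(\ell_F(n)-i)-B(\ell_F(n)-i-1)$ (first case, when $0\le r'<F(\ell_F(n)-i-2)$) or the leading term $2B(\ell_F(n)-i)-B(\ell_F(n)-i-1)-B(\ell_F(n)-i-2)$ (second case). The crucial simplification relative to the base-$2$ proof is that here only a single ``level'' $F(\ell_F(n)-i)$ feeds into $B(\ell_F(n)-i)$, because in case~(i) the recursive call is $A_F(F(\ell-1)+r)$ which drops the index by exactly one, and in case~(ii) the recursive calls are $A_F(r)$ with $r<F(\ell-1)$. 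Hence the maximal $B$-index coefficient produced by a single term is $2$ (from the second case), so the per-term multiplier is bounded by $2$.

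The core of the argument is to track how the successive applications of the lemma propagate coefficients down to the level $F(\ell_F(n)-i)$. Following the transformation-process idea of Lemma~\ref{lem:estimate}, I would start from the single term $1\cdot A_F(F(\ell)+r)$ (where $n=F(\ell)+r$) and apply Lemma~\ref{lem:recurrenceFib} repeatedly. Each application replaces a term $A_F(F(p)+r_{p,q})$ with a bounded linear combination of $B$-values plus at most two new $A_F$-terms of strictly smaller index, and the sum of the absolute values of the $A_F$-coefficients at most doubles per application (since case~(i) produces two terms $A_F(F(p-1)+r)$ and $A_F(r)$ each with coefficient inherited from the original, and case~(ii) produces the single term $A_F(r)$ with coefficient doubled). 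After $\ell-(\ell_F(n)-i)$ applications we reach level $F(\ell_F(n)-i)$, so the total $A_F$-coefficient mass arriving there is bounded by $2^{\,\ell-(\ell_F(n)-i)}\le 2^{\,i+1}$, using $\ell_F(n)\in\{\ell,\ell-1\}$. Finally, since each surviving term contributes to $b_i(n)$ with a multiplier of absolute value at most $3$ coming from the leading coefficients (the worst being $|2B(\ell')-B(\ell'-1)-B(\ell'-2)|$ expressed in the $B$-basis, whose $B(\ell')$-coefficient is $2$, but one should account for the $B(\ell'-1)$ and $B(\ell'-2)$ contributions funneled from the neighbouring levels as in the base-$2$ estimate), I would bound $|b_i(n)|\le 3\cdot 2^{\,i+1}=6\cdot 2^i$.

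The main obstacle I anticipate is the bookkeeping of cross-level contributions: just as in Lemma~\ref{lem:estimate} the coefficient $a_i$ collected a $6$ rather than a $4$ because a first-case term at one level can spawn a second-case term at the next, here I must verify that the leading terms $B(\ell')$, $B(\ell'-1)$, $B(\ell'-2)$ produced at several adjacent levels funnel correctly into a single $b_i(n)$ and that the aggregate multiplier is exactly $3$ (so that $3\cdot 2^{i+1}=6\cdot 2^i$ comes out). Concretely, the $B(\ell_F(n)-i)$ coefficient receives a $+2$ or $+1$ from terms at level $\ell_F(n)-i$, a $-1$ from first/second-case leading terms at level $\ell_F(n)-i+1$, and a $-1$ from second-case terms at level $\ell_F(n)-i+2$; summing the absolute multipliers across these at most three contributing levels gives the constant $3$. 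Since the statement only needs a rough bound and $6\cdot 2^i$ is generous, I would not optimize the constant but simply check that the total multiplier stays at most $3$, giving $|b_i(n)|\le 6\cdot 2^i$. The ``in particular'' clause for $b_i(\alpha)$ then follows immediately, because $b_i(\alpha)=\lim_{n}b_i(e_n(\alpha))$ by the convergence guaranteed through Lemma~\ref{lem:convFib}, and the bound $6\cdot 2^i$ is uniform in $n$.
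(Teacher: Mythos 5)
Your plan is the paper's plan: run the transformation process of Lemma~\ref{lem:estimate} with Lemma~\ref{lem:recurrenceFib} in place of Lemma~\ref{lem:recurrence}, bound the total mass of $A_F$-coefficients via doubling per application, and multiply by a bounded per-term contribution. However, two steps as written are incorrect. First, your claimed ``crucial simplification'' --- that only the single level $F(\ell_F(n)-i)$ feeds into $B(\ell_F(n)-i)$, so the per-term multiplier is at most $2$ --- is false, and your own last paragraph contradicts it. Both cases of Lemma~\ref{lem:recurrenceFib} create leading terms of \emph{lower} index, namely $-B(\ell-1)$ in both cases and $-B(\ell-2)$ in the second, so terms at the three levels $\ell-i$, $\ell-i+1$ and $\ell-i+2$ all contribute to $b_i(n)$; the paper handles this by listing three ``forms'' of contributing terms, in exact parallel with the two forms of the base-$2$ proof. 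Second, the constant $3$ is not obtained by ``summing the absolute multipliers across the three contributing levels'' (your list of direct contributions $+2/{+1}$, $-1$, $-1$ would sum worst cases to $4$, and summing across levels is not the right aggregation anyway). It is the worst-case \emph{net} contribution of a single term of the third form: a second-case term at level $\ell-i+2$ contributes $-B(\ell-i)$ directly and spawns a first-form term with coefficient $2$, whose own contribution is $+2$ or $+4$, for a net of $+1$ or $+3$.

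There is also an accounting mismatch you would need to repair: you multiply the mass \emph{arriving at level $\ell-i$} by the multiplier $3$, but these belong to two different bookkeeping schemes. For instance, a second-case term at level $\ell-i+1$ contributes $-1$ to $b_i(n)$ while its unique child $2A_F(r'')$ lies at level $\ell-i-1$, so it sends no mass to level $\ell-i$ at all; hence $3\times(\text{mass at level } \ell-i)$ does not dominate $|b_i(n)|$. The fix is the paper's: freeze the linear combination once every remaining $A_F$-term has level at most $\ell-i+2$ (total mass at most $2^i$ --- note that for Fibonacci $\ell_F(n)=\ell$ exactly, there is no $\{\ell,\ell-1\}$ dichotomy as in base $2$), and charge each frozen term its full eventual contribution, at most $2$, $1$, $3$ for the three forms respectively, giving $|b_i(n)|\le 3\cdot 2^i\le 6\cdot 2^i$. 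Your treatment of the ``in particular'' statement via Lemma~\ref{lem:convFib} and uniformity of the bound is fine.
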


\begin{proof}
The proof follows the same lines as the proof of Lemma~\ref{lem:estimate}.
Let us write $n = F(\ell) + r$ with $\ell \geq 2$ and $0 \leq r < F(\ell-1)$. 
Using Definition~\ref{def:B-decomp}, let us write
$$
A_F(n)=\sum_{j=0}^{\ell} b_j(n)\, B(\ell-j),
$$
where $b_j(n)$ are integers, $b_{0}(n)\neq 0$. 
Let us fix some $i \in \{0,1,\dots,\ell\}$.
By Lemma~\ref{lem:recurrenceFib}, terms of the form 
\begin{equation}\label{eq:formFib}
\begin{array}{ll}
A_F(F(\ell-i)+r'), 		
& \text{where } r'\in\{0,\ldots,F(\ell-i-1)-1\}, \text{ or }\\
A_F(F(\ell-i+1)+r''), 	
& \text{where } r''\in\{0,\ldots,F(\ell-i)-1\}, \text{ or }\\
A_F(F(\ell-i+2)+r'''), 
& \text{where } r'''\in\{F(\ell-i),\ldots,F(\ell-i+1)-1\},
\end{array}
\end{equation}
are the only ones possibly contributing to $b_i(n)$. 

Terms of the first form give either $B(\ell-i)$ or $2B(\ell-i)$, depending on whether $0 \leq r' < F(\ell-i-2)$ or $F(\ell-i-2) \leq r' < F(\ell-i-1)$ respectively.

Terms of the second form with $F(\ell-i-1) \leq r'' <F(\ell-i)$ gives $-B(\ell-i)$ with one application of the lemma. 
If $F(\ell-i-2) \leq r'' <F(\ell-i-1)$, a first application of the lemma gives $-B(\ell-i)$ and the term $A_F(F(\ell-i)+r)$ which is of the first form. 
A second application of the lemma then gives $2B(\ell-i)$ and so the final contribution is $B(\ell-i)$.
Similarly, if $0 \leq r'' <F(\ell-i-2)$, the contributions given by the two applications of the lemma cancel each other out.

Like for terms of the second form, terms of the third form need two applications of the lemma because the first application gives a term $2 A_F(r''') = 2A_F(F(\ell-i)+r'''')$. 
The final contribution is then either $B(\ell-i)$ or $3B(\ell-i)$, depending on whether $0 \leq r'''' < F(\ell-i-2)$ or $F(\ell-i-2) \leq r'''' < F(\ell-i-1)$, respectively.

Mimicking the proof of Lemma~\ref{lem:estimate}, iterating Lemma~\ref{lem:recurrenceFib} on $A_F(F(\ell)+r)$ gives a linear combination of the form 
\[
	\sum_{j= \ell-i+1}^{\ell} y_j B(j) 
	+
	\sum_{j=0}^{\ell-i+2} 
	x_j A_F(F(j)+r'_j),
\]
where 
\[
	\sum_{j=0}^{\ell-i+2} |x_j|
	\leq 2^i.
\] 
We conclude by observing that 
\begin{eqnarray*}
	|b_i(n)| 
	& \leq & 
	2 |x_{\ell-i}|
	+
	|x_{\ell-i+1}|	
	+ 
	3 |x_{\ell-i+2}| \\
	& \leq & 
	6 \cdot 2^i.
\end{eqnarray*}
\end{proof}

Let $n$ be an integer such that $\rep_F(n)=10r_1\cdots r_k$ with $k \geq 1$ and $r_i\in\{0,1\}$ for all $i$. 
We define
$$\relpos_F(n)=\sum_{i=1}^k\frac{r_i}{\varphi^i} \quad \text{ and } \quad \log_F(n) = |\rep_F(n)|-1+\relpos_F(n).$$
Observe that $\relpos_F(e_n(\alpha))\to\alpha$ as $n\to +\infty$ and $\lfloor \log_F n \rfloor = |\rep_F(n)|-1$. 
As in the case of the base-$2$ expansions, we will introduce an auxiliary function $\Psi(\alpha)$, for $\alpha\in[0,1)$, defined as the limit of a converging sequence built on the $B$-decomposition of $A_F(e_n(\alpha))$. 
For all $n\ge 1$, let $\psi_n$ be the function defined, for $\alpha\in[0,1)$, by 
$$
\psi_n(\alpha) = \frac{A_F(e_n(\alpha))}{c\, \beta^{\log_F(e_n(\alpha))}},
$$
where $\beta$ and $c$ come from~\eqref{eq:recB}.
We have depicted the first functions $\psi_3,\ldots,\psi_{11}$ in Figure~\ref{fig:phin2}. 
For instance, $\psi_3$ is a step function built on two subintervals because $w_3(\alpha)$ can only takes two values: $100$ and $101$. 
In general, $w_n(\alpha)$ takes $F(n-2)$ values.
\begin{figure}[h!tb]
    \centering
    \scalebox{.29}{\includegraphics{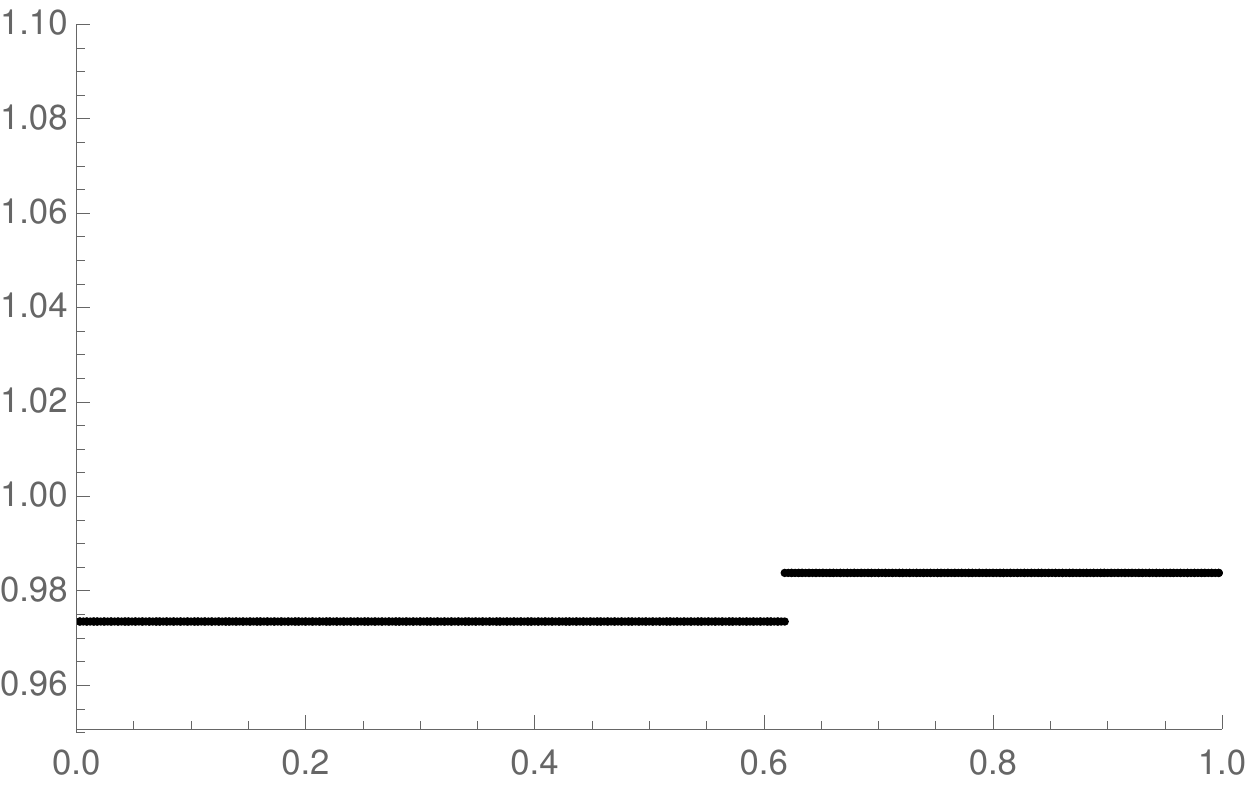}}\quad \scalebox{.29}{\includegraphics{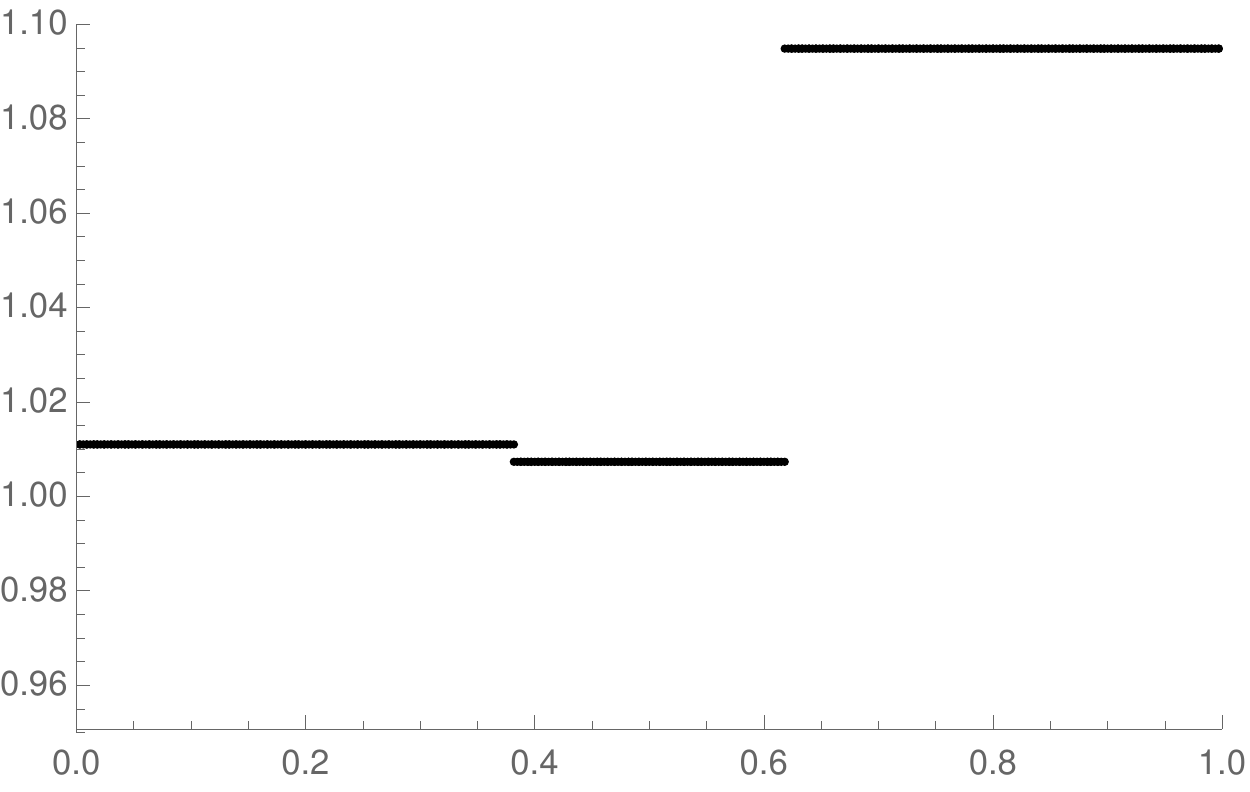}}\quad \scalebox{.29}{\includegraphics{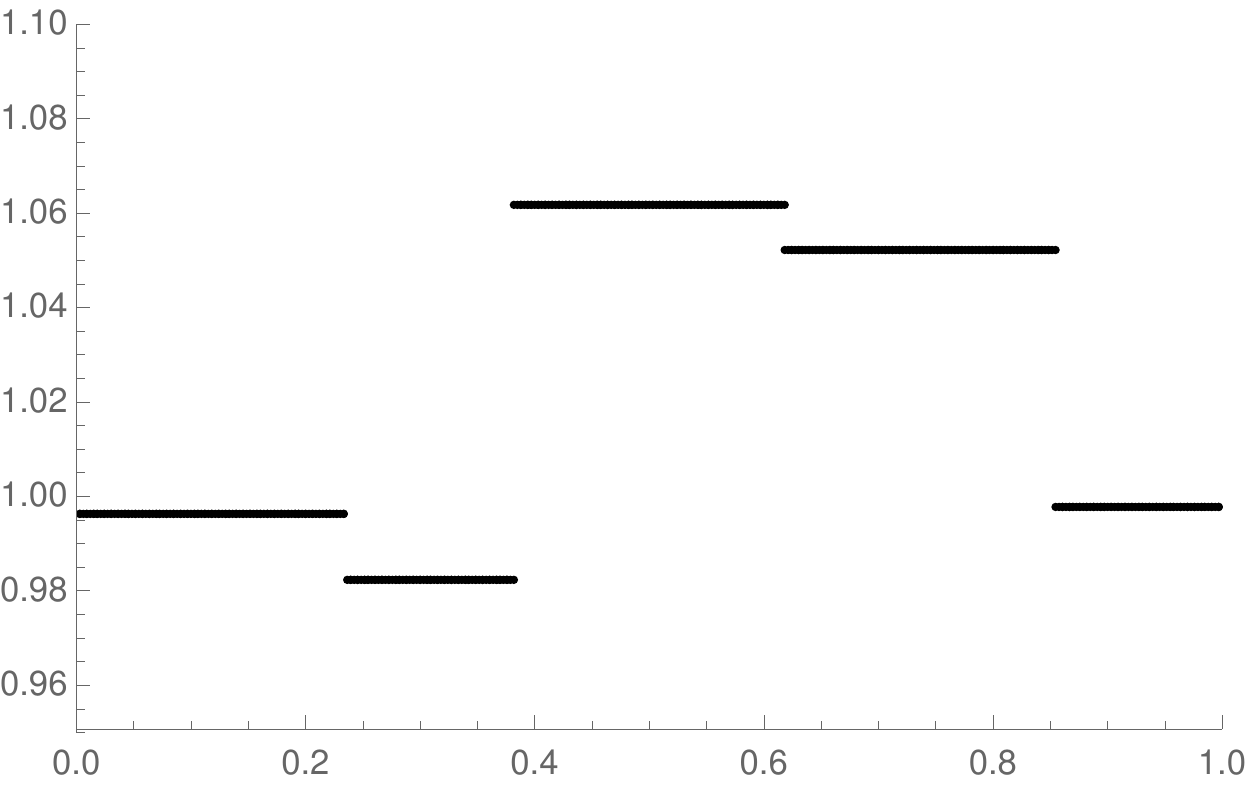}}\\

\scalebox{.29}{\includegraphics{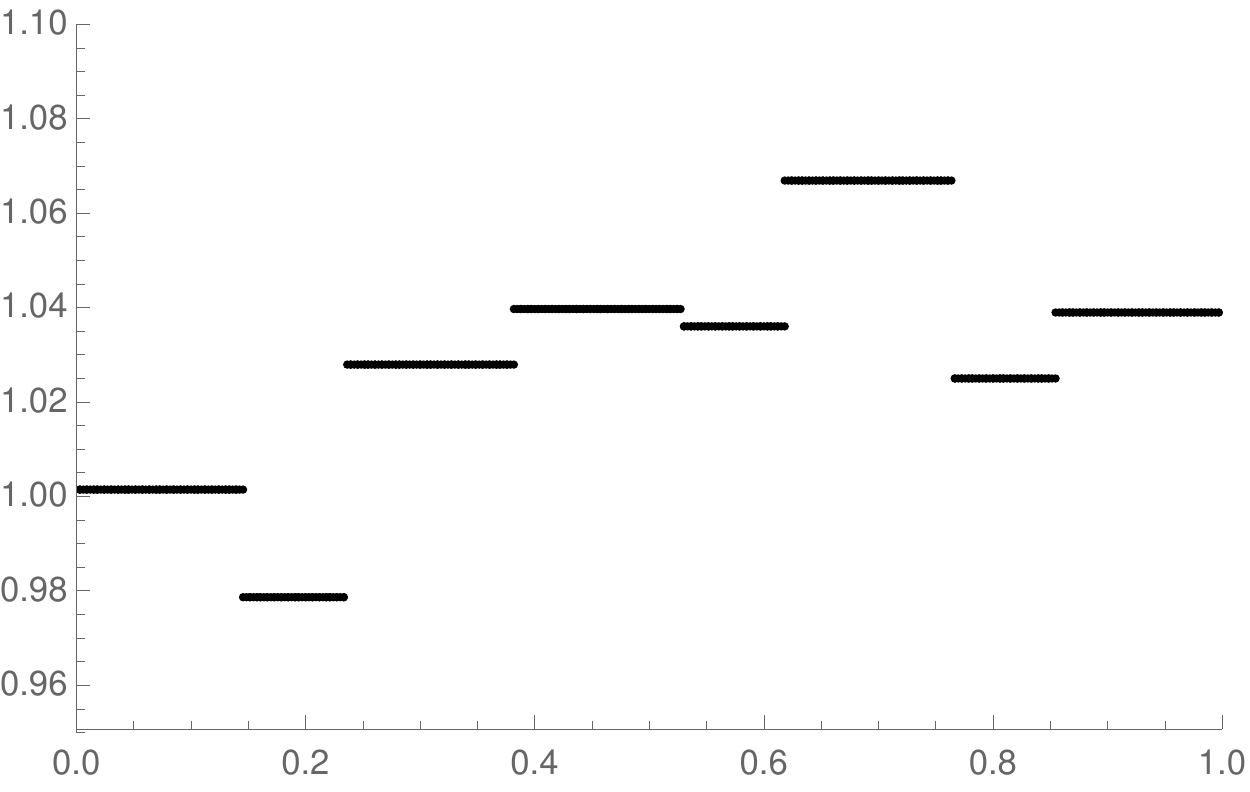}}\quad \scalebox{.29}{\includegraphics{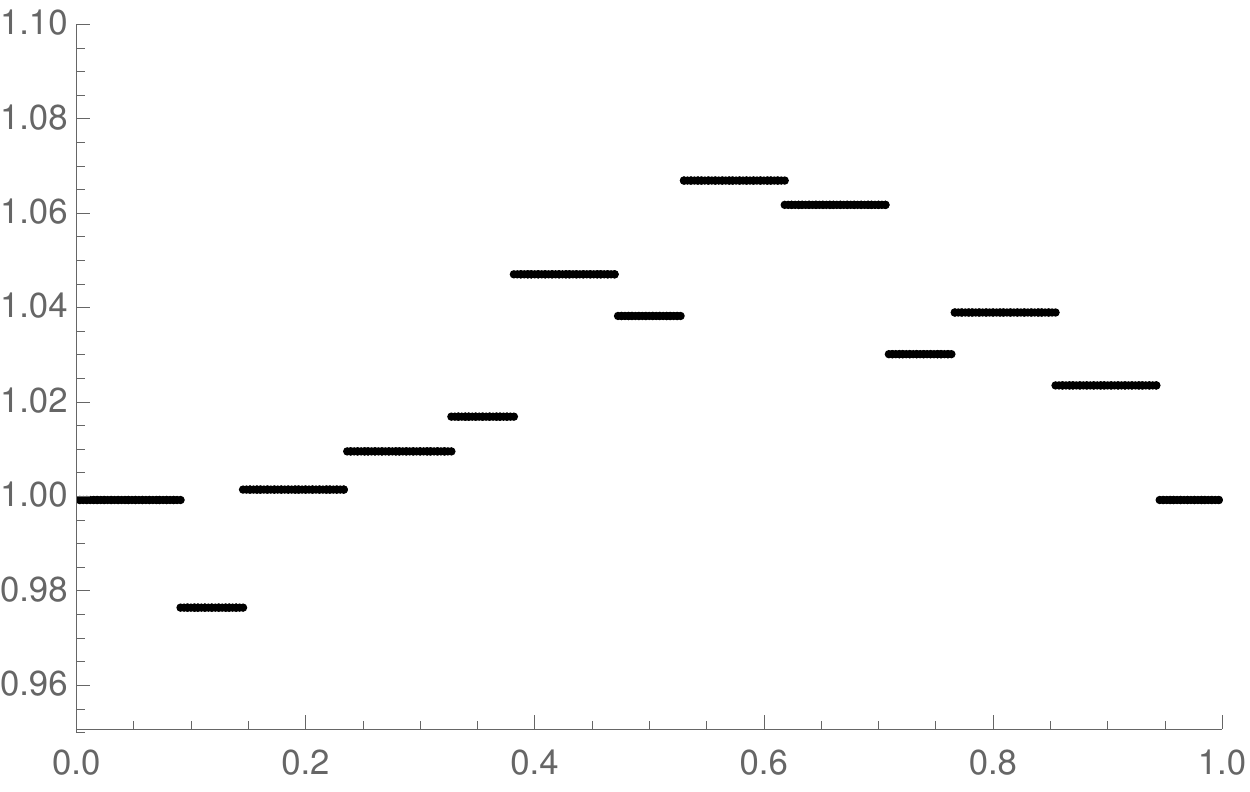}}\quad \scalebox{.29}{\includegraphics{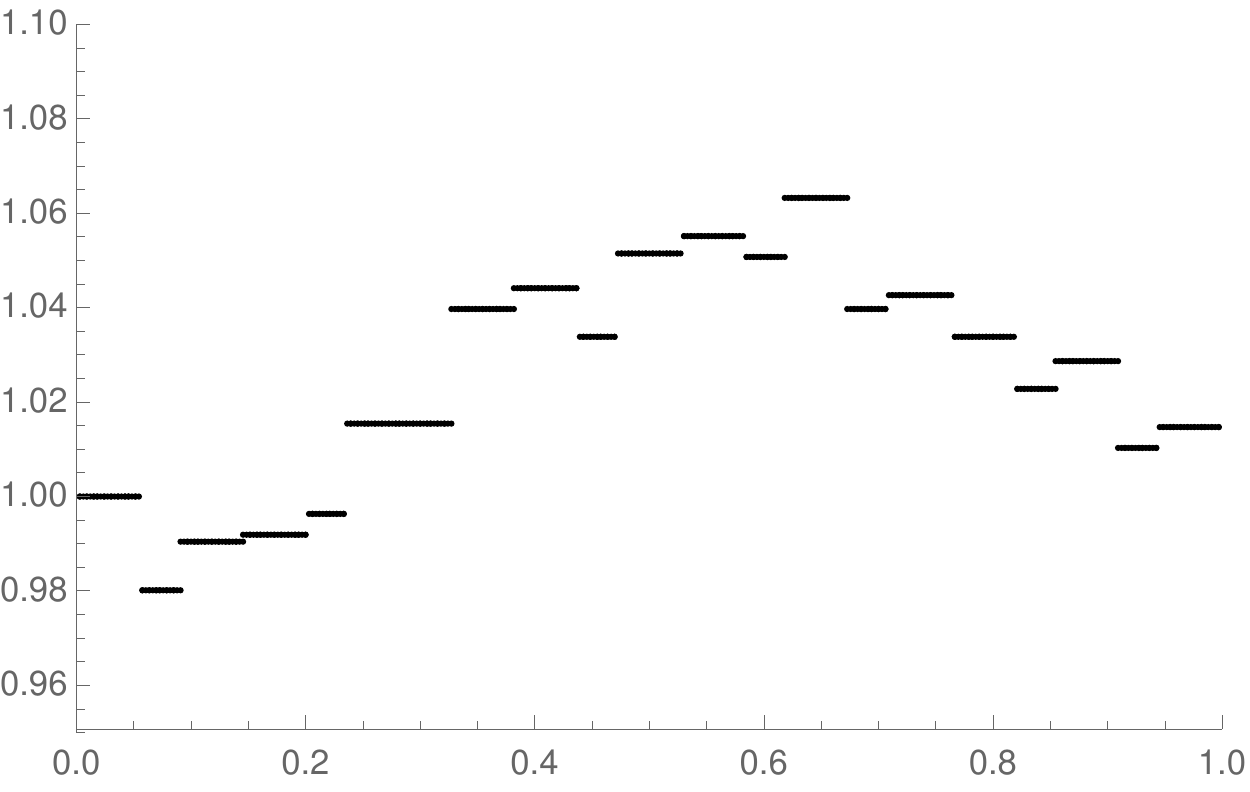}}\\

\scalebox{.29}{\includegraphics{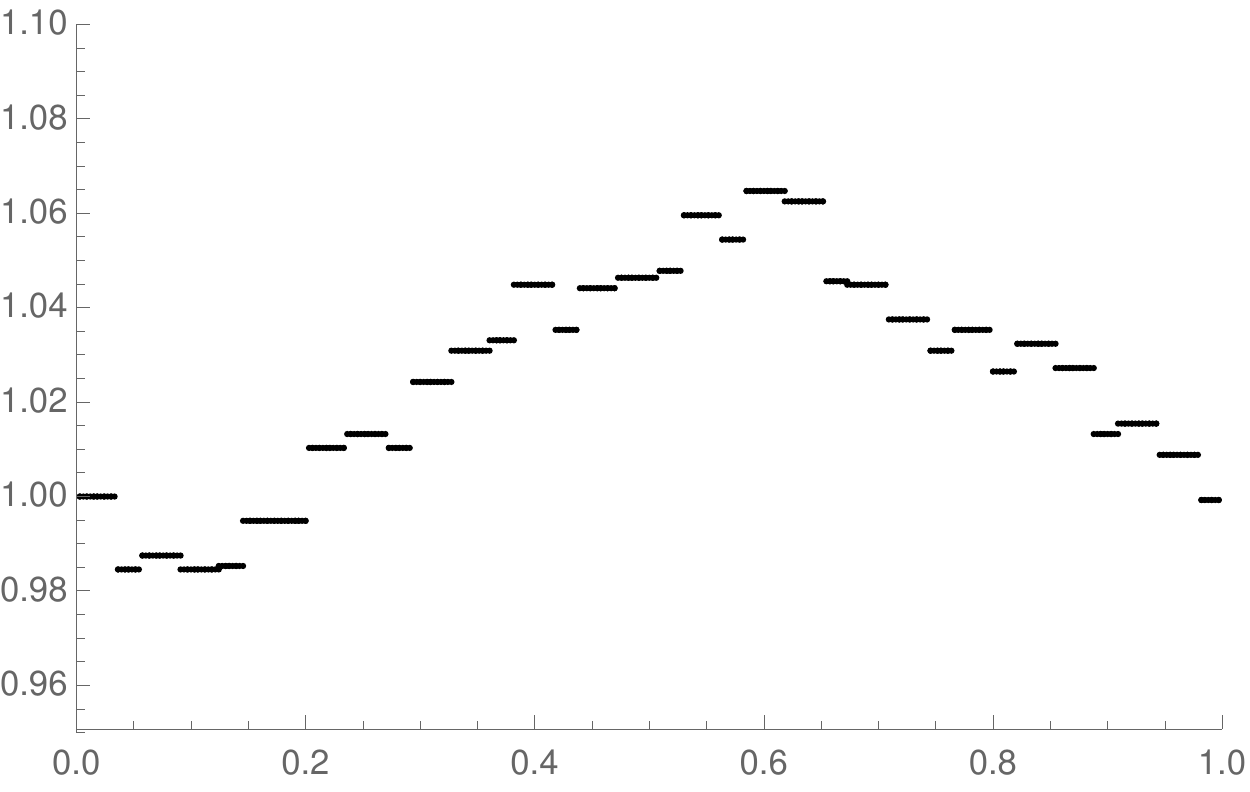}}\quad \scalebox{.29}{\includegraphics{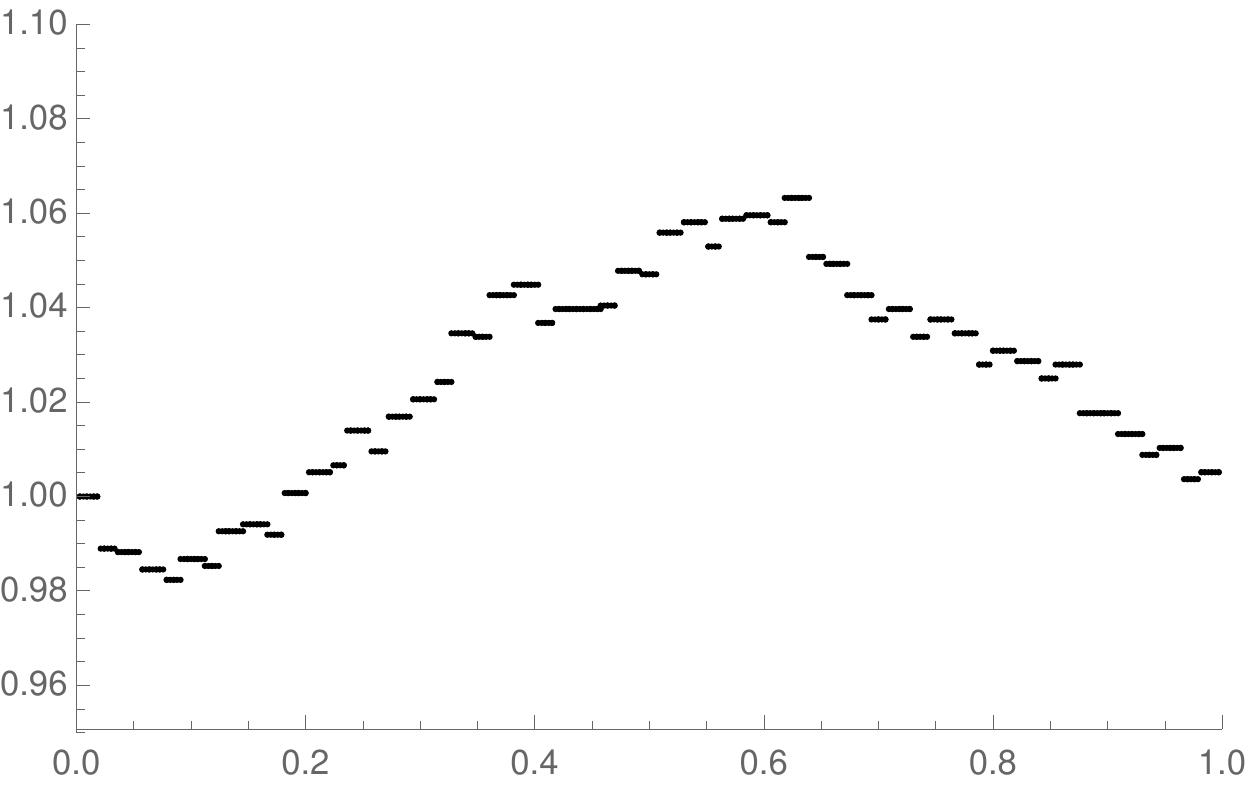}}\quad \scalebox{.29}{\includegraphics{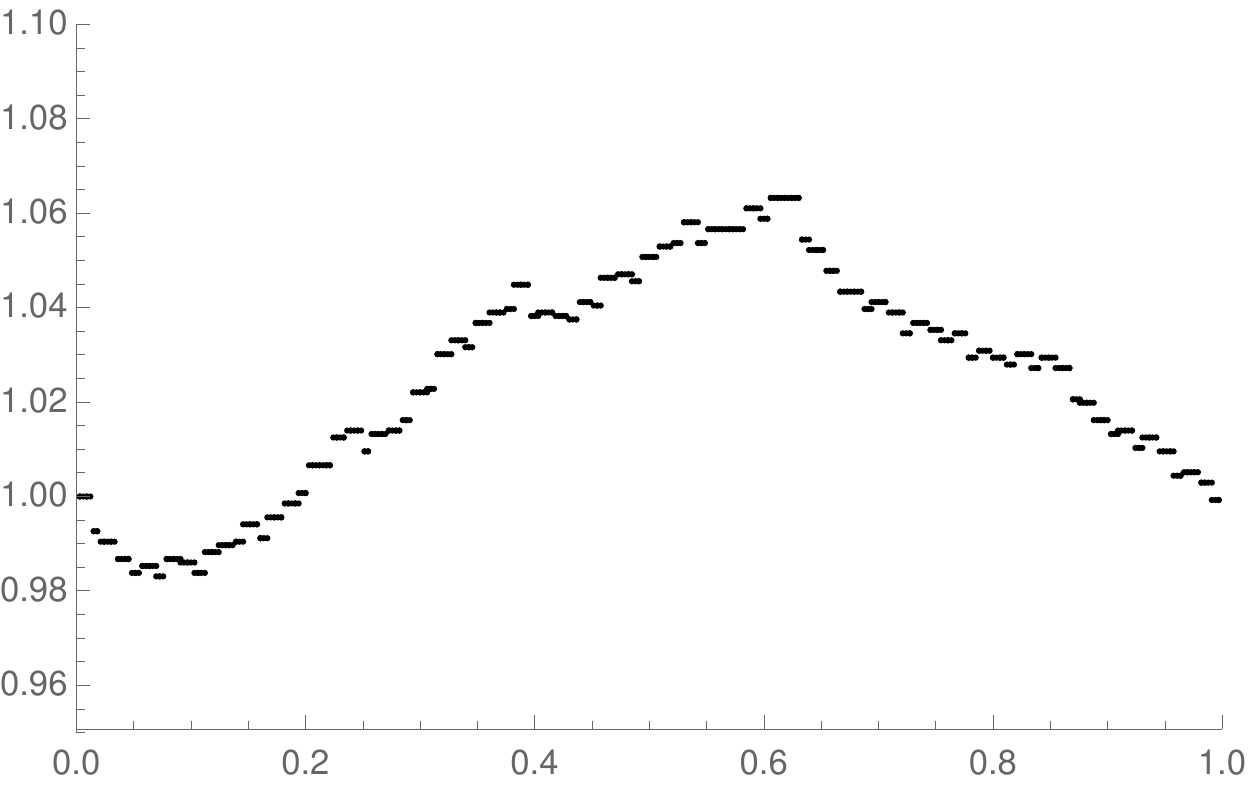}}
    \caption{Representation of $\psi_3,\ldots,\psi_{11}$.}
    \label{fig:phin2}
\end{figure}

\begin{proposition}\label{pro:existence_limit}
The sequence $(\psi_n)_{n\ge 1}$ uniformly converges to the function $\Psi$ defined for $\alpha\in[0,1)$ by
$$
\Psi(\alpha) := \dfrac{1}{\beta^{\alpha}} \sum\limits_{i=0}^{+\infty} \dfrac{b_i(\alpha)}{\beta^i}.
$$
\end{proposition}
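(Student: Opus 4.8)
The plan is to follow, step by step, the blueprint of the proof of Proposition~\ref{pro:limexi}, transposing each ingredient from base $2$ to the Fibonacci setting. First I would make $\psi_n$ explicit through the $B$-decomposition. Since $e_n(\alpha)\in[F(n-1),F(n))$, its normal $F$-representation has length $n$, so $\ell_F(e_n(\alpha))=n-1$ and $\log_F(e_n(\alpha))=(n-1)+\relpos_F(e_n(\alpha))$. Using Definition~\ref{def:B-decomp} this yields
$$
\psi_n(\alpha)=\frac{1}{\beta^{\relpos_F(e_n(\alpha))}}\sum_{i=0}^{n-1}b_i(e_n(\alpha))\,\frac{B(n-1-i)}{c\,\beta^{n-1}},
$$
so $\psi_n$ factors as a prefactor times a finite sum. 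I would treat the two factors separately and recombine them at the end.

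For the prefactor, I would show that $\relpos_F(e_n(\alpha))\to\alpha$ uniformly. Writing $\rep_\varphi(\alpha)=d_1d_2\cdots$, the word $w_n(\alpha)=10d_1\cdots d_{n-2}$ records exactly $d_1,\dots,d_{n-2}$, so that $\alpha-\relpos_F(e_n(\alpha))=\sum_{i\ge n-1}d_i\varphi^{-i}<\varphi^{2-n}$ by~\eqref{eq:maj_sum_phi}, a bound independent of $\alpha$. Since $x\mapsto\beta^{-x}$ is bounded and uniformly continuous on $[0,1]$, it follows that $\beta^{-\relpos_F(e_n(\alpha))}\to\beta^{-\alpha}$ uniformly.

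The heart of the argument is the uniform convergence of the finite sum to $\sum_{i\ge 0}b_i(\alpha)\beta^{-i}$, and here the decisive arithmetic fact is that $\beta>2$, so that $2/\beta<1$. Indeed, by Lemma~\ref{lem:estimateFib} the series $\sum_i b_i(\alpha)\beta^{-i}$ is dominated termwise by $6(2/\beta)^i$, hence converges absolutely and uniformly by the Weierstrass $M$-test, with a uniform bound. I would then split the difference in two. First, replacing $B(n-1-i)/(c\beta^{n-1})$ by $\beta^{-i}$ costs, using $\bigl|B(m)/(c\beta^m)-1\bigr|\le K\theta^m$ with $\theta=\max(|\beta_2|,|\beta_3|)/\beta<1$ coming from~\eqref{eq:recB}, an error bounded by $6K\sum_{i=0}^{n-1}(2/\beta)^i\theta^{n-1-i}$; since $\beta\theta=\max(|\beta_2|,|\beta_3|)<2$, the geometric sum is dominated by its last term and the whole expression collapses to a constant multiple of $(2/\beta)^n$, which tends to $0$ uniformly. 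Second, for $\sum_{i=0}^{n-1}b_i(e_n(\alpha))\beta^{-i}\to\sum_{i\ge 0}b_i(\alpha)\beta^{-i}$ I would fix $M$ so that the tails $\sum_{i\ge M}6(2/\beta)^i$ are below any prescribed $\epsilon$; the contributions of indices $i\ge M$ on both sides are then negligible, while for $i<M$ Lemma~\ref{lem:convFib} guarantees $b_i(e_n(\alpha))=b_i(\alpha)$ as soon as $n$ exceeds a threshold depending only on $M$, so the remaining head vanishes identically.

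Finally I would recombine: both factors converge uniformly to uniformly bounded limits, so their product $\psi_n$ converges uniformly to $\beta^{-\alpha}\sum_{i\ge 0}b_i(\alpha)\beta^{-i}=\Psi(\alpha)$, via the elementary estimate $|f_ng_n-fg|\le|f_n|\,|g_n-g|+|f_n-f|\,|g|$. The main obstacle is securing \emph{uniformity} of the coefficient stabilization, namely that the threshold produced by Lemma~\ref{lem:convFib} beyond which the first $M$ coefficients freeze can be taken independent of $\alpha$; this is exactly what the prefix structure of $w_n(\alpha)=10d_1\cdots d_{n-2}$ provides. The secondary technical point, absent in the base-$2$ case where powers of $3$ appear cleanly, is the passage from $B(m)$ to $c\,\beta^m$, handled by the geometric decay of the subdominant roots together with the inequality $2/\beta<1$.
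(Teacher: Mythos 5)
Your proposal is correct and takes essentially the same route as the paper's own proof: the same expression for $\psi_n$ via the $B$-decomposition, the same uniform bound $|\relpos_F(e_n(\alpha))-\alpha|<\varphi^{2-n}$ coming from~\eqref{eq:maj_sum_phi}, the same reliance on Lemma~\ref{lem:convFib}, Lemma~\ref{lem:estimateFib} and~\eqref{eq:recB} for the convergence of the sum, and the same final splitting of $|\psi_n(\alpha)-\Psi(\alpha)|$ into a prefactor term and a sum term. The only difference is one of detail: you make explicit (Weierstrass $M$-test, the geometric error in replacing $B(m)$ by $c\,\beta^m$ using the subdominant roots, and the $\alpha$-independent stabilization threshold for the first $M$ coefficients) what the paper compresses into the phrase ``due to Lemma~\ref{lem:estimateFib} and~\eqref{eq:recB}, the sequence of partial sums uniformly converges to the series.''
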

\begin{proof}
Using the $B$-decomposition of $A_F(e_n(\alpha))$, we have, with $\ell_F(n) = n-1$,
\begin{align}
\label{eq:psi_n}
\psi_n(\alpha)=
\frac{A_F(e_n(\alpha))}{c\beta^{\log_F(e_n(\alpha))}}
&=\frac{1}{\beta^{\relpos_F(e_n(\alpha))}}\sum_{i=0}^{n-1} b_i(e_n(\alpha))\, \frac{B(n-1-i)}{c\beta^{n-1}}.
\end{align}
 
Firstly, the sum is converging when $n$ tends to infinity to the convergent series
$$\sum_{i=0}^{+\infty} \frac{b_i(\alpha)}{\beta^{i}}.$$
Indeed, thanks to Lemma~\ref{lem:convFib}, the sequence of finite words $(\mathsf{Bdec}(A_F(e_n(\alpha))))_{n\ge 1}$ converges to $\mathbf{b}(\alpha)$. 
Moreover, due to Lemma~\ref{lem:estimateFib} and \eqref{eq:recB}, the sequence of partial sums uniformly converges to the series.

Secondly, the sequence of functions $(\relpos_F(e_n(\alpha)))_{n\ge 1}$ is uniformly convergent. 
Indeed, if $\rep_\varphi(\alpha) = d_1d_2d_3\cdots$, then using~\eqref{eq:maj_sum_phi}, we have
\begin{equation}
\label{eq:diff relp e_n alpha}
\left|  \relpos_F(e_n(\alpha)) - \alpha  \right|
= \left| \sum_{i=1}^{n-2}\frac{d_i}{\varphi^i} - \sum_{i= 1}^{+\infty} \frac{d_i}{\varphi^{i}} \right| < \frac{1}{\varphi^{n-2}}.
\end{equation}

Let $\epsilon>0$. 
To conclude with the proof, we use the same reasoning as in the proof of Proposition~\ref{pro:limexi}. 
Indeed, using~\eqref{eq:psi_n}, the inequality
\begin{align*}
| \psi_n(\alpha) - \Psi(\alpha) | &\le \left| \frac{1}{\beta^{\relpos_F(e_n(\alpha))}} \right| \cdot \left| \sum_{i=0}^{n-1} b_i(e_n(\alpha)) \frac{B(n-1-i)}{c\beta^{n-1}} - \sum_{i=0}^{+\infty} \frac{b_i(\alpha)}{\beta^{i}} \right|\\ &+ \left| \sum_{i=0}^{+\infty} \frac{b_i(\alpha)}{\beta^{i}} \right| \cdot \left| \frac{1}{\beta^{\relpos_F(e_n(\alpha))}} - \frac{1}{\beta^{\alpha}} \right|<\epsilon
\end{align*}
holds for all $\alpha\in[0,1)$ and $n$ large enough. 
\end{proof}

Instead of considering rational numbers of the form $r/2^k$, we use the set 
$$D:=\left\{ \sum_{i=1}^k \frac{r_i}{\varphi^i} \mid k\ge 1,\ r_1\cdots r_k \in\{1,\varepsilon\}\{0,01\}^*\right\},$$
which is dense in $[0,1]$. 
The next result makes explicit the values taken by $\Psi$ on the set $D$.
\begin{lemma}\label{lem:equiv_rationnel}
    Let $r_1\cdots r_k \in\{1,\varepsilon\}\{0,01\}^*$ with $k \geq 1$ and $\alpha=\sum_{i=1}^k {r_i}/{\varphi^i}$. 
    We have
$$\Psi\left(\alpha\right)
=\sum_{i=0}^{k-1} \frac{b_{i}(m)}{\beta^{i+\alpha}} + \frac{b_{k}(\alpha)}{\beta^{k+\alpha}} + \frac{b_{k+1}(\alpha)}{\beta^{k+1+\alpha}} $$
where  $m=\val_F(10r_1\cdots r_k)$ and $b_0(m)\cdots b_{k+1}(m)$ is the $B$-decomposition of $A_F(m)$.
\end{lemma}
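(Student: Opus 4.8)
The plan is to show that the auxiliary sequence $\psi_n$ stabilizes at the special arguments in $D$, mirroring the role played by Lemma~\ref{lem:rationalPhi} in the base-$2$ case. First I would record that, for the given $\alpha=\sum_{i=1}^k r_i/\varphi^i$, the word $10r_1\cdots r_k$ is the normal $F$-representation of $m=\val_F(10r_1\cdots r_k)$, so that $\rep_F(m)$ has length $k+2$ and $\lfloor\log_F m\rfloor=k+1$. For $n>k+2$, the word $w_n(\alpha)$ is precisely $10r_1\cdots r_k 0^{n-k-2}$, since $\rep_\varphi(\alpha)=r_1\cdots r_k0^\omega$ (no factor $11$ is introduced). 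Consequently $e_n(\alpha)=\val_F(10r_1\cdots r_k0^{n-k-2})$, which I would compute via the Fibonacci shift: appending zeros corresponds to multiplying the index of each contributing Fibonacci number, giving $e_n(\alpha)=\val_F(w_n(\alpha))$ with $\relpos_F(e_n(\alpha))\to\alpha$.

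The key step is to understand how the $B$-decomposition of $A_F(e_n(\alpha))$ behaves as $n$ grows. By Lemma~\ref{lem:convFib} applied with $u=0r_1\cdots r_k$ (of length $\ge 2$), the first $k+1$ coefficients $b_0,\ldots,b_{k}$ of $\mathsf{Bdec}(A_F(e_n(\alpha)))$ coincide with the coefficients $b_0(m),\ldots,b_k(m)$ of $\mathsf{Bdec}(A_F(m))$ for all large $n$, and they also agree with the limiting coefficients $b_i(\alpha)$. The trailing behavior is governed by the fact that appending zeros to the $F$-representation forces repeated applications of the first case of Lemma~\ref{lem:recurrenceFib} with $r=0$: each extra zero contributes a leading $B(\ell)-B(\ell-1)$ together with $A_F(F(\ell-1))+A_F(0)$, so the decomposition acquires a string of coefficients that I would track explicitly and show contributes negligibly after division by $\beta^{n-1}$, by the estimate of Lemma~\ref{lem:estimateFib} and the asymptotics~\eqref{eq:recB}.

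With these preliminaries, I would plug into the defining formula~\eqref{eq:psi_n} for $\psi_n$. Using $\ell_F(e_n(\alpha))=n-1$ and $\relpos_F(e_n(\alpha))\to\alpha$, the prefactor $\beta^{-\relpos_F(e_n(\alpha))}$ converges to $\beta^{-\alpha}$, while the sum $\sum_{i=0}^{n-1}b_i(e_n(\alpha))B(n-1-i)/(c\beta^{n-1})$ converges, by Proposition~\ref{pro:existence_limit}, to $\sum_{i=0}^{+\infty}b_i(\alpha)/\beta^i$. I would then argue that because the first $k+1$ coefficients are frozen at $b_i(m)$, and the higher coefficients $b_k(\alpha),b_{k+1}(\alpha)$ carry the residual contribution that the statement isolates, the infinite series telescopes to the finite expression $\sum_{i=0}^{k-1}b_i(m)\beta^{-i}+b_k(\alpha)\beta^{-k}+b_{k+1}(\alpha)\beta^{-k-1}$. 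Multiplying by $\beta^{-\alpha}$ gives exactly the claimed formula for $\Psi(\alpha)$.

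The main obstacle will be the bookkeeping in the tail: one must verify that padding $\rep_F(m)$ with zeros does not disturb the coefficients below index $k$ and that the limiting series truncates in the precise manner asserted, so that only the two coefficients $b_k(\alpha)$ and $b_{k+1}(\alpha)$ survive beyond the frozen block $b_0(m),\ldots,b_{k-1}(m)$. Concretely, I expect the delicate point to be reconciling the index $k$ term, which comes from $b_k(m)$ in the decomposition of $A_F(m)$ but is written as $b_k(\alpha)$ in the statement; I would resolve this by invoking Lemma~\ref{lem:convFib} to conclude $b_k(\alpha)$ and $b_k(m)$ need not coincide (since $u$ only guarantees agreement up to index $k-1$), which is exactly why the statement separates these last two terms. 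Once this alignment of indices is pinned down, the rest follows from the uniform convergence already established in Proposition~\ref{pro:existence_limit}.
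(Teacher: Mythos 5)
Your proposal follows the same route as the paper --- identify $w_n(\alpha)=10r_1\cdots r_k0^{n-k-2}$, freeze the leading coefficients by Lemma~\ref{lem:convFib}, analyze the trailing zeros through Lemma~\ref{lem:recurrenceFib}, then pass to the limit via Proposition~\ref{pro:existence_limit} and~\eqref{eq:recB} --- but it has a genuine gap at the decisive step. The statement is an \emph{exact} identity: given Proposition~\ref{pro:existence_limit}, it amounts to proving that $b_i(\alpha)=b_i(m)$ for $i\le k-1$ and that $b_i(\alpha)=0$ for every $i\ge k+2$. For the block of indices $k+2\le i\le n-3$, your stated plan is to show that the coefficients generated by the trailing zeros ``contribute negligibly after division by $\beta^{n-1}$, by the estimate of Lemma~\ref{lem:estimateFib} and the asymptotics~\eqref{eq:recB}''. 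That tool cannot do this job: the generic bound $|b_i|\le 6\cdot 2^i$, combined with $B(n-1-i)/(c\beta^{n-1})\approx\beta^{-i}$, bounds the middle block only by a constant times $\sum_{i\ge k+2}(2/\beta)^i$, a fixed nonzero quantity independent of $n$; Lemma~\ref{lem:estimateFib} yields $o(1)$ only for the last \emph{two} coefficients $b_{n-2}(e_n(\alpha))$ and $b_{n-1}(e_n(\alpha))$, whose weights are $B(1)/(c\beta^{n-1})$ and $B(0)/(c\beta^{n-1})$. What is actually true, and what the paper proves, is that the middle block is \emph{identically zero}: iterating Lemma~\ref{lem:recurrenceFib} on the tail of zeros leaves the evaluation $A_F(F(n-k-2))=B(n-k-2)-B(1)+(n-k+3)B(0)$, i.e.\ \eqref{eq:BdecompFib} from Lemma~\ref{lem:Bdecomp_2ex}, because each $-B(\ell-1)$ cancels against the leading $B(\ell-1)$ of the next application; hence $\mathsf{Bdec}(A_F(e_n(\alpha)))$ has the exact shape $b_0(m)\cdots b_{k-1}(m)\,b_k\,b_{k+1}\,0^{n-k-4}\,b_{n-2}\,b_{n-1}$. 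Your word ``telescopes'' gestures at this cancellation, but the proposal never establishes it, and the justification it offers instead would only give the formula up to an error $O((2/\beta)^k)$.

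A second flaw is the internal contradiction about how far the frozen prefix extends. In your second paragraph you claim that Lemma~\ref{lem:convFib} with $u=0r_1\cdots r_k$ freezes $b_0,\ldots,b_k$ at $b_0(m),\ldots,b_k(m)$ and that these ``also agree with the limiting coefficients $b_i(\alpha)$''; in your last paragraph you correctly note that the lemma only guarantees agreement up to index $|u|-2=k-1$. Only the second version is right, and the difference is not cosmetic: in general $b_k(\alpha)\neq b_k(m)$. For instance, take $k=1$ and $r_1=0$, so $\alpha=0$ and $m=\val_F(100)=3$: then $\mathsf{Bdec}(A_F(3))=(1,-1,7)$, while $\mathbf{b}(0)=(1,0,0,\ldots)$ by~\eqref{eq:BdecompFib}, so $b_1(0)=0\neq-1=b_1(m)$. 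If your first claim held, one could replace $b_k(\alpha),b_{k+1}(\alpha)$ by $b_k(m),b_{k+1}(m)$ in the statement, which is false; this is precisely why the lemma isolates the last two terms.
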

\begin{proof} 
We have $10\rep_\varphi(\alpha)=10r_1\cdots r_k 0^\omega$ and $w_n(\alpha)$ is the prefix of length $n$ of this word. 
For large enough $n$, due to Lemma~\ref{lem:convFib}, $\mathsf{Bdec}(A_F(e_n(\alpha)))$ has a prefix equal to $b_0(m)\cdots b_{k-1}(m)$. 
More precisely, it is of the form 
$$b_0(m)\cdots b_{k-1}(m)\, b_{k}(e_n(\alpha))\, b_{k+1}(e_n(\alpha)) \, 0^{n-k-4}\, b_{n-2}(e_n(\alpha))\, b_{n-1}(e_n(\alpha)).$$
This is again a consequence of Lemma~\ref{lem:recurrenceFib}.
Applying recursively this lemma to $A_F(e_n(\alpha))$, we will be left with the evaluation of $A_F(F(n-k-2))$. 
As in the proof of Lemma~\ref{lem:Bdecomp_2ex} with \eqref{eq:BdecompFib}, $A_F(F(n-k-2))$ yields $B(n-k-2)-B(1)+(n-k+3)B(0)$ explaining the block of zeroes.

Due to Proposition~\ref{pro:existence_limit}, we get
\begin{align*}
\Psi(\alpha)=\lim_{n\to +\infty} & \frac{1}{c\, \beta^{\log_F(e_n(\alpha))}}
\biggl( 
 \sum_{i=0}^{k-1} b_{i}(m) B(n-1-i)  \\
 &+
 b_{k}(e_n(\alpha)) B(n-1-k) + b_{k+1}(e_n(\alpha)) B(n-2-k)\\
 & + b_{n-2}(e_n(\alpha)) B(1) + b_{n-1}(e_n(\alpha))B(0)
\biggr).
\end{align*}
Using~\eqref{eq:recB}, we get
$$\lim_{n\to +\infty} \frac{B(n-1-i)}{c\, \beta^{\log_F(e_n(\alpha))}}=\frac{1}{\beta^{i+\alpha}}$$
and, for $j=1,2$, 
$$\lim_{n\to +\infty} \frac{b_{k}(e_n(\alpha))B(n-j-k)}{c\, \beta^{\log_F(e_n(\alpha))}}= \frac{b_{k}(\alpha)}{\beta^{k+j-1+\alpha}}.$$
Now, 
$$\lim_{n\to +\infty} \frac{b_{n-2}(e_n(\alpha))B(1)}{c\, \beta^{\log_F(e_n(\alpha))}}=0 \quad \text{ 
and }\quad \lim_{n\to +\infty} \frac{b_{n-1}(e_n(\alpha))B(0)}{c\, \beta^{\log_F(e_n(\alpha))}}=0$$
since $|b_{n-2}(e_n(\alpha))| \le 6\cdot 2^{n-2}$ and $|b_{n-1}(e_n(\alpha))| \le 6\cdot 2^{n-1}$ by Lemma~\ref{lem:estimateFib}.
\end{proof}

In the rest of the section, we prove the following result which is an equivalent version of Theorem~\ref{thm:AsymptoFib}. 

\begin{theorem}\label{thm:convergenceFib}
The function $\Psi$ defined in Proposition~\ref{pro:existence_limit} is continuous on $[0,1)$ such that $\Psi(0)=1$, $\lim_{\alpha\to 1^-}\Psi(\alpha)=1$ and the sequence $(A_F(N))_{N\ge 0}$ which is the summatory function of the sequence $(\mathsf{s}_F(n))_{n\ge 0}$ satisfies, for $N \geq 3$,
$$
	A_F(N)=c\, \beta^{\log_F N} \Psi(\relpos_F(N))+o(\beta^{\lfloor \log_F N \rfloor})
$$
where $\beta$ is the dominant root of $X^3-2X^2-X+1$.
\end{theorem}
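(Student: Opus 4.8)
The plan is to mirror the four-step architecture used for Theorem~\ref{thm:convergence}, now with the sequence $(B(n))_{n\ge 0}$ in place of the powers of $3$ and the $\varphi$-data in place of the dyadic data. Everything will rest on the uniform convergence $\psi_n\to\Psi$ of Proposition~\ref{pro:existence_limit} together with the explicit evaluation of $\Psi$ on the dense set $D$ furnished by Lemma~\ref{lem:equiv_rationnel}. I would prove, in order, the asymptotic formula, the value $\Psi(0)=1$, the limit $\lim_{\alpha\to 1^-}\Psi(\alpha)=1$, and finally the continuity of $\Psi$.

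For the asymptotic formula, fix $N\ge 3$ and write $\rep_F(N)=10r_1\cdots r_k$, so that $\ell_F(N)=k+1$; setting $\alpha:=\relpos_F(N)=\sum_{i=1}^k r_i/\varphi^i$ one has $\log_F N=(k+1)+\alpha$ and $\lfloor\log_F N\rfloor=k+1$. Since $\val_F(10r_1\cdots r_k)=N$, Lemma~\ref{lem:equiv_rationnel} applies with $m=N$ and, after multiplying by $c\,\beta^{\log_F N}=c\,\beta^{k+1+\alpha}$, gives $c\,\beta^{\log_F N}\Psi(\alpha)=c\bigl(\sum_{i=0}^{k-1}b_i(N)\beta^{k+1-i}+b_k(\alpha)\beta+b_{k+1}(\alpha)\bigr)$. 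On the other hand the $B$-decomposition of Definition~\ref{def:B-decomp} gives $A_F(N)=\sum_{i=0}^{k+1}b_i(N)B(k+1-i)$, into which I would substitute $B(m)=c\beta^m+c_2\beta_2^m+c_3\beta_3^m$ from~\eqref{eq:recB}. Subtracting, the dominant contributions cancel except at the two top indices, leaving $A_F(N)-c\,\beta^{\log_F N}\Psi(\alpha)=c\beta\bigl(b_k(N)-b_k(\alpha)\bigr)+c\bigl(b_{k+1}(N)-b_{k+1}(\alpha)\bigr)+E$ with $E=\sum_{i=0}^{k+1}b_i(N)(c_2\beta_2^{k+1-i}+c_3\beta_3^{k+1-i})$. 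Using $|b_i|\le 6\cdot 2^i$ from Lemma~\ref{lem:estimateFib} together with $|\beta_2|,|\beta_3|<1<2<\beta$, the first two terms are $O(2^k)$ and a geometric estimate bounds $E$ by $O(2^k)$ as well; hence the whole difference is $O(2^k)=o(\beta^{k+1})=o(\beta^{\lfloor\log_F N\rfloor})$, which is the claimed formula.

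For the two boundary values I would argue through $\psi_n$. Since $e_n(0)=\val_F(10^{n-1})=F(n-1)$, uniform convergence gives $\Psi(0)=\lim_n A_F(F(n-1))/(c\beta^{n-1})$, and by~\eqref{eq:BdecompFib} in Lemma~\ref{lem:Bdecomp_2ex} one has $A_F(F(n-1))=B(n-1)+O(n)$, so the ratio tends to $1$ because $B(n-1)\sim c\beta^{n-1}$. For $\alpha\to 1^-$ I would use that $\rep_\varphi(\alpha)$ approaches $(10)^\omega$, so that for $\alpha$ close enough to $1$ the step function $\psi_n$ attains the value corresponding to $e_n=F(n)-1$; permuting the two limits by uniform convergence and invoking $A_F(F(n)-1)=B(n)$ from Proposition~\ref{pro:Bdom} yields $\lim_{\alpha\to 1^-}\psi_n(\alpha)=B(n)/(c\,\beta^{\,n-1+\relpos_F(F(n)-1)})$, which tends to $1$ as $n\to\infty$ since $B(n)\sim c\beta^n$ and $\relpos_F(F(n)-1)\to 1$.

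Finally, for continuity at $\alpha\in[0,1)$ I would write $\lim_{\gamma\to\alpha}|\Psi(\gamma)-\Psi(\alpha)|=\lim_n\lim_{\gamma\to\alpha}|\psi_n(\gamma)-\psi_n(\alpha)|$ using uniform convergence. When $\rep_\varphi(\alpha)$ is not eventually $0$, any $\gamma$ near $\alpha$ shares a long enough prefix with $\alpha$ to force $e_n(\gamma)=e_n(\alpha)$ for fixed $n$, so the inner expression vanishes. The delicate case is $\rep_\varphi(\alpha)=d_1\cdots d_j0^\omega$ with $d_j=1$ approached from below, where the tail $10^\omega$ of $\rep_\varphi(\alpha)$ is replaced in $\rep_\varphi(\gamma)$ by $0(10)^\omega$; a telescoping Fibonacci identity then shows that $e_n(\alpha)-e_n(\gamma)$ stays bounded (in fact at most $2$), so by Corollary~\ref{cor:majsF} the gap $A_F(e_n(\alpha))-A_F(e_n(\gamma))$ is $O(2^n)$ while the denominator has order $\beta^n$, and the factor $\beta^{-\relpos_F(\cdot)}$ changes only by $O(\varphi^{-n})$. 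Splitting $|\psi_n(\gamma)-\psi_n(\alpha)|$ into these two contributions exactly as in the proof of Theorem~\ref{thm:convergence} drives it to $0$ as $n\to\infty$. Controlling this last case—keeping $e_n(\alpha)-e_n(\gamma)$ and the induced numerator gap under control uniformly in $n$—is the step I expect to be the main obstacle.
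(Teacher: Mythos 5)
Your proposal is correct, and three of its four parts (the error term, $\Psi(0)=1$, and the continuity of $\Psi$) coincide with the paper's own proof: the same confrontation of Lemma~\ref{lem:equiv_rationnel} (applied with $m=N$) against the $B$-decomposition $A_F(N)=\sum_{i=0}^{k+1}b_i(N)B(k+1-i)$, the same use of Lemma~\ref{lem:estimateFib} and \eqref{eq:recB} with $2<\beta$ to get a remainder $O(2^k)=o(\beta^{k+1})$ (you fold $b_k(N)B(1)+b_{k+1}(N)B(0)$ into your error sum $E$ via \eqref{eq:recB} where the paper keeps these terms explicit, but this is algebraically identical), and the same two-case continuity argument (the paper even computes $e_n(\alpha)=e_n(\gamma)+1$ exactly, where you claim only boundedness, which indeed suffices once Corollary~\ref{cor:majsF} bounds the numerator gap by $O(2^n)$ against a denominator of order $\beta^{n-1}$). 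The genuine difference is your treatment of $\lim_{\alpha\to 1^-}\Psi(\alpha)=1$. The paper identifies $e_n(\alpha)=F(n)-1$ and then expands $A_F(F(n)-1)$ through its $B$-decomposition \eqref{eq:Bded de 10101010}, which requires the growth bound $|g_i|\le 2(\sqrt{2})^i$, the series identity $\sum_{i\ge 0}g_i/\beta^i=\beta$ from Lemma~\ref{lem:Bdecomp_2ex}, and the estimate \eqref{eq:thelastone}. You instead invoke Proposition~\ref{pro:Bdom}, which gives $A_F(F(n)-1)=B(n)$ on the nose, so that $\lim_{\alpha\to 1^-}\psi_n(\alpha)=B(n)/\bigl(c\,\beta^{\,n-1+\relpos_F(F(n)-1)}\bigr)\to 1$ follows immediately from $B(n)\sim c\beta^n$ together with $\relpos_F(F(n)-1)=\varphi^{-1}+\varphi^{-3}+\cdots\to 1$. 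This shortcut is valid and distinctly simpler: it replaces the paper's series manipulation by a one-line appeal to an identity already established earlier in the paper; the only thing the paper's longer route buys is that it stays entirely inside the $B$-decomposition formalism it had set up, whereas your argument shows that formalism is not actually needed at this point.
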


A representation of $\Psi$ is given in Figure~\ref{fig:FF}. 
It has been obtained by estimating $A_F(N)/(c\, \beta^{\log_F N})$ for $N$ between $2584$ and $4180$. 
\begin{figure}[h!tb]
    \centering
    \scalebox{.7}{\includegraphics{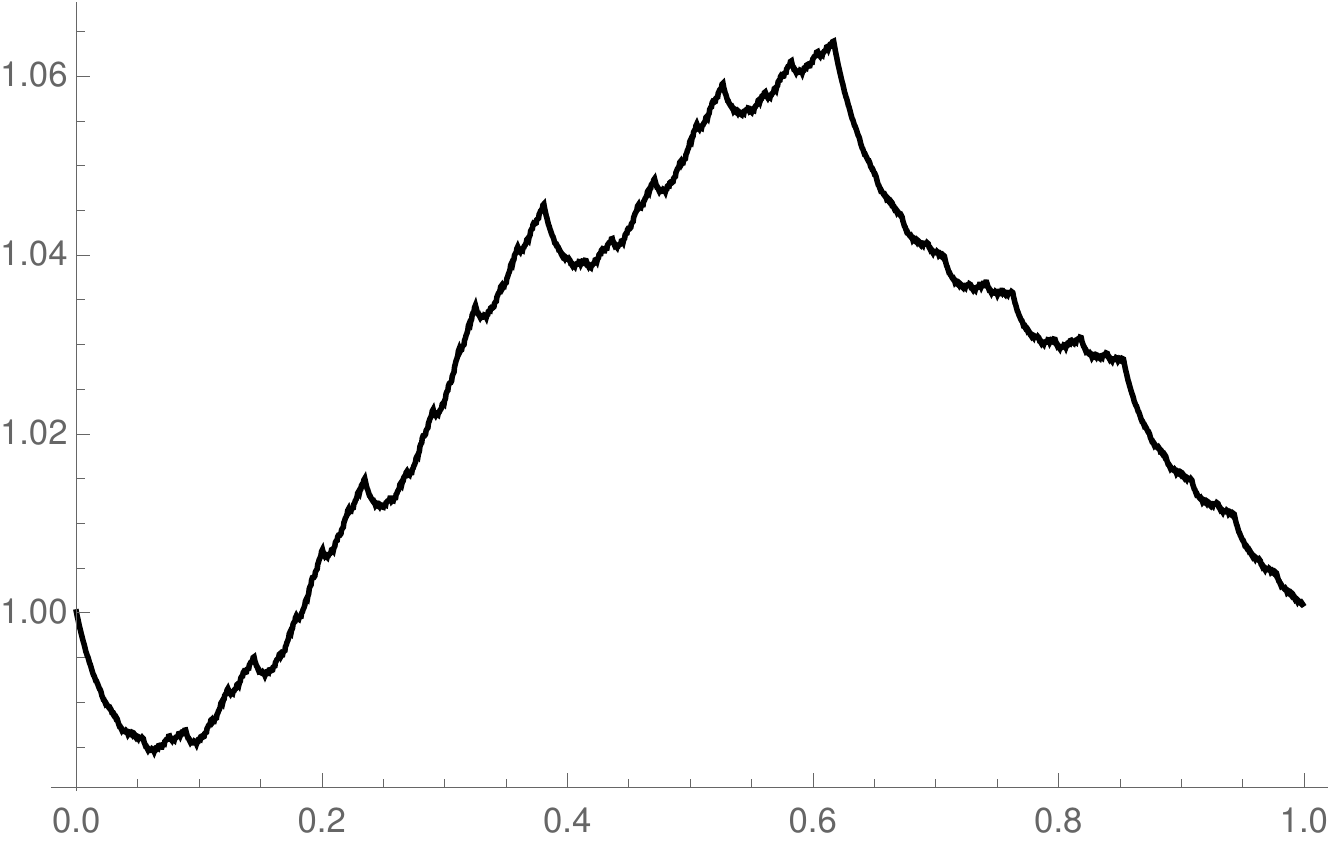}}
    \caption{The graph of $\Psi$.}
    \label{fig:FF}
\end{figure}

\begin{proof} This proof is divided into four parts: the error term for the sequence $(A_F(N))_{N\ge 0}$, the fact that $\Psi(0)=1$, the limit $\lim_{\alpha\to 1^-}\Psi(\alpha)=1$ and the continuity of the function $\Psi$. 

$\bullet$ We first focus on the error term. 
Let $\rep_F(N)=10r_1\cdots r_k$ with $k \geq 1$ and $r_1\cdots r_k \in\{1,\varepsilon\}\{0,01\}^*$. 
Observe that $k$ depends on $N$ since $k+2=|\rep_F(N)|$. 
By definition, we have
$$\relpos_F(N)=\sum_{i=1}^k \frac{r_i}{\varphi^i}.$$
On the one hand, Lemma~\ref{lem:equiv_rationnel} gives 
$$
c\, \beta^{\log_F N} \Psi(\relpos_F(N))
= c\, \sum_{i=0}^{k-1} b_{i}(N)\, \beta^{k+1-i} 
+ c\, \beta \, b_{k}(\relpos_F(N)) 
+ c \, b_{k+1}(\relpos_F(N)).
$$
On the other hand, we know that
$$A_F(N)=\sum_{i=0}^{k+1} b_{i}(N) B(k+1-i).$$
Thus, the error term is obtained by
\begin{eqnarray*}
R(N)&:=&A_F(N)-c\, \beta^{\log_F N} \Psi(\relpos_F(N))\\
    &=&\sum_{i=0}^{k-1} b_{i}(N) \left( B(k+1-i)-c\, \beta^{k+1-i}\right) \\
&&+b_k(N) B(1) - c\, \beta \, b_k(\relpos_F(N)) +
b_{k+1}(N) - c \, b_{k+1}(\relpos_F(N)).
\end{eqnarray*}
Let us divide the latter expression by $\beta^{k+1}$. 
Using \eqref{eq:recB}, we get
\begin{eqnarray*}
\frac{R(N)}{\beta^{k+1}}&=&\sum_{i=0}^{k-1} \frac{b_{i}(N)}{\beta^i}\ \frac{c_2\beta_2^{k+1-i}+c_3\beta_3^{k+1-i}}{\beta^{k+1-i}} 
+\frac{b_k(N) B(1) - c\, \beta\, b_k(\relpos_F(N))}{\beta^{k+1}} \\
&&+\frac{b_{k+1}(N) - c \, b_{k+1}(\relpos_F(N))}{\beta^{k+1}}.
\end{eqnarray*}
Firstly, we have
$$\frac{|c_2\beta_2^{k+1-i}+c_3\beta_3^{k+1-i}|}{\beta^{k+1-i}}\le 2 |c_2|\, \left(\frac{|\beta_2|}{\beta}\right)^{k+1-i} $$
and, from Lemma~\ref{lem:estimateFib}, 
$$\frac{|b_{i}(N)|}{\beta^i}\le 6 \left(\frac{2}{\beta}\right)^i.$$
Secondly, we have
$$\frac{|b_k(N) B(1) - c\, \beta\, b_k(\relpos_F(N))|}{\beta^{k+1}}
\le \frac{9 \cdot 2^{k+1} + c \, \beta\, 3 \cdot 2^{k+1}}{\beta^{k+1}} < 3\, (3+c\beta ) \left(\frac{2}{\beta}\right)^{k+1}$$
and
$$
\frac{|b_{k+1}(N) - c \, b_{k+1}(\relpos_F(N))|}{\beta^{k+1}} \le \frac{3\cdot 2^{k+2} +  c \, 3\cdot 2^{k+2}}{\beta^{k+1}} < 3\,(1+c)\, \beta \, \left(\frac{2}{\beta}\right)^{k+2}   
$$
again using Lemma~\ref{lem:estimateFib}. 
Hence
$$\frac{|R(N)|}{\beta^{k+1}}\le 
\frac{12|c_2|\, |\beta_2|^{2}}{\beta^{2}} \sum_{i=0}^{k-1} \left(\frac{2}{\beta}\right)^i\ \left(\frac{|\beta_2|}{\beta}\right)^{k-1-i} 
+ 3\,(3+c\beta ) \left(\frac{2}{\beta}\right)^{k+1}
+ 3\, (1+c)\, \beta \, \left(\frac{2}{\beta}\right)^{k+2}. 
$$
Since $\sum_{i=0}^{k-1} a^i b^{k-1-i}=(a^{k}-b^{k})/(a-b)$, we deduce that
$$
\frac{|R(N)|}{\beta^{k+1}}\le \frac{12|c_2|\, |\beta_2|^2}{\beta (2-|\beta_2|)} \,  \left(\left(\frac{2}{\beta}\right)^k -\ \left(\frac{|\beta_2|}{\beta}\right)^{k}\right)
+3\, (3+c\beta ) \left(\frac{2}{\beta}\right)^{k+1}
+ 3\, (1+c)\, \beta \, \left(\frac{2}{\beta}\right)^{k+2}, 
$$
which tends to zero when $k$ tends to infinity since $\beta > \beta_2$ and $\beta > 2$. 
This implies that $R(N)=o(\beta^{k+2})$.

$\bullet$ We show that $\Psi(0)=1$. 
By definition, $w_n(0)$ is the prefix of length $n$ of the infinite word $10\rep_\varphi(0)=10^\omega$ and $e_n(0)$ is thus equal to $F(n-1)$. 
By definition of $\Psi$ and using \eqref{eq:BdecompFib}, we have
$$
\Psi(0) = \lim_{n\to +\infty} \frac{B(n-1) - B(1) + (n+4) \, B(0)}{c\, \beta^{n-1}} =1
$$
since $\relpos_F(e_n(0))=0$.

$\bullet$ 
To show that 
$$
\lim_{\alpha\to 1^-} \Psi(\alpha)= 1,
$$
we make use of the uniform convergence and consider 
\begin{eqnarray*} 
\lim_{\alpha\to 1^-} \lim_{n\to +\infty} 
\frac{1}{\beta^{\relpos_F(e_n(\alpha))}}
\frac{A_F(e_n(\alpha))}{c\, \beta^{n-1}} 
&=& 
\lim_{\alpha\to 1^-} \frac{1}{\beta^{\alpha}} 
\lim_{n\to +\infty} 
\frac{A_F(e_n(\alpha))}{c\, \beta^{n-1}} 
\\
&=&
\frac{1}{\beta}
\lim_{\alpha\to 1^-} \lim_{n\to +\infty} 
\frac{A_F(e_n(\alpha))}{c\, \beta^{n-1}}
\\
&=&
\frac{1}{\beta}
\lim_{n\to +\infty} \lim_{\alpha\to 1^-} 
\frac{A_F(e_n(\alpha))}{c\, \beta^{n-1}}. 
\end{eqnarray*}For any fixed integer $n \geq 3$, we can chose $\alpha$ close enough to 1 such that 
\[
	\rep_\varphi(\alpha) \in (10)^n \{0,1\}^\omega.
\]
Using~\eqref{eq:Bded de 10101010}, we have 
\begin{eqnarray*}
	A_F(e_n(\alpha)) 
	&=& A_F(F(n)-1) \\ 
	&=& 
		\sum_{i=0}^{n-3} g_i B(n-1-i) + b_{n-2}(e_n(\alpha)) B(1) + b_{n-1}(e_n(\alpha)) B(0). 
\end{eqnarray*}
Due to Lemma~\ref{lem:Bdecomp_2ex} and Lemma~\ref{lem:estimateFib}, we have $|g_k| \leq 2 (\sqrt{2})^k$ for all $k$ and both $b_{n-1}(e_n(\alpha))$ and $b_{n-2}(e_n(\alpha))$ are smaller than $3\cdot 2^{n}$. 
Hence we have 
\[
	\lim_{n\to +\infty} \frac{b_{n-2}(e_n(\alpha))B(1)+b_{n-1}(e_n(\alpha))B(0)}{c\beta^{n-1}} = 0.
\]
Our aim is thus to show that 
\[
\lim_{n\to +\infty} \sum_{i=0}^{n-3} \frac{g_i}{\beta^i} \frac{B(n-1-i)}{c \beta^{n-1-i}} = \beta.
\]
By Lemma~\ref{lem:Bdecomp_2ex}, we have
\begin{eqnarray*}
\left|
\sum_{i=0}^{n-3} \frac{g_i}{\beta^i} \frac{B(n-1-i)}{c \beta^{n-1-i}} - \beta
\right|
&\leq&
\left|
\sum_{i=0}^{n-3} \frac{g_i}{\beta^i} \left(\frac{B(n-1-i)}{c \beta^{n-1-i}}-1\right)
\right|
+ 
\left|
\sum_{i=n-2}^{+\infty} \frac{g_i}{\beta^i}
\right|.
\end{eqnarray*}
Again by Lemma~\ref{lem:Bdecomp_2ex}, 
$\left|
\sum_{i=n-2}^{+\infty} \frac{g_i}{\beta^i}
\right|$
goes to 0 as $n$ goes to infinity.
Using~\eqref{eq:recB}, we have 
\[
\left|\frac{B(n-1-i)}{c \beta^{n-1-i}}-1\right|
\leq 
\frac{2|c_2||\beta_2|^{n-1-i}}{c\beta^{n-1-i}}
\]
and thus
\begin{eqnarray}\label{eq:thelastone}
\left|
\sum_{i=0}^{n-3} \frac{g_i}{\beta^i} \left( \frac{B(n-1-i)}{c \beta^{n-1-i}} - 1 \right)
\right|
& \leq &  
\frac{4|c_2||\beta_2|^{n-1}}{c \beta^{n-1}} 
\sum_{i=0}^{n-3} \left(\frac{\sqrt{2}}{|\beta_2|}\right)^i
\\
& \leq & 
\frac{4|c_2||\beta_2|^2}{c (\sqrt{2}-|\beta_2|)} \frac{(\sqrt{2})^{n-2}-|\beta_2|^{n-2}}{\beta^{n-1}},\nonumber
\end{eqnarray}
which also tends to $0$ as $n$ tends to infinity.
This shows that $\lim_{\alpha\to 1^-}\Psi(\alpha) = 1$.

$\bullet$ To finish the proof, let us show that $\Psi$ is continuous.
Let $\alpha \in [0,1)$ and let us write $\rep_\varphi(\alpha) = (d_n)_{n \geq 1}$. 
We make use of the uniform convergence of the sequence $(\psi_n)_{n \in \mathbb{N}}$ and consider
\begin{eqnarray*}
	\lim_{\gamma \to \alpha} 
	|\Psi(\gamma) - \Psi(\alpha)|
	&=& \lim_{\gamma \to \alpha} \lim_{n \to +\infty} 
	|\psi_n(\gamma)-\psi_n(\alpha)|	\\
	&=& \lim_{n \to +\infty} \lim_{\gamma \to \alpha}
	|\psi_n(\gamma)-\psi_n(\alpha)|
\end{eqnarray*}
First assume that $\alpha$ is not of the form $\sum_{i=1}^k r_i/\varphi^i$ , i.e., $(d_n)_{n \geq 1}$ does not belong to $\{0,1\}^*10^\omega$.
For any fixed integer $n$, we can chose $\gamma$ close enough to $\alpha$ such that $\rep_\varphi(\gamma) \in d_1 d_2 \cdots d_n \{0,1\}^\omega$.
Therefore, we have $e_n(\gamma) = e_n(\alpha)$, hence $\psi_n(\gamma) = \psi_n(\alpha)$.

Now assume that $\rep_\varphi(\alpha) = d_1d_2 \cdots d_k 0^\omega$ with $d_k = 1$.
For any fixed integer $n > k+1$, we can chose $\gamma$ close enough to $\alpha$ such that
\[
\begin{array}{ll}
\rep_\varphi(\gamma)  \in  d_1d_2 \cdots d_k 0^n \{0,1\}^\omega,
& \text{if }  \gamma \geq \alpha; \\
\rep_\varphi(\gamma)  \in  d_1d_2 \cdots d_{k-1} (01)^n \{0,1\}^\omega,
& \text{if } \gamma < \alpha.

\end{array}
\]
If $\gamma \geq \alpha$, we get $\psi_n(\gamma) = \psi_n(\alpha)$ as in the first case.
If $\gamma < \alpha$, we get 
\begin{eqnarray*}
	e_n(\alpha) &=& \val_F(10d_1d_2\cdots d_k 0^{n-k-2}), \\
	e_n(\gamma) &=& \val_F(10d_1d_2\cdots d_{k-1} (01)^{\frac{n-k-1}{2}}),
\end{eqnarray*}
where fractional powers of words are classically defined by, for $w =w_1 w_2 \cdots w_{|w|}$, $w^{p/|w|} = w^\ell w_1 w_2\cdots w_q$ if $p/|w| = \ell+q/|w|$ with $0 \leq q < |w|$. 

In this case we get $e_n(\alpha) = e_n(\gamma)+1$ and
\begin{eqnarray*}
	|\psi_n(\alpha) - \psi_n(\gamma)| 
	&\leq &
	\left| \frac{A_F(e_n(\alpha))}{c \beta^{n-1}} 
	\left( \frac{1}{\beta^{\relpos_F(e_n(\alpha))}} - \frac{1}{\beta^{\relpos_F(e_n(\gamma))}}
	\right) 
	\right| \\
	&+&
	\left|
	\frac{1}{c\beta^{\log_F(e_n(\gamma))}} 
	\left( A_F(e_n(\alpha)) - A_F(e_n(\gamma)) \right)
	\right|.
\end{eqnarray*}
Using~\eqref{eq:diff relp e_n alpha}, we have $|\relpos_F(e_n(\alpha)) - \relpos_F(e_n(\gamma))| < 3/\varphi^{n-2}$.
As $A_F(e_n(\alpha)) / (c \beta^{n-1})$ converges to some real number $x$ when $n$ goes to infinity, the first term tends to zero when $n$ increases.
The second term also tends to zero as, by Corollary~\ref{cor:majsF}, $A_F(e_n(\alpha)) - A_F(e_n(\gamma)) = s_F(e_n(\alpha)) \leq 2^n$. 
This shows that $\Phi$ is continuous.
\end{proof}

\section{Possible extensions to other numeration systems}

One can wonder whether the method presented in this paper can be applied to classical digital sequences. 
Consider the example of the sum-of-digits function $s_2$ for base-$2$ expansions of integers mentioned in the introduction. 
Its summatory function $(A_2(n))_{n \geq 0}$ satisfies 
\[
	A_2(2^\ell+r) = U(\ell) + A_2(r) + r \cdot U(1)
\]
for all $\ell \geq 0$, where $U(0) = 0$, $U(1) = 1$, and the sequence  $(U(n))_{n \geq 0}$ satisfies the linear recurrence relation 
\[
	U(n+2) = 4\, U(n+1) - 4\, U(n) \quad \forall n\ge 0.
\]
Numerical experiments suggest that our method gives the same result as~\eqref{eq:delange}.

Other examples can be considered with sequences defined analogously to $(\mathsf{s}(n))_{n \geq 0}$ and $(\mathsf{s}_F(n))_{n \geq 0}$, i.e., to consider sequences associated with binomial coefficients of representations of integers in some numeration systems.
The main problem is that we do not have a statement similar to Proposition~\ref{pro:rec} or Proposition~\ref{pro:recF}. Nevertheless, we proceeded to some computer experiments. 
For an integer base $k\ge 3$, let $A_k(n)$ denote the analogue of $A(n)$ when we consider words and subwords in the language $\{1,\ldots,k-1\}\{0,\ldots,k-1\}^*\cup\{\varepsilon\}$. 
In that case, we conjecture that 
$$A_k(kn)=(2k-1)A_k(n), \quad \forall n\ge 1$$
and there exists a continuous and periodic function $\mathcal{H}_k$ of period 1 such that 
$$
A_k(N)=(2k-1)^{\log_k(N)}\mathcal{H}_k(\log_k(N)).
$$
The graphs of $\mathcal{H}_3,\mathcal{H}_4,\mathcal{H}_5,\mathcal{H}_7$ have been depicted in Figure~\ref{fig:conj} on the interval $[0,1)$. 
Such a result is in the line of~\eqref{eq:delange}.
\begin{figure}[h!t]
    \centering
    {\scalebox{.3}{\includegraphics{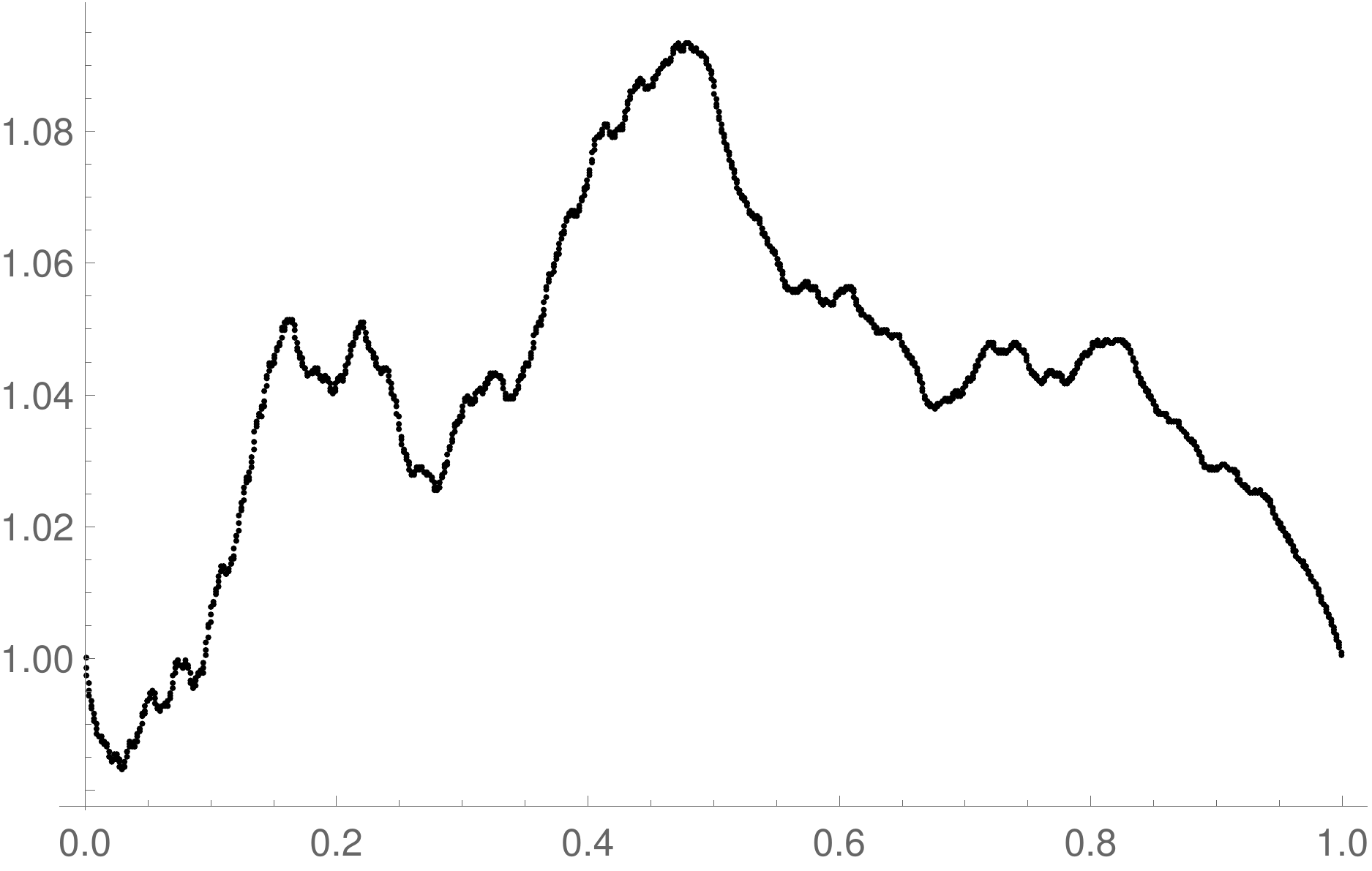}}}\quad {\scalebox{.3}{\includegraphics{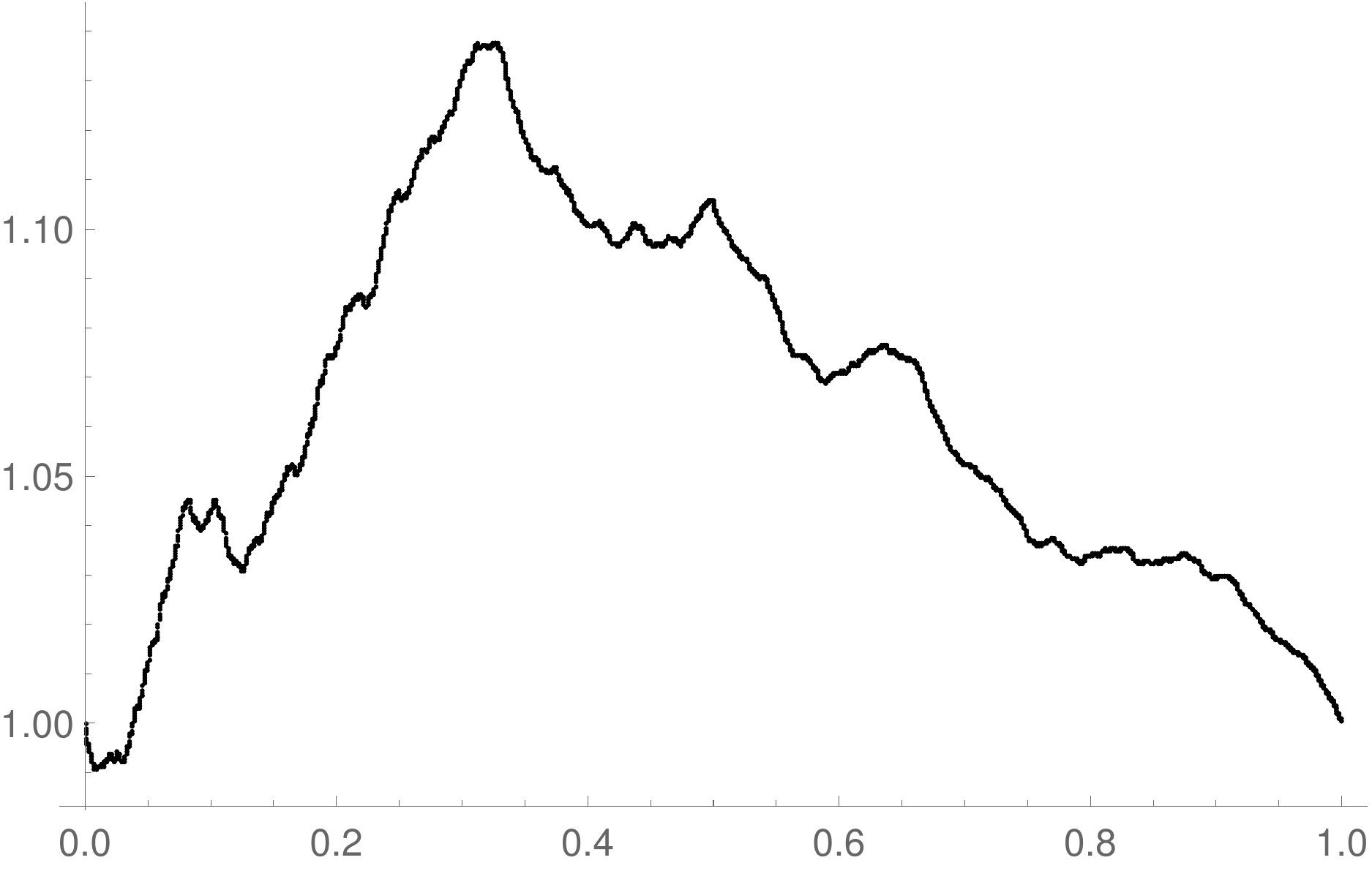}}}\\
{\scalebox{.3}{\includegraphics{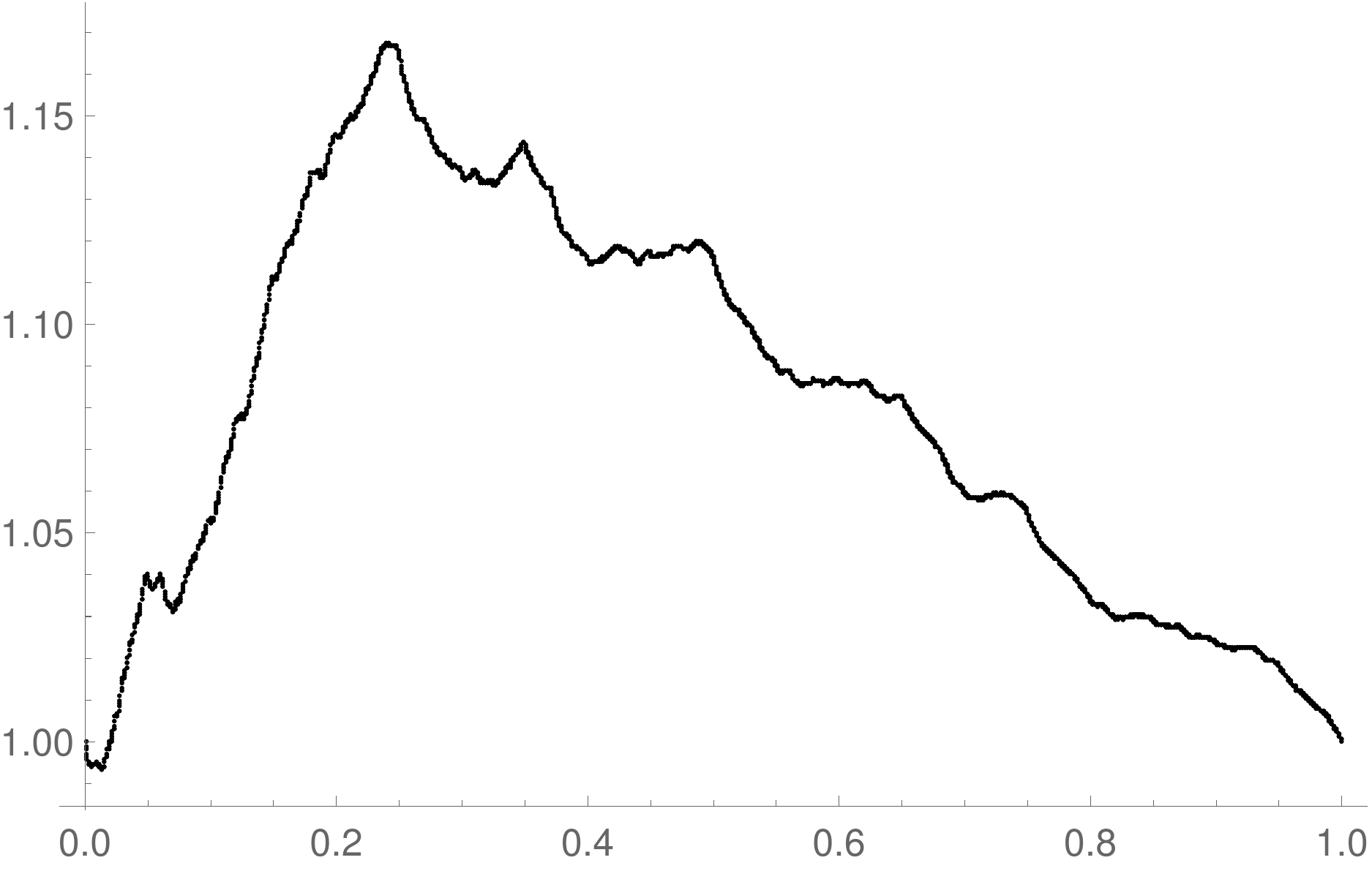}}}\quad {\scalebox{.3}{\includegraphics{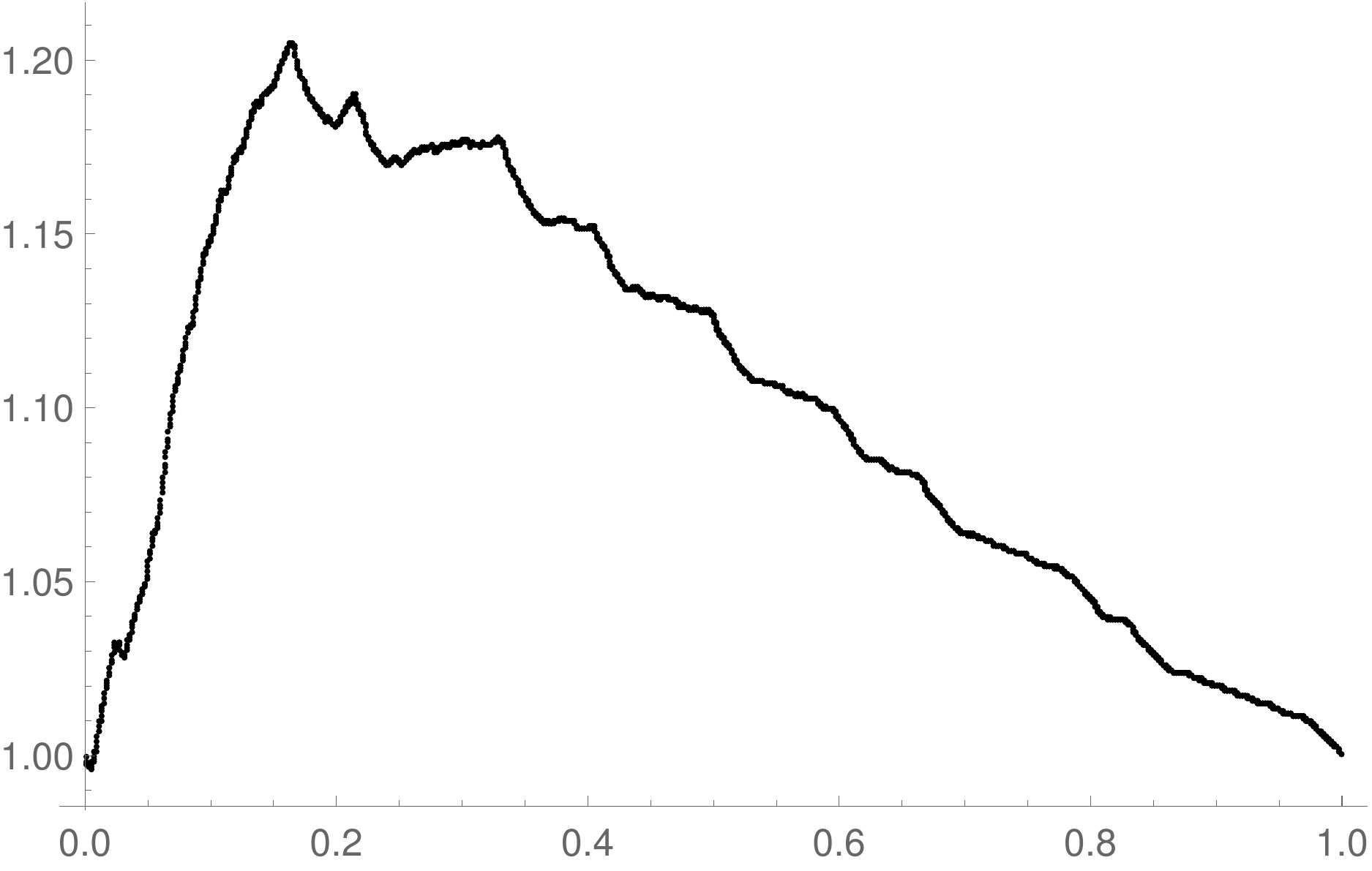}}}
    \caption{The conjectured functions $\mathcal{H}_3,\mathcal{H}_4,\mathcal{H}_5,\mathcal{H}_7$.}
    \label{fig:conj}
\end{figure}

If we leave the $k$-regular setting and try to replace the Fibonacci sequence with another linear recurrent sequence, the situation seems to be more intricate. 
For the Tribonacci numeration system $T=(T(n))_{n\ge 0}$ built on the language of words over $\{0,1\}$ avoiding three consecutive ones, we conjecture that a result similar to Theorem~\ref{thm:AsymptoFib} should hold for the corresponding summatory function $A_T$. 
Computing the first values of $A_T(T(n))$, the sequence $(B(n))_{n\ge 0}$ should be replaced with the sequence $(V(n))_{n\ge 0}$ satisfying 
$$V(n+5)=3V(n+4)-V(n+3)+V(n+2)-2V(n+1)+2V(n) \quad \forall n\ge 0$$
and with initial conditions $1,3,9,23,63$. 
The dominant root $\beta_T$ of the characteristic polynomial of the recurrence is close to $2.703$.
\begin{figure}[h!t]
    \centering
    {\scalebox{.3}{\includegraphics{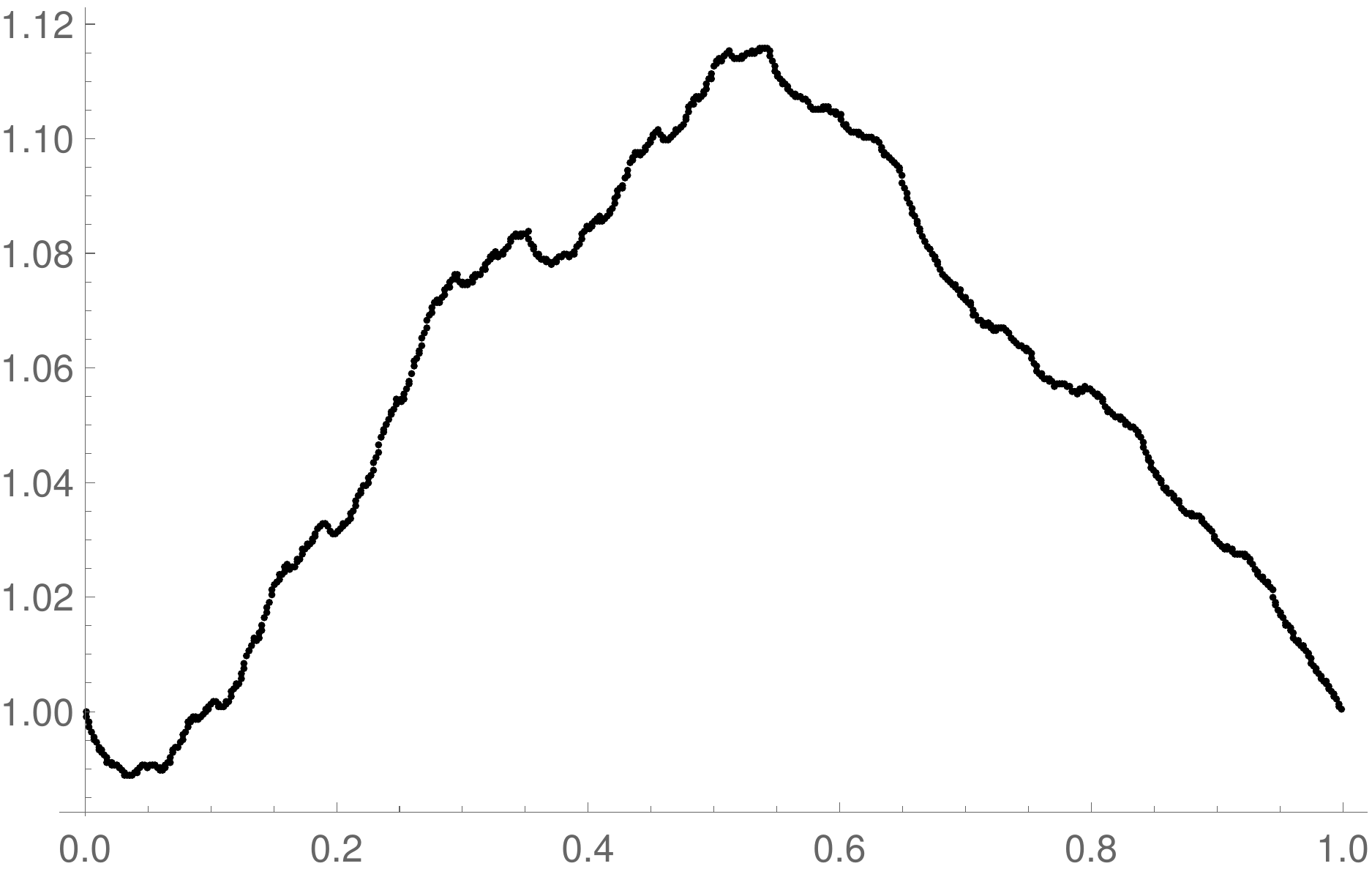}}}\quad {\scalebox{.3}{\includegraphics{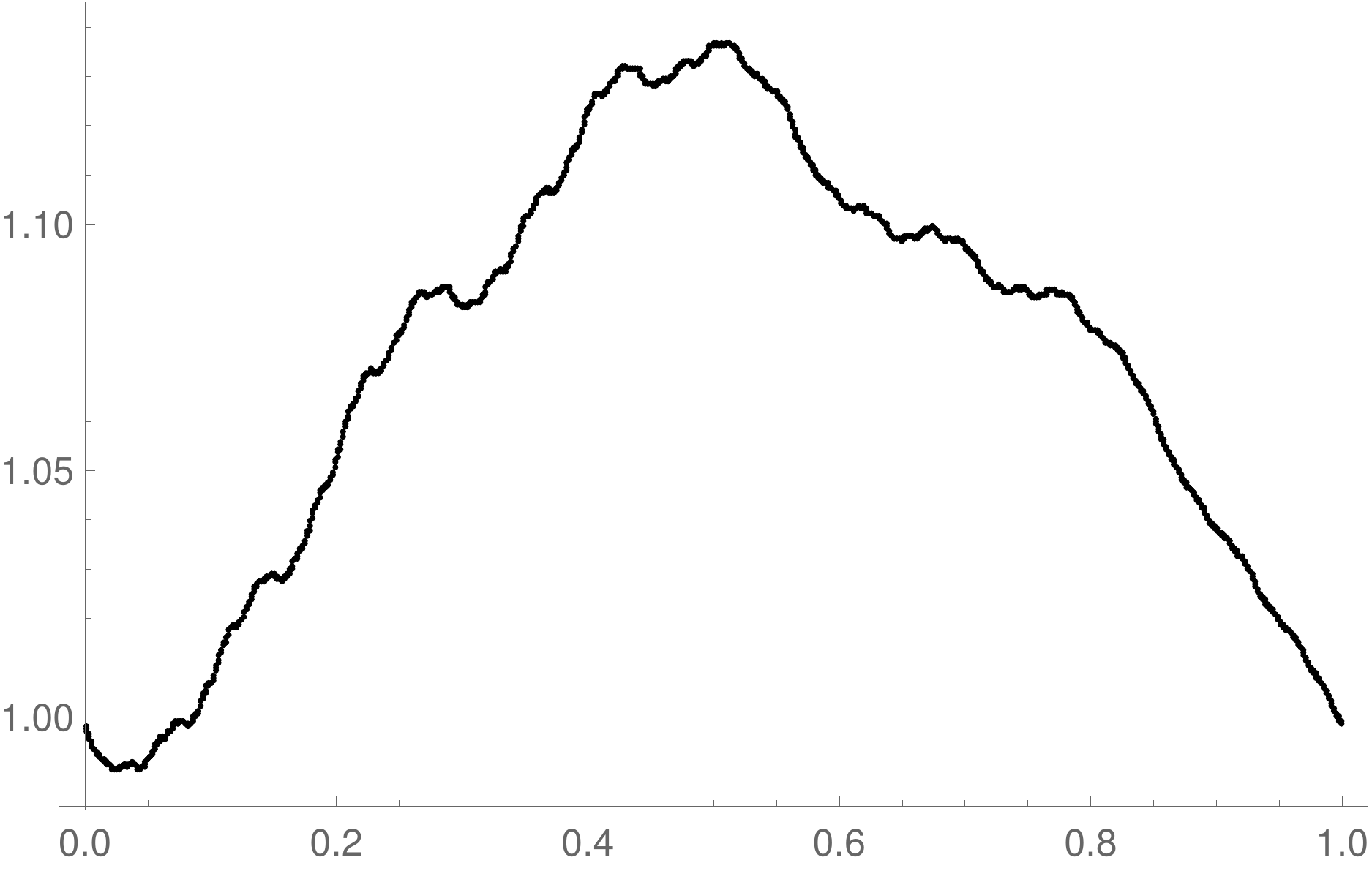}}}
    \caption{The conjectured functions $\mathcal{G}_T$ and $\mathcal{G}_Q$.}
    \label{fig:conj2}
\end{figure}
There should exist a continuous and periodic function $\mathcal{G}_T$ of period 1 whose graph is depicted in Figure~\ref{fig:conj2} such that the corresponding summatory function has a main term in $c_T\, \beta_T^{\log_T(N)} \mathcal{G}_T(\log_T(N))$ where the definition of $\log_T$ is straightforward. 
We are also able to handle the same computations with the Quadribonacci numeration system where the factor $1^4$ is avoided.
In that case, the analogue of the sequence $(B(n))_{n\ge 0}$ should be a linear recurrent sequence of order $6$ whose characteristic polynomial is $X^7-4 X^6+4 X^5-2 X^4-X^3+3 X^2-6 X+2$. 
Again, we conjecture a similar behavior with a function $\mathcal{G}_Q$ depicted in Figure~\ref{fig:conj2}. 
Probably, the same type of result can be expected for Pisot numeration systems (i.e., linear recurrences whose characteristic polynomial is the minimal polynomial of a Pisot number).

\subsection*{Acknowledgements}
We thank the anonymous referee for the feedback and suggestions to improve the presentation of the paper.

\end{document}